\long\def\comment#1{}
\newfont{\bbb}{msbm10 scaled 800}
\newfont{\bb}{msbm10 scaled 1100}
\newcommand{\CC}{\mbox{\bb C}}
\newcommand{\PP}{\mbox{\bb P}}
\newcommand{\RR}{\mbox{\bb R}}
\newcommand{\ZZ}{\mbox{\bb Z}}
\newcommand{\cv}{{\bf c}}
\newcommand{\qv}{{\bf q}}
\newcommand{\sv}{{\bf s}}
\newcommand{\uv}{{\bf u}}
\newcommand{\vv}{{\bf v}}
\newcommand{\xv}{{\bf x}}
\newcommand{\yv}{{\bf y}}
\newcommand{\zv}{{\bf z}}
\newcommand{\Am}{{\bf A}}
\newcommand{\Fm}{{\bf F}}
\newcommand{\Gm}{{\bf G}}
\newcommand{\Hm}{{\bf H}}
\newcommand{\Pm}{{\bf P}}
\newcommand{\Qm}{{\bf Q}}
\newcommand{\Um}{{\bf U}}
\newcommand{\Vm}{{\bf V}}
\newcommand{\Ec}{{\cal E}}
\newtheorem{thm}{Theorem}
\newtheorem{lemma}{Lemma}
\newtheorem{defn}{Definition}
\newtheorem{remark}{\indent \bf Remark}
\begin{document}

\title{Cellular Interference Alignment}

\author{Vasilis~Ntranos$^\dagger$\thanks{This work is the outcome of a collaboration that started while V. Ntranos was a research intern at Bell Labs, Alcatel-Lucent. Emails: ntranos@usc.edu,
mohammadali.maddah-ali@alcatel-lucent.com, caire@usc.edu.}, 
        Mohammad~Ali~Maddah-Ali$^\ast$,  and
        Giuseppe~Caire$^\dagger$ \\
        $^\dagger$University of Southern California, Los Angeles, CA,  USA\\ 
				$^\ast$Bell Labs, Alcatel-Lucent, Holmdel, NJ, USA }
        
\maketitle


\begin{abstract}
Interference alignment promises that, in Gaussian interference channels, 
each link can support half of a degree of freedom (DoF) per pair of transmit-receive antennas. 
However,  in general, this result requires to precode the data bearing signals over a signal space of asymptotically large diversity, e.g., 
over an infinite number of dimensions 
for time-frequency varying fading channels, 
or over an infinite number of rationally independent signal levels, in the case of time-frequency invariant channels. 
In this work we consider a wireless cellular system scenario where the promised optimal DoFs are achieved with linear precoding in 
one-shot (i.e., over a single time-frequency slot). 
We focus on the uplink of a symmetric cellular system, where each cell is split into three sectors with orthogonal intra-sector multiple access. 
In our model, interference is ``local'', i.e., it is due to transmitters in neighboring cells only. We consider a message-passing backhaul network architecture,  in which nearby sectors can exchange already decoded messages and propose an alignment solution that can achieve the optimal DoFs.
To avoid  signaling schemes relying on the strength of interference, we further introduce the notion of \emph{topologically robust} schemes, 
which are able to guarantee a minimum rate (or DoFs) irrespectively of the strength of the interfering links. Towards this end, we design an alignment scheme  
which is topologically robust  and still achieves the same optimum DoFs.
\end{abstract}

\begin{IEEEkeywords}
\center Interference Alignment, Cellular Systems, Interference Cancellation.
\end{IEEEkeywords}

\section{Introduction} \label{intro}

Interference is the dominant limiting factor in the performance of today's wireless networks. 
Recent theoretical results \cite{cj08,mgmk09,ergodic} have shown that transmission schemes based on {\it interference alignment} 
\cite{mmk08,cj08} are able to provide half of the Degrees of Freedom (DoFs) of the interference-free rates\footnote{
In our context, degrees of freedom are defined in Section \ref{sec:probstate}.}
to each user in the network.  
While these results promise significant gains compared to conventional interference mitigation techniques, the extent to which such gains 
can be realized in practice has been so far limited. 

Most  interference alignment schemes are either restricted to networks with a small number of users (typically three transmit-receive pairs) \cite{mmk08,sht11,katabi09,sy13isit}, or rely  infinite channel diversity/resolution (e.g., through symbol extensions) for the more general cases  \cite{mmk08,cj08,mgmk09,sy13}.  In fact, it has been shown that without symbol extensions, the DoF gain of any linear interference alignment scheme  in a fully-connected network vanishes as the number of users increases~\cite{ygjk10, Razaviyayn, Bresler}. 
On the other hand, splitting the network into smaller sub-networks does not seem to be the solution either, as the remaining interference  between sub-networks can eliminate the potential gain of 
 interference alignment.  


Another class of interference management techniques for wireless systems relies on utilizing the backhaul connections  in  order to enable cooperation between base stations. 
For the uplink, it is assumed that  all base stations can share their received signal samples over the backhaul of the network and then jointly decode the corresponding user messages. Similarly, for the downlink, it is assumed that all user messages can be shared across the entire network, so that base stations can cooperatively transmit the messages to the corresponding users and manage the interference. This technique, often referred to as "Network MIMO" in the literature \cite{kfv06,fkv06,multicell10,htc12}, effectively reduces the system to a (network-wide) multiple-antenna multiaccess channel for the uplink, or a multiple-antenna broadcast channel \cite{cs03,wss06} for the downlink. In an effort to reduce the significant backhaul load requirements of the above technique, limited base station collaboration has also been considered for the downlink \cite{zcagh09, sw11, UIUC11, UIUC12, UIUC14} where user message information is locally shared within smaller clusters of the network,   and for the uplink \cite{lsw07, venkat07, ssps09, llsw12} where local receiver collaboration is enabled by sharing sampled (or quantized) received signals under backhaul connectivity (or capacity) constraints.  

%
%
%
%
%
%
%
%
%
%
%
%
%
%
%
%
%
Here we propose a framework that can take advantage of the  {\em partial connectivity} of extended\footnote{Following \cite{extendedkumar04,extendedtelatar05,extendedtse10} 
we refer to an ``extended'' network as a network with a fixed spatial density of cells and increasing total coverage area, in 
contrast to a ``dense'' network where the total coverage area is fixed and the cell density increases.}
cellular networks and  
provide insights and guidelines for the design of the next generation advanced inter-cell interference management in wireless systems. Within our framework, we are interested in the design of interference alignment schemes for cellular networks 
with the following  three basic principles in mind.

\begin{itemize}
\item{\it Scalability:} The overall performance of the scheme should materialize irrespectively of the size of the cellular network, i.e., when the number of 
number of transmit-receive pairs becomes arbitrarily large. 

\vspace{0.05in}
 
\item{\it Locality:} The transmission scheme should  operate under local information exchange, and
exploit the distributed nature of the cellular network.  

\vspace{0.1in}

\item{\it Spectral Efficiency:} The scheme should aim for high spectral-efficiency by allowing more (interference-free) parallel transmissions 
to take place within the same spectrum. 
\end{itemize}
\vspace{0.05in}

In this paper we focus on the uplink of a sectored cellular system. Hence, receivers are located at the base station sites.  Motivated by results embraced in practice (see \cite{NICE} for an example),  we assume that if a sector receiver can decode its own user's message, it can share it with its neighboring sector receivers. This can be easily done for
sectors located in the same base station site (co-located) and it can also be done with today's technology and moderate infrastructure effort 
through local backhaul connections to neighboring cells. In particular, we show that this {\it local} and {\it one directional} data exchange --- 
restricted only to decoded messages ---  is enough to reduce the uplink of a sectored cellular network to a topology in which the optimal degrees 
of freedom can be achieved without requiring time-frequency expansion or lattice alignment. Notice that in the proposed architecture we do not require
that the sector receivers share received signal samples and/or perform joint decoding of multiple user messages, 
in contrast with existing works on ``distributed antenna systems'' and the popular and widely studied ``Wyner model'' \cite{wyner94,shamaiwyner97} for cellular systems. 
We emphasize that locally sharing decoded information messages over the backhaul and restricting to single-user decoding 
can be easily implemented within the current technology.

\vspace{0.05in}

In general, in coordinated cell processing strategies, there is always the risk that the signaling scheme 
relies on the strength of interference in order to achieve reliable communication. 
However,  practical systems are not designed to guarantee that strength. 
On the contrary, current system deployment is geared to making interfering links as weak as possible. Hence, a scheme that relies on ``strong interference'' links
would fail if applied to a system which was designed according to the current design guidelines.  In order to address this issue,  we introduce the 
concept of \emph{topological robustness}, where the goal is to design communication schemes that can maintain a minimum rate 
(or degrees of freedom) no matter if the interference links are strong or weak. In particular, we show that such schemes exists in our framework 
and prove their optimality using a compound network formulation.

This paper is organized as follows. First, in Section \ref{problem} we describe the cellular model that we consider in this work and give a formal problem statement. Then, in Section \ref{nointracell}  we state our results for networks with no intra-cell interference and give the corresponding achievability and converse theorems.
In Section \ref{sec:intracell} we extend our model to incorporate both out-of-cell and intra-cell interference and in Section \ref{sec:topo} we focus on the design and optimality of topologically robust transmission schemes. Finally,
we conclude this paper with Section \ref{conclusions}.

\section{Problem Formulation}  \label{problem}

\subsection{Cellular Model}

Consider a large multiple-input multiple-output (MIMO) cellular network with three sectors per cell. 
As in current 4G cellular systems \cite{sesia2009lte}, orthogonal intra-sector multiple access is used in the uplink, such that, 
without loss of generality, we can consider a single user per sector, as shown in Fig.~\ref{cell}. 
Within each sector, the receiver is interested in decoding the uplink message of the user associated with it 
and observes all other transmissions as interference. 
We consider here a symmetric  configuration in which  all transmitters and  receivers in the network are equipped with $M$ 
antennas each and assume frequency-flat channel gains that remain constant throughout the entire 
communication.

 \begin{figure}[t]
        \centering
        \begin{subfigure}[b]{0.5\columnwidth}
                \centering
                \includegraphics[width=\columnwidth]{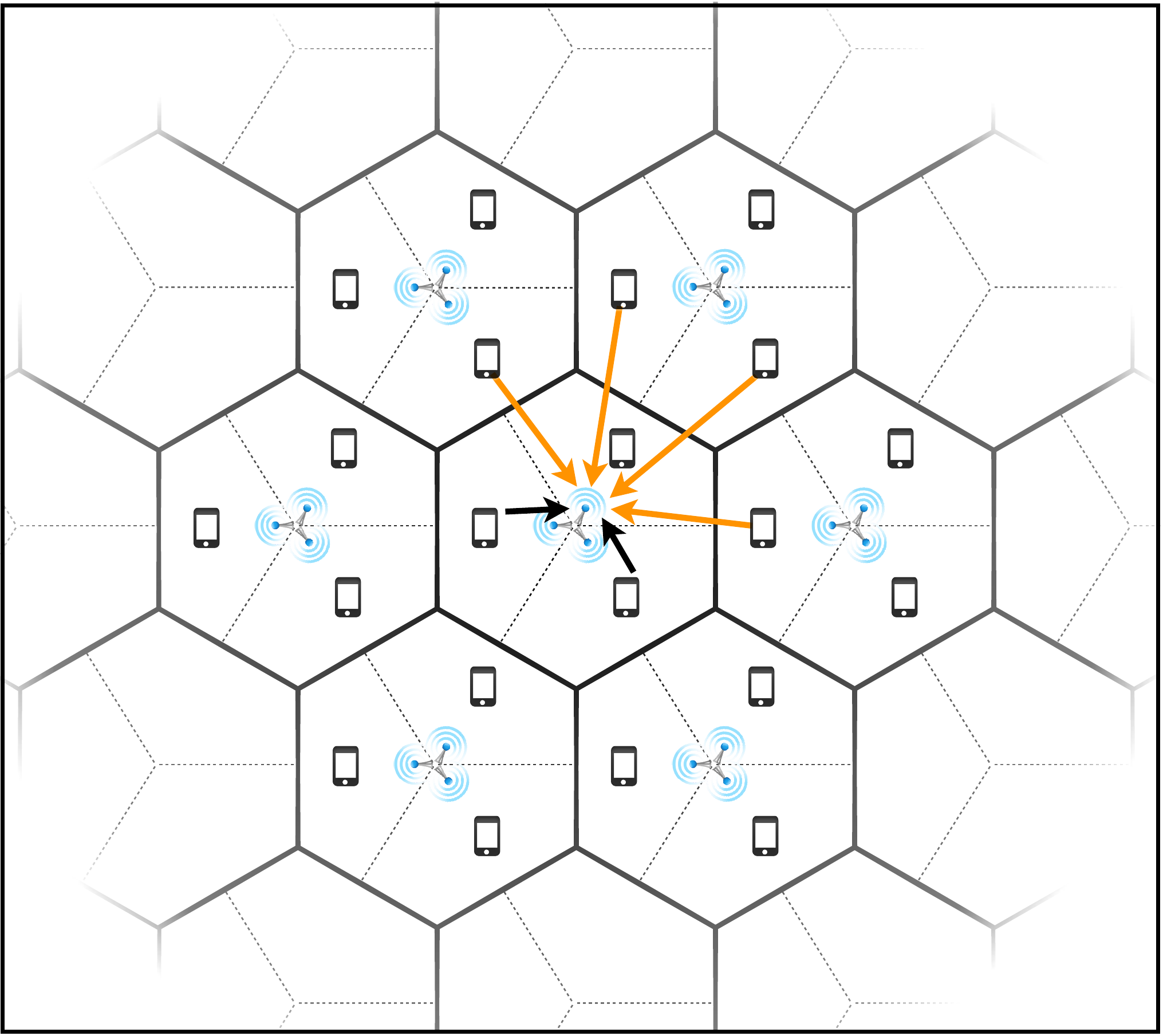}
                \caption{Cellular network}
\label{cell}
        \end{subfigure}%
        ~       \begin{subfigure}[b]{0.5\columnwidth}
                \centering
                \includegraphics[width=\columnwidth]{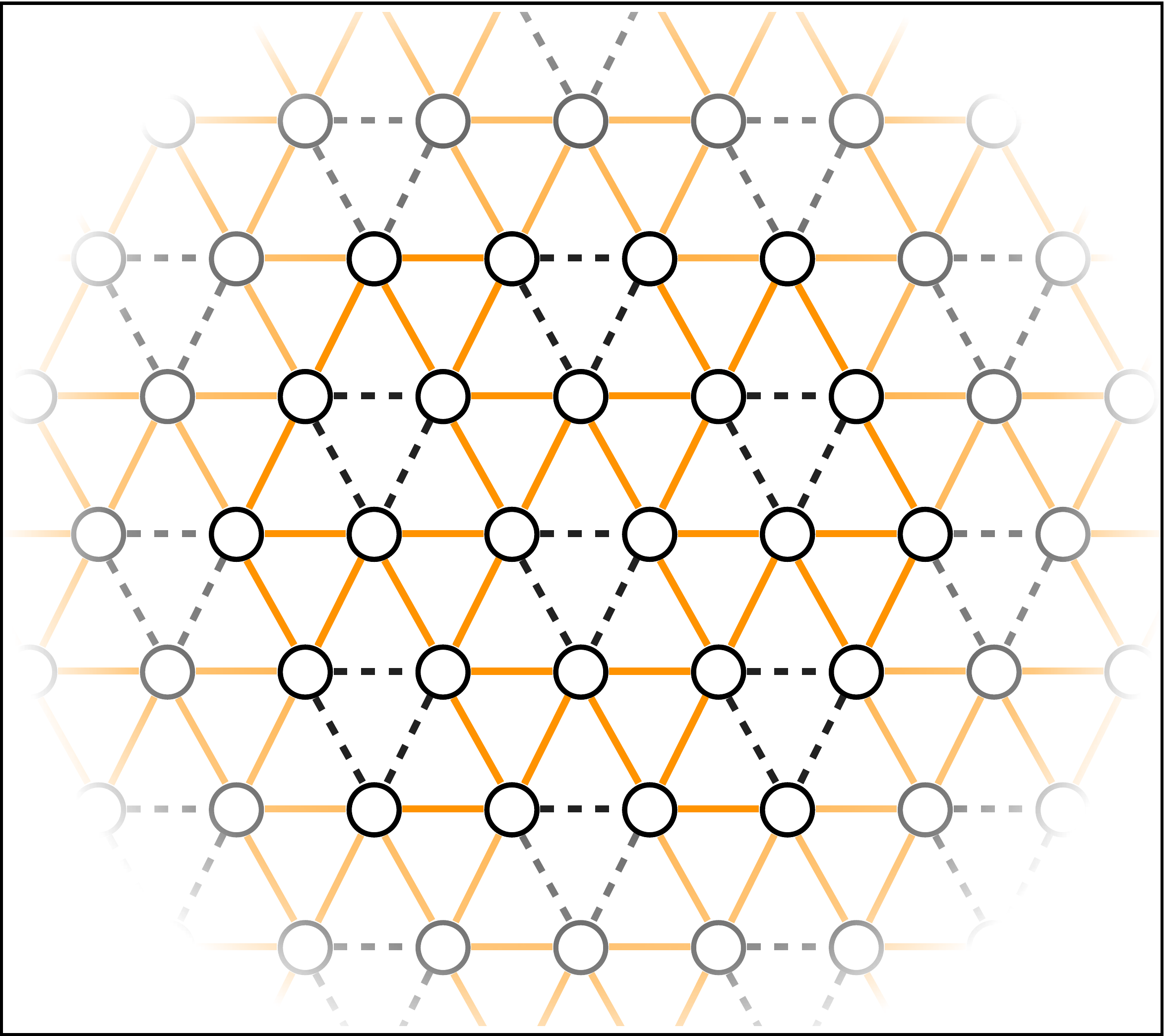}
\caption{Interference graph}
\label{fig-igraph}
        \end{subfigure}
        
       \caption{ The cellular network topology and the corresponding interference graph: we consider the uplink of a MIMO cellular network with $120^{o}$ sector receivers as depicted in Fig.~\ref{cell}. Each receiver is interested in decoding the message of the mobile terminal associated with it and observes all other transmissions as interference. In our cellular model we assume four dominant sources of interference for each sector shown  as orange arrows originating from its closest out-of-cell transmitters. Interference  between the sectors of the same cell is depicted with black arrows. Fig.~\ref{fig-igraph} 
       shows the corresponding interference graph by taking into account all interfering links in a given cellular network, in which vertices represent transmit-receive pairs within sectors and edges indicate interfering neighbors. The dashed black edges in the above graph correspond to interference between sectors of the same cell that we are going to ignore until Section \ref{sec:intracell}.}\label{fig:1}
       \vspace{-0.2in}
\end{figure}

Because of shadowing effects and distance-dependent pathloss, that are inherent to wireless communications \cite{MolischBook},  we assume that the interference seen at each receiver is generated {\it locally}, by transmitters located in neighboring sectors.\footnote{
In practice, the aggregate effect of non-neighboring transmitters contributes to the ``noise floor'' of the system. 
In \cite{gnaj13}, necessary and sufficient 
conditions on the channel gain coefficients of a Gaussian $K$-user interference channels are found such that ``treating interference as noise'' (TIN) is 
approximately optimal in the sense that, subject to these conditions, the TIN-achievable region is within an SNR-independent gap of the capacity region. 
}

Let $\mathcal S$ be the sector index set and let $\mathcal{N}(i)$ denote the set of the interfering neighbors of the $i$th sector. 
The received signals in our model can be written as
\begin{equation}
\yv_{i} = \Hm_{ii}\xv_{i} + \sum_{j\in {\mathcal N}(i)}\Hm_{ij}\xv_{j} + \zv_{i},\,\; i\in {\cal S}
\end{equation}
where $\Hm_{ij}$ is the $M\times M$ matrix of channel gains between the transmitter (user terminal) associated with sector $j$ and the receiver of sector $i$ and  $\xv_{i}$ are the corresponding transmitted signals satisfying the average power constraint $\mathbb{E}\big[||\xv_{i}||^{2}\big]\leq P$.



In this paper, we will consider two  interference models based on the choice of the sets ${\cal N}(i),\,i\in \cal S$. 
In the first part, we will assume that the sectors located  in the same cell do not interfere with each other and focus only on interference generated by  nearby out-of-cell  transmitters.   This assumption can be motivated by taking into account the physical orientation and radiation patterns of the 
antennas  used in sectored cellular systems, where the interference power from users in different sectors of the same cell should be much 
less than  the interference power observed from out-of-cell users located in the sector's line of sight. Then, in Section IV, we are going to lift 
this assumption and consider the case where sector receivers observe both out-of-cell and intra-cell interference.
This extension takes into account the fact that 
users near the sector boundary may produce significant interference
to the neighboring sector in the same cell, due to possibly non-ideal sectored antenna radiation patterns. 


\subsection{Interference Graph}\label{igraph}

A useful representation of our cellular model can be given by the corresponding {\it interference graph} ${\cal G}({\cal V},{\cal E})$ shown in Fig.~\ref{fig-igraph}. In this graph, vertices represent transmit-receive pairs within each sector and  edges indicate interfering neighboring links: 
the transmitter associated with a node $u\in \cal V$ causes interference to all receivers associated with nodes $v\in \cal V$ if there is an 
edge $(u,v)\in \cal E$. Notice that  the interference graph is undirected and hence interference between sectors in our model goes 
in both directions. 


%

More formally, we can define the interference graph $\cal G (\cal V, \cal E)$ as follows. First, we are going to define the set $\cal V$ through a 
one-to-one mapping between the vertices of the graph and a set of complex numbers that we will refer to as node labels. 
The real and imaginary parts of these labels can be interpreted  as the coordinates of the corresponding nodes embedded on the complex 
plane in a way that resembles the specific sector layout of our cellular system. 
A natural choice for this labeling is the set of the Eisenstein integers $\mathbb{Z}(\omega)$ that exhibits the hexagonal lattice 
structure shown in Fig.~\ref{eisen}.

\begin{defn}[Eisenstein integers]
The set of Eisenstein integers, denoted as $\mathbb{Z}(\omega)$, is formed by all complex numbers of the form 
$z=a+b\omega$, where $a,b \in \mathbb{Z}$ and $\omega = \frac{1}{2}(-1 + i\sqrt{3})$. \hfill $\lozenge$
\end{defn}

\begin{figure}[ht]
                \centering
                \includegraphics[width=.6\columnwidth]{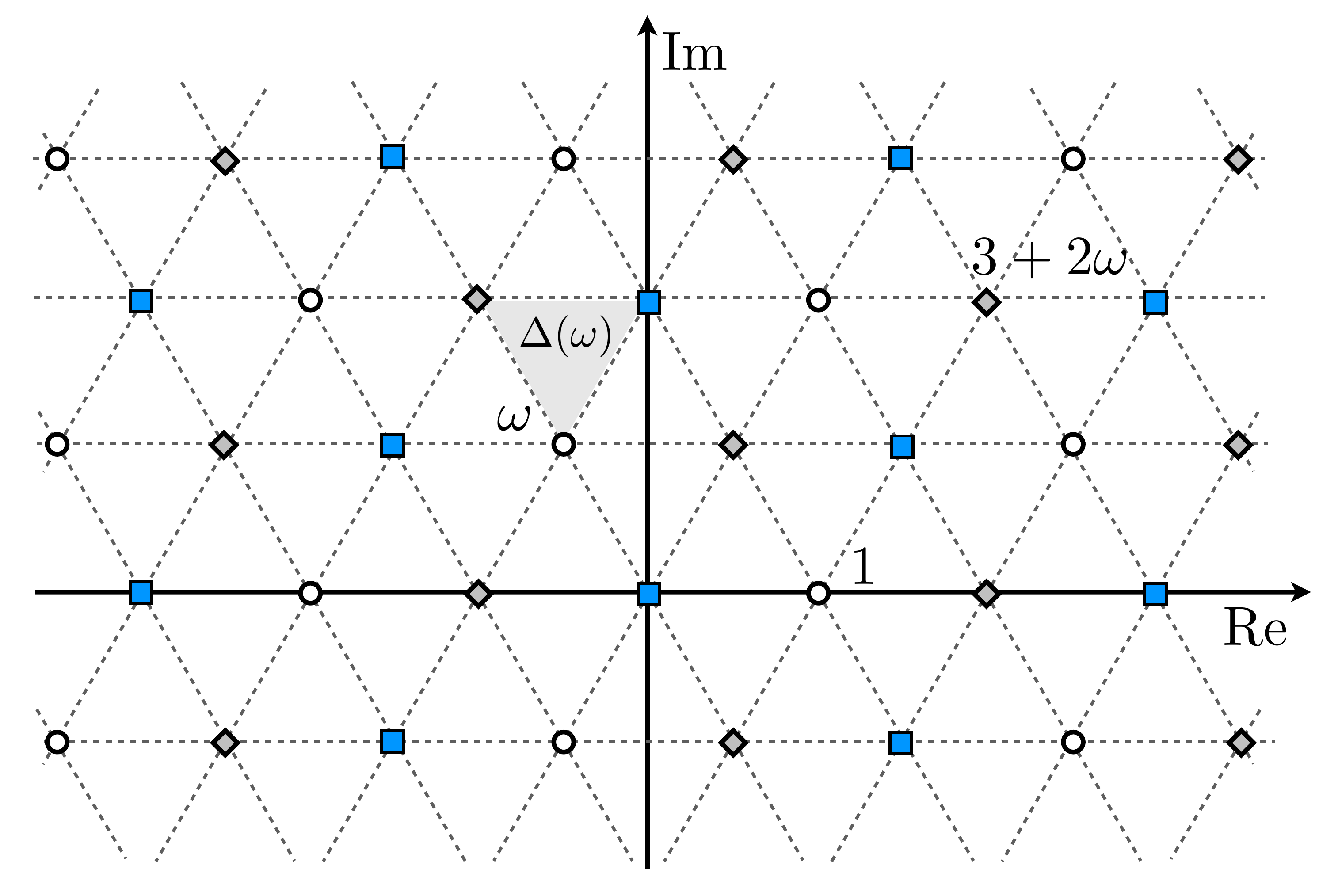}
                \caption{ The Eisenstein integers $Z(\omega)$ on the complex plane. }
                \label{eisen}

\end{figure}

Define ${\cal B}_{r} \triangleq \{z\hspace{-0.05in }\in \hspace{-0.05in }\mathbb{C}\hspace{-0.05in } :\hspace{-0.02in } |{\rm Re}(z)|\leq r , |{\rm Im}(z)|\leq \frac{\sqrt{3}r}{2}\}$
and let $\phi: {\cal V} \rightarrow \mathbb{Z}(\omega)\cap{\cal B}_{r}$ be an one-to-one mapping between the elements of $\cal V$ and the set of bounded Eisenstein integers given by $\mathbb{Z}(\omega)\cap{\cal B}_{r}$. For any $v\in \cal V$ we say that $\phi(v)$ is the label of the corresponding vertex in 
the interference graph. Correspondingly, the set of vertices $\cal V$ is given by
\begin{equation}
{\cal V} = \left\{ \phi^{-1}(z) : z\in \mathbb{Z}(\omega)\cap{\cal B}_{r}\right\}.
\label{eq:V}
\end{equation}
We now explicitly describe the set of edges ${\cal E}$ in the interference graph in terms of the function $\phi$. 
Consider the set of three segments in $\CC$
\begin{equation*}
{\Delta}(z) = \{(z,z+\omega),\,(z,z+\omega+1),\,(z+\omega,z+\omega+1) \}
\end{equation*} 
and define the set
\begin{equation}\label{def:D}
{\cal D}\triangleq \hspace{-0.2in}\underset{\substack{ a,b\in\mathbb{Z}: \\ \left[a+b\right]{\rm mod}\,3 \neq 0}}{\bigcup}\hspace{-0.2in}{\Delta}(a+b\omega)
\end{equation} 
to be the union of ${\Delta}(a+b\omega)$ over all $a,b\in \mathbb{Z}$ such that 
$\left[a+b\right]{\rm mod}\,3 \neq 0$. 
Observe that the segments in $\Delta(z)$ form a triangle with vertices in the Eisenstein integers $z$, $z+\omega$ and $z+\omega+1$, 
as shown in Fig.~\ref{eisen}. The function $f(a+b\omega)\triangleq\left[a+b\right]{\rm mod}\,3$ partitions the hexagonal lattice $\mathbb{Z}(\omega)$ into 
three cosets. In particular, all points $z$ such that $f(z)= 0$ form a sublattice $\Lambda_0$ of $\mathbb{Z}(\omega)$, 
and the points $z$ for which $f(z) = 1$ and $f(z) = 2$ corresponds to its cosets $\Lambda_0 + 1$ and $\Lambda_0 - 1$. 
In Fig.~\ref{eisen}, the points of $\Lambda_0$, $\Lambda_0 + 1$ and $\Lambda_0 - 1$ are shown with squares, circles and diamonds, respectively. 
Without loss of generality, we assume that for all $z\in \Lambda_0$ the segments in $\Delta(z)$ correspond to links  between the three sectors of the 
same cell. Hence, under the assumption that such sectors do not interfere,  we exclude the corresponding 
$\{\Delta(z) : z \in \Lambda_0\}$ in the definition of $\cal D$ in (\ref{def:D}). 
Eventually, the set of edges ${\cal E}$ representing out-of-cell interference is given by
\begin{equation}
{\cal E} = \left\{(u,v) : u,v\in{\cal V} \mbox{ and} \left(\phi(u),\phi(v)\right)\in {\cal D}\right\}.
\label{eq:E}
\end{equation}

\begin{defn}[Interference Graph]
The out-of-cell interference graph $\cal G (\cal V, \cal E)$ is an undirected graph defined by the set of vertices $\cal V$ given in (\ref{eq:V}) and the corresponding set of edges $\cal E$ given in (\ref{eq:E}). The graph vertices represent transmit-receive pairs in our cellular model and edges indicate interfering neighbors.
\hfill $\lozenge$
\end{defn}

\subsection{Network Interference Cancellation}\label{sec:NICE} \vspace{-0.1in}

We further consider a message-passing network architecture for our cellular system, in which  sector receivers  communicate locally in order  to exchange  decoded messages.
Any  receiver that has already decoded its own user's message can use the backhaul of the network and  pass it as side information to one or more of its neighbors. 
In turn, the neighboring sectors can use the received decoded messages in order to  reconstruct the corresponding interfering  
signals and subtract them from their observation. It is important to note that this scheme only requires sharing (decoded) information messages between sector receivers and does not require sharing the baseband signal samples, which is much more demanding for the backbone network.
%

The above operation effectively cancels  interference in one direction: 
all decoded messages propagate through the backhaul of the network, successively eliminating certain interfering 
links between neighboring sectors according to a specified decoding order.  
Fig.~\ref{NICE} illustrates the above network interference cancellation process in our cellular graph model assuming a ``left-to-right, top-down'' 
decoding order.  Notice that edges are now {\it directed} in order to indicate the interference flow over the network. 
For example, if an undirected edge $(u,v)$ exists in $\Ec$ and, under this message-passing architecture, 
node $v$ decodes its message before node $u$ and passes it to node $u$ through the backhaul, 
then the resulting interference graph will contain the directed link $[u,v]$, indicating that the interference is from node (sector)  $u$
node (sector) $v$ only.
\clearpage

\begin{figure}[ht]
                \centering
                \includegraphics[width=0.8\columnwidth]{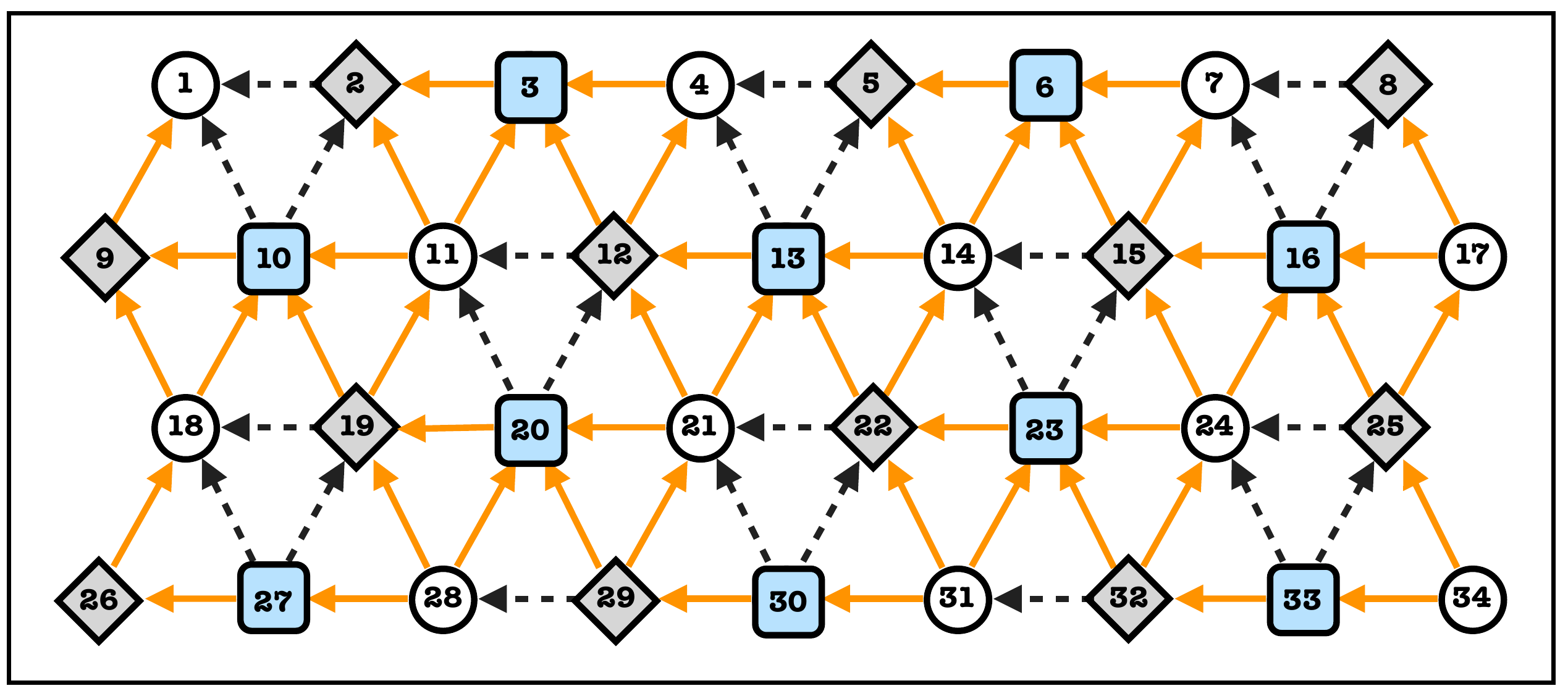}
                \caption{ The directed interference graph $\cal G_{\pi^{*}}(V,E_{\pi^{*}})$ after network interference cancellation according to the ``left-to-right, top-down'' decoding order $\pi^{*}$. The transmitter of a sector associated with node $i$ causes interference only to its neighboring sector receivers $j$ with $j<i$. Orange arrows indicate out-of-cell interference while black dashed arrows correspond to interference from within each cell. }
                \label{NICE}

\end{figure}

A decoding order $\pi$ can be specified by defining a 
partial order ``$\prec_{\pi}$'' over the  set of vertices $\cal V$ in our interference graph.  Then, the message of the user associated with vertex $v \in \cal V$ will be decoded before the one associated with vertex $u \in \cal V$ if $v \prec_{\pi}u$. In principle, we can choose any decoding order that partially orders the set $\cal V$ and hence $\pi$ can be treated as an optimization parameter in our model. 

\begin{defn}
[Directed Interference Graph ${\cal G}_{\pi}$] For a given partial order ``$\prec_{\pi}$'' on $\cal V$, the directed interference graph is defined as ${\cal G}_{\pi}({\cal V}, {\cal E}_{\pi})$ 
where 
${\cal E}_{\pi}$ is a set of ordered pairs $[u,v]$ given by 
$
{\cal E}_{\pi} = \left\{[u,v] : (u,v) \in {\cal E} \mbox{ and }  v\prec_{\pi}u   \right\}
$. \hfill $\lozenge$
\end{defn}

Next, we formally specify the ``left-to-right, top-down'' decoding order $\pi^{*}$ that has been chosen in Fig.~\ref{NICE}. As we will show in the following section,  this decoding order is indeed optimum and can lead to the maximum possible DoF per user in large cellular networks.
\vspace{0.3in}

\begin{defn}[The Decoding Order $\pi^{*}$]
The ``left-to-right, top-down'' decoding order $\pi^{*}$ is defined by the partial ordering $\prec_{\pi^{*}}$ over $\cal V$ such that for any $u, v\in \cal V$, $v\prec_{\pi^{*}}u \Leftrightarrow$
\begin{equation*}
 \begin{cases} {\rm Im}\left(\phi(v)\right) > {\rm Im}\left(\phi(u)\right) \mbox{, or} \\ 
{\rm Im}\left(\phi(v)\right) = {\rm Im}\left(\phi(u)\right) \mbox{and}\;  {\rm Re}\left(\phi(v)\right) < {\rm Re}\left(\phi(u)\right)
\end{cases}
\end{equation*}
\hfill $\lozenge$

\end{defn}

\subsection{Problem Statement}\label{sec:probstate}
\vspace{-0.15in}
Our main goal is to design efficient communication schemes for the cellular model previously introduced.
As a first-order approximation of a scheme's efficiency, we will consider here the achievable DoFs, broadly defined as the number of point-to-point interference-free channels that can be created between transmit-receive pairs in the network.

More specifically, we are going to limit ourselves to linear beamforming strategies over multiple antennas assuming constant (frequency-flat) 
channel gains  without allowing symbol extensions. We refer to such schemes as ``one-shot'', indicating that precoding is achieved over a single
time-frequency slot (symbol-by-symbol).  Our goal it to maximize, over all decoding orders $\pi$, the average (per sector) achievable DoFs
\vspace{-0.1in}
\begin{equation}
d_{{\cal G},{\pi}} \triangleq \frac{1}{|{\cal V}|} \sum_{v\in \cal V}d_{v}\, ,
\end{equation}
where  $\cal G(V,E)$ is the  interference graph defined in Section \ref{igraph} and 
$d_{v}$ denotes the DoFs achieved by the transmit-receive pair associated with the node 
$v\in\cal V$, where 
$$d_{v} = \lim_{P\rightarrow \infty}\frac{R_{v}(P)}{\log(P)},$$ 
and $R_{v}(P)$ is  the achievable rate in sector 
$v\in \cal V$ under the per-user transmit power constraint $P$.

\vspace{-0.1in}
\section{Networks with No Intra-cell Interference}\label{nointracell}
\vspace{-0.1in}
Here we state our main results for the case where there is no interference between the sectors of the same cell. It is worth pointing out that in this section we do not assume any form of collaboration between sector receivers other than the message passing  scheme described in Section~\ref{sec:NICE}. 
The main results of this section are given by the following achievability and converse theorems. The complete proof of these results is provided in 
Appendices \ref{proofthm1} and \ref{proof:thm2}.  
For the sake of clarity and in order to build intuition on both the achievability coding scheme and the converse proof technique, 
 we treat in detail the case of two-antenna terminals ($M = 2$) in Sections \ref{bfscheme} and \ref{sec:converse1}.
\begin{thm} \label{thm1}
For a sectored cellular system $\cal G(V,E)$ in which transmitters and receivers are equipped with $M$ antennas each, there exist a 
one-shot linear beamforming scheme that achieves the average (per sector) DoFs
\begin{equation}d_{{\cal G},{\pi^{*}}} = \begin{cases}\frac{M}{2}, \;\;\;\;\;\;\;\;\;\;\;\mbox{$M$ is even}\\ \frac{M}{2} -\frac{1}{6}, \;\;\;\;\;\mbox{$M$ is odd}\end{cases}
\end{equation}
under the network interference cancellation framework with decoding order $\pi^{*}$.
\hfill \QED
\end{thm}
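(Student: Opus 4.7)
The plan is to establish achievability by an explicit one-shot linear beamforming construction: first settle the base case $M=2$, then build the general case out of parallel copies of it.

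First I would characterize the in-degree structure of the directed interference graph $\mathcal{G}_{\pi^*}$ using the three cosets $\Lambda_0,\Lambda_0+1,\Lambda_0-1$ of $\mathbb{Z}(\omega)$ under $f(a+b\omega)=(a+b)\bmod 3$. For each $v\in\mathcal{V}$ with label $z=\phi(v)$, the four out-of-cell neighbors are determined by $f(z)$ and lie among $\{z\pm 1,\,z\pm\omega,\,z\pm(1+\omega)\}$. Comparing their positions against $\prec_{\pi^*}$ yields: every $v\in\Lambda_0$ retains three incoming interferers after NIC (two from $\Lambda_0+1$ and one from $\Lambda_0-1$), every $v\in\Lambda_0-1$ retains two, and every $v\in\Lambda_0+1$ retains exactly one. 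This per-coset imbalance is precisely what produces the $-\tfrac{1}{6}$ correction when $M$ is odd.

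For the base case $M=2$ I target $d_v=1$ per sector. This requires aligning the three interferers at every $\Lambda_0$ receiver into a common one-dimensional subspace (two scalar conditions) and the two interferers at every $\Lambda_0-1$ receiver into a one-dimensional subspace (one scalar condition); the $\Lambda_0+1$ receivers impose no alignment condition. Each precoder $\mathbf{v}_u\in\mathbb{C}^2$ contributes one projective complex degree of freedom, so per unit cell of three sectors there are exactly $3$ unknowns balanced against $3$ alignment equations. I would turn this balance into an explicit construction by introducing auxiliary common-direction vectors $\mathbf{q}_v\in\mathbb{C}^2$ at each receiver, setting $\mathbf{v}_u\propto\mathbf{H}_{vu}^{-1}\mathbf{q}_v$ for one designated downstream target of $u$, and propagating the remaining alignment constraints along the one-directional flow induced by $\pi^*$. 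Decodability would then follow by picking a combining vector orthogonal to the aligned interference subspace and invoking generic invertibility of the direct channels $\mathbf{H}_{vv}$.

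For general $M$ I would bootstrap from the base case. For even $M=2k$, partition the $M$-antenna signal space into $k$ disjoint two-dimensional blocks and run an independent copy of the $M=2$ scheme inside each; since the $2\times 2$ sub-channels are generically full rank, this yields $d_v=k=M/2$. For odd $M=2k+1$, I would run $k$ parallel $M=2$ copies on $2k$ antennas and, on the remaining single antenna, activate only the transmitters in $\Lambda_0+1$. Since the unique residual interferer of every $\Lambda_0+1$ receiver lies in the silent coset $\Lambda_0-1$, each such sector supports one extra stream, contributing $\tfrac{1}{3}$ of a DoF per sector on average; combined with the even part this gives $k+\tfrac{1}{3}=M/2-\tfrac{1}{6}$, matching the claim.

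The main obstacle is the second step: showing that the $M=2$ alignment system admits a consistent solution on an arbitrarily large network. Although the global counts of equations and unknowns balance, each transmitter in $\Lambda_0+1$ is simultaneously constrained by up to three downstream receivers, so the system could in principle be over-determined through long chains of propagated constraints. The crucial leverage is the one-directional nature of $\pi^*$: the interferers of any receiver are always strictly later in the decoding order, so the joint alignment system carries no oriented cycles and can be arranged in block-triangular form, solvable greedily along $\pi^*$. Making this precise, and verifying that the resulting receive combining vectors are generically non-degenerate with respect to $\mathbf{H}_{vv}$, is the bulk of the technical work I would expect to appear in Section~\ref{bfscheme}.
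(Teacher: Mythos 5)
Your base case ($M=2$) and your reading of the in-degree structure of $\mathcal{G}_{\pi^*}$ match the paper's Section on achievability: the chained conditions $\mathbf{v}_a \propto \mathbf{H}_{ea}^{-1}\mathbf{H}_{eb}\mathbf{v}_b$, $\mathbf{v}_b \propto \mathbf{H}_{ab}^{-1}\mathbf{H}_{ac}\mathbf{v}_c$, $\mathbf{v}_c \propto \mathbf{H}_{ac}^{-1}\mathbf{H}_{ad}\mathbf{v}_d$, solved greedily along the acyclic order, are exactly what the paper does, and your observation that each precoder is pinned by a single constraint resolves the over-determination worry. The lift to general $M$, however, has two genuine gaps.

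First, the reduction of even $M=2k$ to $k$ ``independent copies of the $M=2$ scheme on disjoint two-dimensional antenna blocks'' fails because the channel matrices are not block-diagonal. If copy $i$ aligns interference using only the diagonal $2\times 2$ sub-blocks of the $\mathbf{H}_{uv}$, the actual received interference $\mathbf{H}_{uv}\mathbf{v}_v^{(i)}$ still spreads over all $M$ receive dimensions through the off-diagonal sub-blocks, so the three interferers at a $\Lambda_0$ receiver occupy generically $3M/2 > M$ dimensions and nothing is aligned. The paper avoids this by \emph{not} decomposing: it uses full $M\times \frac{M}{2}$ precoding matrices $\mathbf{V}_v$ and imposes the same chain as subspace conditions, e.g.\ $\mathbf{V}_{v_0} \propto \mathbf{H}_{uv_0}^{-1}\mathbf{H}_{uv_1}\mathbf{V}_{v_1}$, so that $\mathrm{span}(\mathbf{H}_{uv_0}\mathbf{V}_{v_0})=\mathrm{span}(\mathbf{H}_{uv_1}\mathbf{V}_{v_1})$ exactly, collapsing all interference at every receiver into a single $M/2$-dimensional subspace.

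Second, for odd $M$ you give the extra stream to the wrong coset. You check only that the extra stream is decodable at its own receiver (true: a $\Lambda_0+1$ receiver's unique interferer is in $\Lambda_0-1$), but not that it is tolerable at everyone else's. Every $\Lambda_0$ receiver has \emph{two} in-neighbors in $\Lambda_0+1$ (at $\phi(u)+1$ and $\phi(u)-1-\omega$), so it picks up two generically unaligned extra interference dimensions while having only $M-2\cdot\frac{M-1}{2}=1$ spare receive dimension; the zero-forcing conditions become infeasible. The paper's proof makes this explicit: the augmented set $\mathcal{V}_*$ must satisfy $\bigl|\{v\in\mathcal{V}_*: [v,u]\in\mathcal{E}_{\pi^*}\}\bigr|\le 1$ for all $u\notin\mathcal{V}_*$, which holds for the cosets $\Lambda_0$ and $\Lambda_0-1$ but \emph{not} for $\Lambda_0+1$. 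Choosing $\mathcal{V}_*=\Lambda_0$ or $\Lambda_0-1$ (each of density $\ge 1/3$) and letting every other receiver sacrifice one of its $\frac{M+1}{2}$ spare dimensions to null the single extra stream recovers the claimed $\frac{M}{2}-\frac{1}{6}$.
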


\begin{thm}  \label{thm2}
For a sectored cellular system $\cal G(V,E)$ in which transmitters and receivers are equipped with $M$ antennas each and for any network interference 
cancellation decoding order $\pi$, the average (per sector) DoFs $d_{{\cal G},{\pi}}$  that can be achieved by any one-shot 
linear beamforming scheme are bounded by 
$\textstyle
 d_{{\cal G},{\pi^{*}}} + {\cal O}\left( \scriptstyle{1}/{{\sqrt{|{\cal V}|}}}\right)
$, where $d_{{\cal G},{\pi^{*}}}$ is given by Theorem \ref{thm1}.
\hfill \QED
\end{thm}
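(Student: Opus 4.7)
The plan is a per-triangle dimension-counting converse aggregated over the interference graph, with a boundary remainder of size $\mathcal{O}(\sqrt{|\mathcal{V}|})$ yielding the claimed $\mathcal{O}(1/\sqrt{|\mathcal{V}|})$ correction on the average.

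First, I fix any decoding order $\pi$ and any one-shot linear scheme with beamformers $\Vm_v\in\CC^{M\times d_v}$ and receive projectors $\Um_v\in\CC^{M\times d_v}$ satisfying the zero-forcing constraints $\Um_v^{\herm}\Hm_{vu}\Vm_u=\zerov$ for every $[u,v]\in\mathcal{E}_{\pi}$. For each triangle $T=\{u_1,u_2,u_3\}$ of $\mathcal{G}$, I order the three vertices so that, without loss of generality, $u_1\prec_{\pi} u_2\prec_{\pi} u_3$; then $[u_3,u_1],[u_3,u_2],[u_2,u_1]\in\mathcal{E}_{\pi}$. Receiver $u_2$ must null a $d_{u_3}$-dimensional interference image while preserving $d_{u_2}$ signal dimensions, so $d_{u_2}+d_{u_3}\le M$. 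Receiver $u_1$ must null $\Hm_{u_1 u_2}\Vm_{u_2}+\Hm_{u_1 u_3}\Vm_{u_3}$, whose dimension is at least $\max(d_{u_2},d_{u_3})$ for generic channels, giving $d_{u_1}+\max(d_{u_2},d_{u_3})\le M$. Maximizing $d_{u_1}+d_{u_2}+d_{u_3}$ over non-negative \emph{integers} under these two constraints yields exactly $\lfloor 3M/2\rfloor$, i.e.\ $3M/2$ when $M$ is even and $(3M-1)/2$ when $M$ is odd, matching the two cases of Theorem~\ref{thm1}.

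Next, I identify and count the triangles of $\mathcal{G}$. From (\ref{def:D})--(\ref{eq:E}) the $3$-cliques of $\mathcal{G}$ are exactly the included down-triangles $\{\Delta(z): z\in\mathbb{Z}(\omega)\cap\mathcal{B}_{r},\ z\notin\Lambda_0\}$: every ``up'' triangle $\{z,z+1,z+\omega+1\}$ is missing the side that would otherwise be supplied by an excluded $\Delta(z')$ with $z'\in\Lambda_0$, and so is not a clique in $\mathcal{G}$. Each vertex $z=a+b\omega$ lies in three down-triangles $\Delta(z),\Delta(z-\omega),\Delta(z-\omega-1)$; checking the residue $[a+b]\bmod 3$ shows that exactly one of these three has its apex in $\Lambda_0$. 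Hence every interior vertex belongs to exactly two triangles of $\mathcal{G}$, and $|\mathcal{T}|=\tfrac{2}{3}|\mathcal{V}|\pm\mathcal{O}(\sqrt{|\mathcal{V}|})$, the correction arising from triangles incident to $\partial\mathcal{B}_r$.

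Finally, I aggregate the per-triangle bounds:
\begin{equation*}
\sum_{v\in\mathcal{V}} n_v\, d_v \;=\; \sum_{T\in\mathcal{T}}\sum_{v\in T} d_v \;\leq\; |\mathcal{T}|\,\lfloor 3M/2\rfloor,
\end{equation*}
where $n_v\in\{0,1,2\}$ is the number of triangles containing $v$. Writing $\sum_v n_v d_v = 2\sum_v d_v - \sum_{v: n_v<2}(2-n_v)d_v$ and using $d_v\le M$ together with $|\{v: n_v<2\}|=\mathcal{O}(\sqrt{|\mathcal{V}|})$, division by $2|\mathcal{V}|$ yields
\begin{equation*}
\frac{1}{|\mathcal{V}|}\sum_{v\in\mathcal{V}} d_v \;\leq\; \tfrac{1}{3}\lfloor 3M/2\rfloor + \mathcal{O}\!\left(1/\sqrt{|\mathcal{V}|}\right) \;=\; d_{\mathcal{G},\pi^{*}} + \mathcal{O}\!\left(1/\sqrt{|\mathcal{V}|}\right),
\end{equation*}
which is the claimed bound. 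The main obstacle is justifying the per-triangle inequality \emph{uniformly} over all orderings $\pi$ and against interference-alignment schemes that span many triangles at once; the resolution is that both inequalities used above are purely local dimension statements at receivers $u_1$ and $u_2$ involving only the images $\Hm_{u_i u_j}\Vm_{u_j}$, and for generic channel matrices these images have their expected dimensions regardless of how beams are chosen at nodes outside $T$, so the triangle-level constraints decouple cleanly and can be summed without loss.
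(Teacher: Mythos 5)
Your proof is correct and follows essentially the same route as the paper's: derive pairwise zero-forcing dimension constraints, bound the DoF sum of each lattice triangle by $\lfloor 3M/2\rfloor$, and aggregate via the fact that every interior vertex lies in exactly two triangles, with an ${\cal O}(\sqrt{|{\cal V}|})$ boundary correction. The only cosmetic difference is that you order each triangle by $\pi$ and use the constraint $d_{u_1}+\max(d_{u_2},d_{u_3})\le M$, whereas the paper uses the three symmetric constraints $d_i+d_j\le M$ on all edges (valid for every $\pi$ since one orientation of each edge is always present); both yield the same per-triangle maximum and the same final bound.
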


The above theorems yield a tight DoFs result for large extended cellular networks, for which $|\cal V| \rightarrow \infty$. 
The term  ${\cal O}\left( \scriptstyle{1}/{{\sqrt{|{\cal V}|}}}\right)$  comes from the fact that sectors on the boundary observe less interference, 
and therefore can achieve higher DoFs. However, the number of sectors on the boundary is small compared to the total number or 
sectors $|\cal V|$, and therefore, their effect vanishes as the size of the network increases. 

\begin{remark}\vspace{-0.06in}
Notice that $d_{{\cal G},{\pi^{*}}}$ is not exactly $M/2$ for odd values of $M$. 
This is because we have  insisted on one-shot schemes. By precoding over two time-frequency varying slots 
it is not difficult to show that $M/2$ DoFs  per sector are indeed achievable also for odd $M$. 
\hfill $\lozenge$
\end{remark}


%

\subsection{Achievability}\label{bfscheme}
For the purpose of illustrating  our main ideas, we  will consider here  the case where sector 
receivers and mobile terminal transmitters are equipped with $M=2$  antennas and describe  the linear beamforming scheme  that is able to achieve 
one DoF per link for the entire network.

Consider the directed interference graph ${\cal G}_{\pi^{*}}({\cal V}, {\cal E}_{\pi^{*}})$ shown in Fig.~\ref{NICE} and assume that all  user terminals $v\in \cal V$  are simultaneously transmitting their signals  $\xv_{v}$ to their corresponding receivers. Recall that each sector receiver that is able to decode its own message, is also able to pass it as side information to its neighbors, effectively eliminating interference in that direction. 
Hence, following  the ``left-to-right, top-down'' decoding order $\pi^{*}$ introduced in Section \ref{sec:NICE}, the sector receiver associated with the node $u\in \cal V$ is able to  eliminate  interference from  all neighboring sectors $v\prec_{\pi^{*}}u$ 
and attempt to decode its own message from the two-dimensional received signal observation $\yv_{u}$ given by
\begin{equation}
\yv_{u} = \Hm_{uu}\xv_{u} + \sum_{v : [v,u]\in {\cal E}_{\pi^{*}}} \Hm_{uv}\xv_{v} +\zv_{u}.
\end{equation}


Our goal is to design the transmitted signals $\xv_{v}$ such that  all interference observed in  
$\yv_{u}$ is aligned in one dimension for every sector receiver $u$ in our cellular system. 
Let $\uv_{u}$ and $\vv_{u}$ denote the 2-dimensional receive and transmit beamforming vectors associated with node $u\in \cal V$ and assume that every user terminal in the network has encoded its message in the corresponding codeword. 
Although codewords span many slots (in time), we focus here on a single slot and denote the corresponding coded symbol of user $u$ by 
$s_{u}$. Then, the vector transmitted by  user $u$ 
is given by $\xv_{u}=\vv_{u}s_{u}$ and each receiver can project its observation $\yv_{u}$ along $\uv_{u}$ to obtain
$$\hat y_{u} = \uv_{u}^{\rm H}\Hm_{uu}\vv_{u}s_{u} 
+ \sum_{v : [v,u]\in {\cal E}_{\pi^{*}}} \uv_{u}^{\rm H}\Hm_{uv}\vv_{v}s_{v} + \hat z_{u}.
$$
We will show next that it is possible to design $\uv_{u}$ and $\vv_{u}$ across the entire network ${\cal G}_{\pi^{*}}({\cal V}, {\cal E}_{\pi^{*}})$ such that the following interference alignment conditions are satisfied:
\begin{eqnarray}
&\uv_{u}^{\rm H}\Hm_{uu}\vv_{u} \neq 0,&\; \forall u\in {\cal V}  \;\;\; \mbox{and}\\
&\uv_{u}^{\rm H}\Hm_{uv}\vv_{v} =0,& \; \forall [v,u]\in {\cal E}_{\pi^{*}}.
\end{eqnarray}
Hence, each receiver in the network can decode its own desired symbol $s_{u}$ from an interference-free channel observation of the form \begin{equation} \hat y_{u} = \hat h_{u} s_{u} + \hat z_{u}\label{form}\end{equation}
where $\hat h_{u} = \uv_{u}^{\rm H}\Hm_{uu}\vv_{u}$ and $\hat z_{u} = \uv_{u}^{\rm H}\zv_{u}$.

\begin{figure}[h]

                \centering
                \includegraphics[width=0.8\columnwidth]{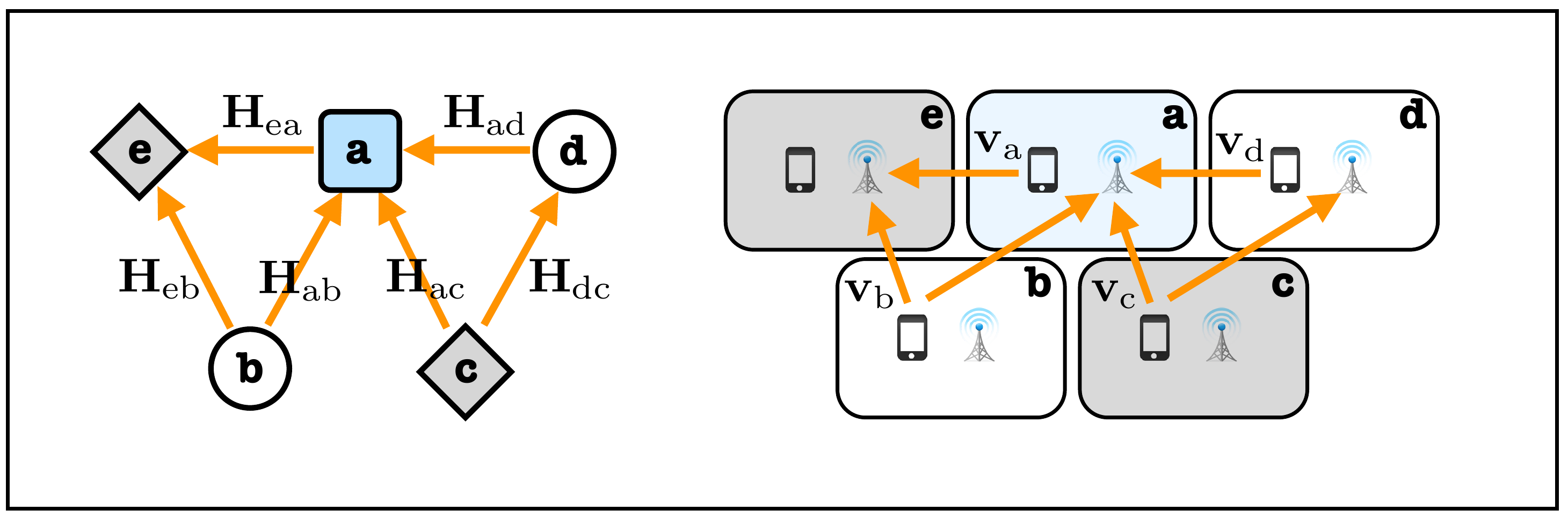}
                \caption{ Out-of-Cell Interference in the neighborhood of a blue node. Sectors are labeled here with letters to avoid confusion with the underlying decoding order (cf. Fig.~\ref{NICE}). }
\label{hood}
\end{figure}

In order to describe the alignment precoding scheme, we will partition the nodes in ${\cal G}_{\pi^{*}}({\cal V}, {\cal E}_{\pi^{*}})$ into three sets based on their interference in-degree,  defined as the number of incoming interfering links. Notice that in Fig.~\ref{NICE} all the {\it square} nodes observe at most three incoming interfering links, while the in-degrees of all {\it diamond} and {\it circle} nodes are at most two and one respectively. 
Let ${\cal V}_{\rm square} = \{v : \phi(v) \in \Lambda_0\}$, ${\cal V}_{\rm circle} = \{v : \phi(v) \in \Lambda_0+1\}$
and ${\cal V}_{\rm diamond} = \{v : \phi(v) \in \Lambda_0-1\}$ 
denote the sets of square, diamond, and circle nodes respectively,  as introduced in Section \ref{igraph}.

First we are going to propose an interference alignment solution for a small part of the network that we will refer to as the {\it neighborhood} of a square node, denoted as ${\cal S}({u})$,  $u\in {\cal V}_{\rm square}$, and then explain how this solution can be extended and applied in the entire network. Fig.~\ref{hood} shows the interfering links and transmit-receive pairs that belong to the  {neighborhood} ${\cal S}({a})$.


In the above neighborhood, the goal is to design the $2$-dimensional beamforming vectors ${\bf v}_{ a}$, ${\bf v}_{ b}$, ${\bf v}_{ c}$ and ${\bf v}_{ d}$ such that all interference occupies a single dimension in every receiver. We will hence  require that 
$\mbox{span}(\Hm_{ea}\vv_{a}) = \mbox{span}({\bf H}_{ eb}{\bf v}_{ b})$ for receiver $e$ and 
$\mbox{span}(\Hm_{ab}\vv_{b}) = \mbox{span}({\bf H}_{ ac}{\bf v}_{ c}) = \mbox{span}({\bf H}_{ ad}{\bf v}_{ d})$ for receiver $a$.
These interference alignment conditions can be satisfied if we choose:
\begin{eqnarray}
{\bf v}_{ a} &\doteq& {\bf H}_{ ea}^{-1}{\bf H}_{ eb}{\bf v}_{ b}\label{Iacond1}\\
{\bf v}_{ b} &\doteq& {\bf H}_{ ab}^{-1}{\bf H}_{ ac}{\bf v}_{ c}\label{Iacond2} \\
{\bf v}_{ c} &\doteq& {\bf H}_{ ac}^{-1}{\bf H}_{ ad}{\bf v}_{ d},\label{Iacond3}
\end{eqnarray}
where ${\bf v}\doteq {\bf u}$ is a shorthand notation for ${\bf v} \in \mbox{span}({\bf u})$. 
%
Notice that in the above solution
the beamforming vectors ${\bf v}_{ a}$, ${\bf v}_{ b}$ and ${\bf v}_{ c}$ depend on the chosen direction for ${\bf v}_{ d}$. This is a key observation in order to embed the above beamforming strategy in the entire network.

\begin{figure}[ht]

                \centering
                \includegraphics[width=0.6\columnwidth]{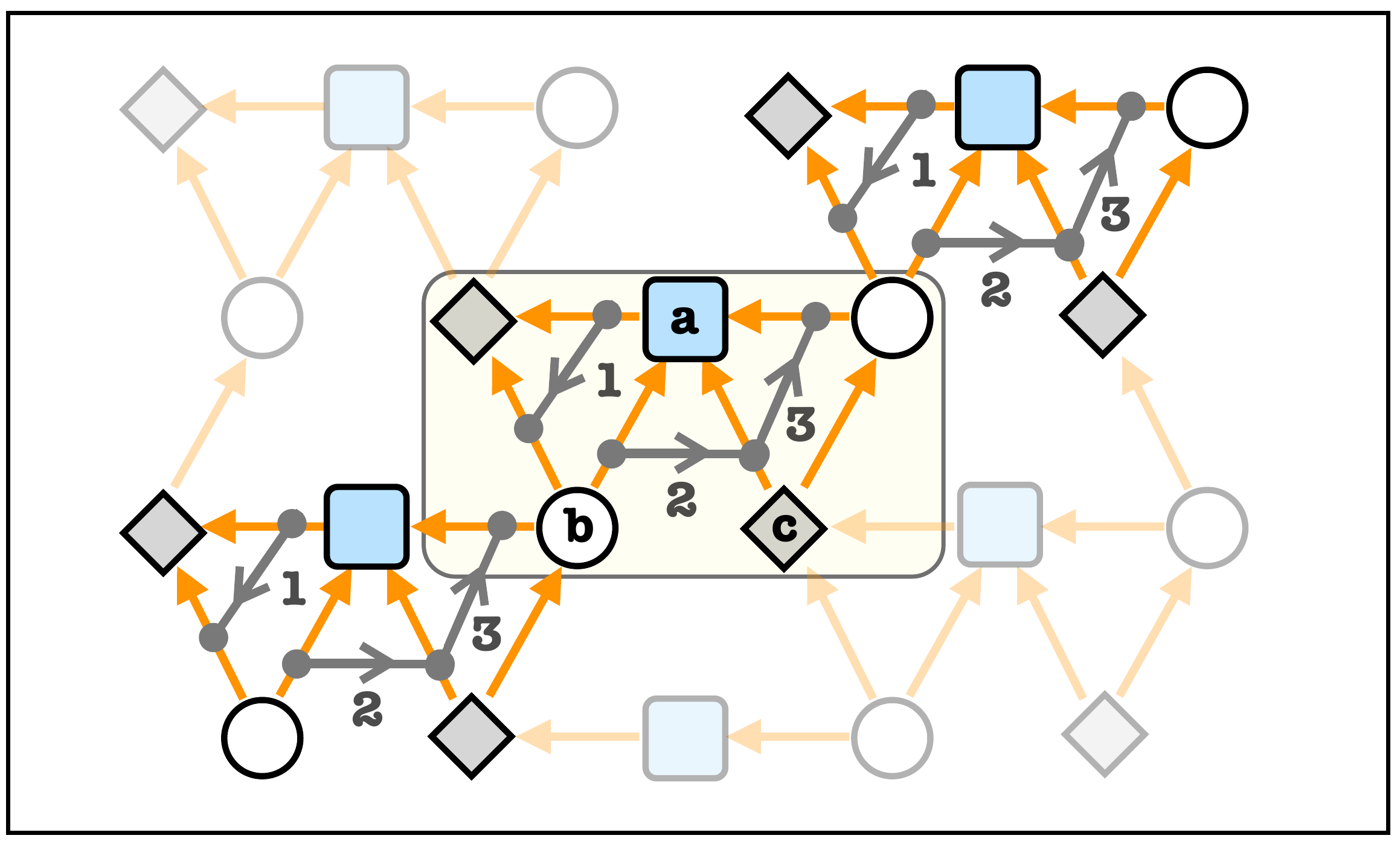}
                \caption{ Interference Alignment Scheme. The alignment conditions are depicted here with arrows connecting interfering streams that have to be aligned. The direction of the arrows show the corresponding beamforming dependencies (e.g., the  arrow labeled with the number $1$ requires that ${\bf v}_{ a}$ is chosen as a function of ${\bf v}_{ b}$).}
                \label{dependencies}

\end{figure}

All the transmitters associated with square nodes $a\in {\cal V}_{\rm square}$ can choose their beamforming vectors $\vv_{a}$ such that the first alignment condition (Eq.~\ref{Iacond1}) is satisfied in every neighborhood ${\cal S}(a)$. This beamforming choice is shown 
in Fig.~\ref{dependencies} with an arrow labeled with the number $1$, connecting the two interfering links that have to be aligned. The direction of the arrow indicates that ${\bf v}_{ a}$ has been chosen as a function of ${\bf v}_{ b}$. In a similar fashion, following the arrows labeled with the number $2$, every circle node $b\in {\cal V}_{\rm circle}$ can beamform to satisfy the second alignment condition (Eq.~\ref{Iacond2}) by choosing ${\bf v}_{ b}$ as a function of ${\bf v}_{ c}$. 
Now, in order to ensure that the third condition (Eq.~\ref{Iacond3}) is also satisfied in every neighborhood we can choose the beamforming vectors of diamond nodes $c\in {\cal V}_{\rm diamond}$ according to the arrows labeled with the number $3$, as shown in  Fig~\ref{dependencies}. 

Notice that, from each neighborhood's perspective,  ${\bf v}_{ c}$ is chosen as a function of an arbitrary vector ${\bf v}_{ d}$ that has in turn been 
chosen to satisfy an alignment condition in a different neighborhood. 
Following this procedure, all the transmitters are able design their beamforming vectors sequentially, as functions of their neighbors' choices,
starting from the boundary of the network.
It is not hard to verify that with the above beamforming strategy,  every receiver in the network will observe all interference aligned 
in one dimension that can subsequently  be zero-forced in order to obtain an observation in the form of (\ref{form}). In that way,  under the network interference cancellation framework with decoding order $\pi^{*}$, all transmit-receive pairs in ${\cal G}_{\pi^{*}}({\cal V}, {\cal E}_{\pi^{*}})$ can successively 
create an {one-dimensional} interference-free channel for communication and hence achieve  $d_{v}=1$, $\forall v \in \cal V$.
\vspace{-0.4in}

\subsection{Converse}\label{sec:converse1}

In the previous section we described a linear beamforming scheme that can be applied in 
$\cal G(V,E)$
when $M=2$, and achieve $d_{v} = 1$, for all $v\in\cal V$, following the ``left-to-right, top-down'' decoding order $\pi^{*}$.
Here, we are going to show that the above DoFs are {\it almost} optimal for our cellular network in the sense that for {any decoding order} $\pi$, the average (per sector) DoFs 
$d_{{\cal G},{\pi}}$ achievable by any linear scheme are upper bounded by $1+{\cal O}\left( \scriptstyle{1}/{{\sqrt{|{\cal V}|}}}\right)$. 

Let $\Vm_{v},\Um_{v} \in\mathbb{C}^{2\times d_{v}}$ denote the  transmit and receive beamforming matrices associated with  a node $v\in \cal V$. 
Any linear scheme that achieves the  DoFs $\{d_{v},v\in\cal V\}$ in $\cal G_{\pi}(V,E_{\pi})$ has to satisfy the  interference alignment conditions: 
\begin{align}
&\Um_{v}^{\rm H}\Hm_{vu}\Vm_{u} = 0,\; \forall [u,v] \in \cal E_{\pi}\label{eq:condA}\\
&\mbox{rank}\left(\Um_{v}^{\rm H}\Hm_{vv}\Vm_{v}\right) = d_{v},\; \forall v\in \cal V.\label{eq:condB}
\end{align}
For any receiver associated with a node $v\in \cal V$, the first condition corresponds to zero-forcing all interference from transmitters $\{u\in {\cal V}:[v,u]\in \cal E_{\pi}\}$ and the second one requires that its own desired symbols can be successfully resolved. 

\clearpage

From the above we can obtain the following {\it necessary} conditions such that any $\{d_{v}:v\in\cal V\}$ is achievable in $\cal G_{\pi}(V,E_{\pi})$:
\begin{align}
&d_{v}\in \{0,1,2\}\,,\;\forall v\in\cal V\label{eq:cond1}\\
&d_{v} + d_{u} \leq 2 \,,\;\forall [u,v]\in\cal E_{\pi},\label{eq:cond2}
\end{align}
where (\ref{eq:cond1}) follows directly from (\ref{eq:condB}) and (\ref{eq:cond2}) follows from (\ref{eq:condA}) assuming that $\mbox{rank}(\Hm_{vu}\Vm_{u}) = \mbox{rank}(\Vm_{u}) = d_{u}$, $\forall [v,u]\in \cal E_{\pi}$.

In order to obtain an upper bound on the achievable average DoFs, we shall consider the  optimization problem
\begin{align}
\hspace{-0.2in}{{ \rm Q}_{1}({{\cal G_{\pi}}})}:\;\;\;\;\; &\underset{\{d_{v}:v\in\cal V\}}{\mbox{maximize}}\;\;\;  \frac{1}{|{\cal V}|} \sum_{v\in \cal V}d_{v} \nonumber\\
&\mbox{subject to:} \;\;(\ref{eq:cond1}),(\ref{eq:cond2}).
\nonumber
\end{align}
In particular, we will derive an upper bound $\hat d_{\cal G}$ for the optimal value of ${ \rm Q}_{1}({{\cal G_{\pi}}})$, such that $\hat d_{\cal G} \geq {\rm opt}({ \rm Q}_{1}({{\cal G_{\pi}}}))$, $\forall \pi$ and show that 
$
\hat d_{\cal G} = 1 + {\cal O}\left( \scriptstyle{1}/{{\sqrt{|{\cal V}|}}}\right)$.

As a first step, we are going to rewrite the sum in the objective of 
${ \rm Q}_{1}({{\cal G_{\pi}}})$ as a sum over  connected vertex triplets $[u,v,w]$ that we are going to call the {\it triangles} $\cal T$ of our graph. 

\begin{figure}[ht]

                \centering
                \includegraphics[width=.55\columnwidth]{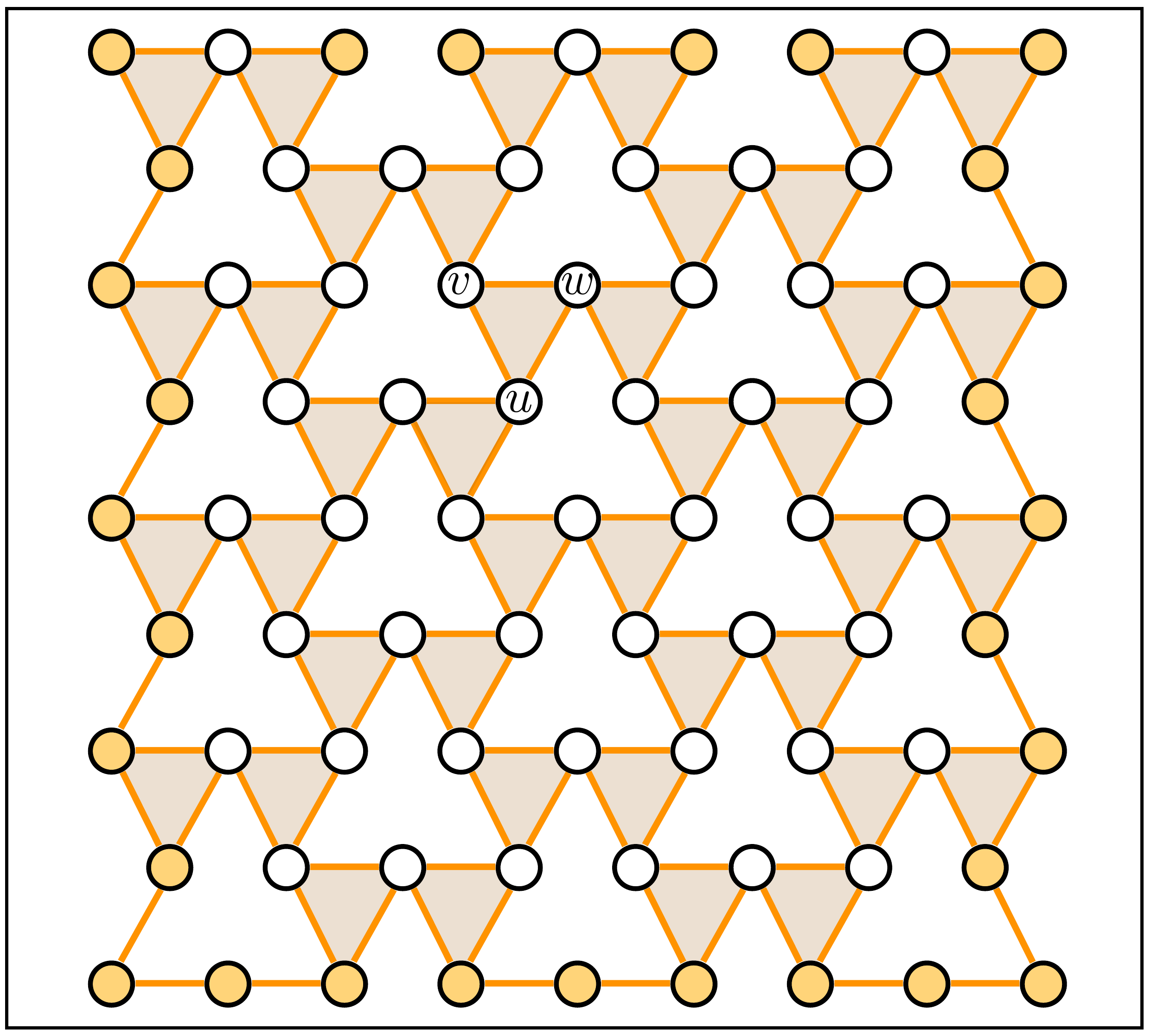}
                \caption{The set of triangles $[u,v,w]\in \cal T$ for  $\cal G(V,E)$. All the circle nodes belong to $\cal V_{\rm in}$ and participate in exactly two triangles ($n_{v}=2$). The set $\cal V_{\rm ex}$ contains the  colored nodes on the 
                boundary for which $n_{v}<2$.}
                \label{Triangles}

\end{figure}

In order to formally describe the set of triangles $\cal T$ in $\cal G(V,E)$, we consider the set of ordered Eisenstein integer triplets $${ \cal P} = \{[z,z+\omega,z+\omega+1]: z\in\mathbb{Z}(\omega)\setminus \Lambda_{0}  \}.$$
Recall from Section \ref{igraph} that when $z\in\mathbb{Z}(\omega)\setminus \Lambda_{0}$, the points $z$, $z+\omega$ and $z+\omega+1$ form the line segments $\Delta(z)\subseteq {\cal D}$ and the corresponding graph vertices $\phi^{-1}(z)$, $\phi^{-1}(z+\omega)$ and $\phi^{-1}(z+\omega+1)$ form a connected triangle in $\cal G(V,E)$. We can hence define  the set of vertex triangles as 
\begin{equation}
{\cal T} \hspace{-0.05in}=\hspace{-0.05in} \{[u,v,w]: [\phi(u),\phi(v),\phi(w)]\in {\cal P}, u,v,w\in {\cal V}\}.
\label{triangledef}
\end{equation}

The above definition is illustrated in Fig.~\ref{Triangles} in which shaded triangles connect the corresponding vertex triplets $[u,v,w]\in \cal T$.
Notice that apart from some vertices on the external boundary of the graph, all other nodes participate in exactly two triangles in $\cal T$. This observation will be particularly useful in rewriting the sum in the objective function of ${ \rm Q}_{1}({{\cal G_{\pi}}})$ as a sum over $\cal T$ instead of $\cal V$.

Let 
\begin{equation}
n_{v}\triangleq \sum_{[i,j,k]\in {\cal T}}\mathlarger{\mathbbm{1}}\Big\{v\in\{i,j,k\}\Big\}
\label{nv}
\end{equation}
denote the number of triangles $[i,j,k]\in {\cal T}$ that include a given vertex $v\in \cal V$.
As we have seen, $n_{v}$ can only take values in $\{0, 1, 2\}$ for any $v\in \cal V$. More specifically $n_{v}= 2$ for all  internal vertices in $\cal G(V,E)$, while  $n_{v}< 2$ only for some external vertices that lie on the outside boundary of our graph.

We define the set of internal and external vertices as follows.
\begin{eqnarray}
{\cal V}_{\rm in} &=& \{v\in {\cal V} : n_{v}=2\}, \mbox{ and} \\
{\cal V}_{\rm ex} &=& \{v\in {\cal V} : n_{v}<2\}.
\end{eqnarray}

 In Fig.~\ref{Triangles} we show the above distinction by coloring all graph vertices $v$ that belong to the set ${\cal V}_{\rm ex}\subseteq\cal V$.  


\begin{lemma}[Triangle sums]Consider the  interference graph $\cal G(V,E)$ and let $\{x_{v}:\;v\in \cal V\}$  be a set of values associated with $\cal V$. 
The sum of $x_{v}$ over all vertices $v\in \cal V$ can be written as
\begin{equation}
\sum_{v\in\cal V}x_{v} = \hspace{-0.1in} \sum_{[i,j,k]\in \cal T}\left(\frac{x_{i}+x_{j}+x_{k}}{2}\right) + \sum_{u\in {\cal V}_{\rm ex}}\left(1-\frac{n_{u}}{2}\right)x_{u}.
\end{equation}
\label{lem1}
\end{lemma}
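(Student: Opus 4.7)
The plan is to prove the identity by a straightforward double-counting argument that exchanges the order of summation using the definition of $n_v$ in (\ref{nv}).

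First I would expand the triangle sum on the right-hand side by swapping the roles of vertices and triangles:
\begin{equation*}
\sum_{[i,j,k]\in\cal T}\frac{x_{i}+x_{j}+x_{k}}{2} = \sum_{v\in \cal V}\frac{x_{v}}{2}\sum_{[i,j,k]\in {\cal T}}\mathlarger{\mathbbm{1}}\{v\in\{i,j,k\}\} = \sum_{v\in \cal V}\frac{n_{v}}{2}\, x_{v},
\end{equation*}
where the last equality is the definition of $n_v$. This reduces the lemma to the purely algebraic statement
\begin{equation*}
\sum_{v\in\cal V}x_{v} = \sum_{v\in\cal V}\frac{n_{v}}{2}\, x_{v} + \sum_{u\in {\cal V}_{\rm ex}}\left(1-\frac{n_{u}}{2}\right)x_{u}.
\end{equation*}

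Next I would split the first sum on the right-hand side according to the partition ${\cal V} = {\cal V}_{\rm in}\cup {\cal V}_{\rm ex}$ (disjoint by definition) and use the fact that $n_{v}=2$ for every $v\in{\cal V}_{\rm in}$, so that $\frac{n_v}{2}x_v = x_v$ on $\cal V_{\rm in}$. Then
\begin{equation*}
\sum_{v\in\cal V}\frac{n_{v}}{2}\, x_{v} + \sum_{u\in {\cal V}_{\rm ex}}\left(1-\frac{n_{u}}{2}\right)x_{u} = \sum_{v\in{\cal V}_{\rm in}} x_v + \sum_{u\in{\cal V}_{\rm ex}}\frac{n_u}{2} x_u + \sum_{u\in{\cal V}_{\rm ex}}\left(1-\frac{n_u}{2}\right) x_u = \sum_{v\in{\cal V}_{\rm in}} x_v + \sum_{u\in{\cal V}_{\rm ex}} x_u,
\end{equation*}
which equals $\sum_{v\in\cal V} x_v$ and closes the identity.

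There is essentially no obstacle here beyond being careful with the partition of $\cal V$ and the counting; the whole lemma is a bookkeeping consequence of the definition of $n_v$, with the correction term $(1-n_u/2)x_u$ arising precisely because boundary vertices $u\in{\cal V}_{\rm ex}$ are under-counted by the triangle sum (since $n_u<2$ there) and need the compensating weight $1-n_u/2$ to recover the uniform sum $\sum_v x_v$. The lemma will later be applied with $x_v = d_v$ to convert the edge-based constraint (\ref{eq:cond2}) of ${\rm Q}_1({\cal G}_\pi)$ into per-triangle constraints that can be optimized locally.
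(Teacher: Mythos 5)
Your proof is correct and is essentially the same double-counting argument as the paper's: both exchange the order of summation to identify the triangle sum with $\sum_{v}\tfrac{n_{v}}{2}x_{v}$, split over the partition ${\cal V}_{\rm in}\cup{\cal V}_{\rm ex}$, and use $n_{v}=2$ on ${\cal V}_{\rm in}$. The only difference is cosmetic — you simplify the right-hand side down to the left, while the paper rearranges from $\sum_{v}n_{v}x_{v}=2\sum_{v}x_{v}+\sum_{v\in{\cal V}_{\rm ex}}(n_{v}-2)x_{v}$ — so no further comment is needed.
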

\vspace{-0.2in}
\begin{proof}
From the definition of $n_{v}$ in (\ref{nv}), we have that
$$
\sum_{[i,j,k]\in \cal T}\left({x_{i}+x_{j}+x_{k}}\right) = \sum_{v\in \cal V}n_{v}x_{v}.
$$
Splitting the sum in terms of $\cal V_{\rm in}$ and $\cal V_{\rm ex}$ we get
\begin{eqnarray}
\sum_{v\in \cal V}n_{v}x_{v} &=&\sum_{v\in \cal V_{\rm in}}n_{v}x_{v} + \sum_{u\in \cal V_{\rm ex}}n_{u}x_{u}\nonumber\\
&=& 2\sum_{v\in \cal V}x_{v} + \sum_{v\in \cal V_{\rm in}}(n_{v}-2)x_{v} + \sum_{u\in \cal V_{\rm ex}}(n_{u}-2)x_{u}\nonumber \\
&=& 2\sum_{v\in \cal V}x_{v} + \sum_{v\in \cal V_{\rm ex}}(n_{v}-2)x_{v}\,,\nonumber
\end{eqnarray}
where the last step follows from the fact that $n_{v} =2$ for all $v \in \cal V_{\rm in}$. Rearranging the terms and dividing by $2$ gives the desired result.
\end{proof}

In view of the above lemma, we can rewrite the average DoFs in the objective  of $\rm Q_{1}(\cal G_{\pi})$ as 
\begin{equation}
\frac{1}{|{\cal V}|} \sum_{v\in \cal V}d_{v} = \frac{1}{2|{\cal V}|}\sum_{[i,j,k]\in \cal T}
\hspace{-0.09in}(d_{i}+d_{j}+d_{k}) + \frac{D_{\rm ex}}{|\cal V|}\, ,
\label{eq24}
\end{equation}  
where 
 $$D_{\rm ex} = \sum_{v\in {\cal V}_{\rm ex}}\left(1-\frac{n_{v}}{2}\right)d_{v}.$$

Notice that from (\ref{eq:cond1}) and (\ref{eq:cond2}), the maximum sum $d_{i}+d_{j}+d_{k}$ that any triangle $[i,j,k]\in\cal T$ can achieve in our setting is $3$ and hence we can bound the sum in (\ref{eq24}) as
\begin{equation*}
\sum_{[i,j,k]\in \cal T}
\hspace{-0.09in}(d_{i}+d_{j}+d_{k})\leq 3|\cal T|.
\end{equation*}

Similarly, since $d_{v}\leq 2$ and $n_{v}\geq 0$ for all $v\in \cal V$, we have that $$D_{\rm ex}\leq 2|{\cal V_{\rm ex}}|.$$ 
 
\vspace{-0.15in}
Using the above inequalities we can upper bound  the average achievable DoFs in the cellular network as
\vspace{-0.15in}
\begin{equation}
\frac{1}{|{\cal V}|} \sum_{v\in \cal V}d_{v} \leq \frac{3|\cal T|}{2|{\cal V}|}  + \frac{2|\cal V_{\rm ex}|}{|\cal V|}.
\end{equation} 
\begin{lemma}   \label{lem:2}
By construction, the interference graph $\cal G(V,E)$ satisfies
  \begin{align}&|{\cal T}|\leq \frac{2}{3}|\cal V|, \;\;\;\mbox{and}\\
  &|{\cal V_{\rm ex}}| = {\cal O}\left(\sqrt{|\cal V|}\right).
   \end{align}
\end{lemma}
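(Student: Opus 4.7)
The plan is to establish the two bounds separately, each by a direct counting argument.

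For the bound $|\mathcal{T}| \leq \tfrac{2}{3}|\mathcal{V}|$, I would invoke Lemma \ref{lem1} with the constant weights $x_v = 1$. The left-hand side then evaluates to $|\mathcal{V}|$; each triangle contributes $\tfrac{3}{2}$ to the first sum on the right, producing $\tfrac{3|\mathcal{T}|}{2}$; and the residual becomes $\sum_{u \in \mathcal{V}_{\rm ex}}(1 - n_u/2)$. By the very definition of $\mathcal{V}_{\rm ex}$ every summand has $n_u < 2$, and $n_u \geq 0$ always, so the residual is non-negative. Dropping it yields $|\mathcal{V}| \geq \tfrac{3|\mathcal{T}|}{2}$, which rearranges to the claim.

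For the bound $|\mathcal{V}_{\rm ex}| = \mathcal{O}(\sqrt{|\mathcal{V}|})$, I would rely on the lattice description of $\mathcal{V}$. Since $\mathcal{V}$ is in bijection with $\mathbb{Z}(\omega) \cap \mathcal{B}_r$, and the Eisenstein lattice has covolume $\tfrac{\sqrt{3}}{2}$ while $\mathcal{B}_r$ has area $2\sqrt{3}\,r^2$, a standard lattice-point count gives $|\mathcal{V}| = \Theta(r^2)$. The key observation is that triangle membership is \emph{local}: whether a vertex $v$ sits inside both of its potential triangles from $\mathcal{T}$ depends only on whether a constant-radius neighborhood of $\phi(v)$ is contained in $\mathcal{B}_r$, because each segment set $\Delta(z)$ only involves the three points $z$, $z + \omega$, $z + \omega + 1$, all within unit distance of each other ($|\omega| = |\omega+1| = 1$). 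Consequently, $v \in \mathcal{V}_{\rm ex}$ forces $\phi(v)$ to lie within constant distance of the boundary $\partial \mathcal{B}_r$, and the number of Eisenstein integers inside a boundary strip of fixed width around a rectangle of perimeter $\Theta(r)$ is $\mathcal{O}(r) = \mathcal{O}(\sqrt{|\mathcal{V}|})$.

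The first bound is essentially immediate from Lemma \ref{lem1}. The only step requiring a touch of care is verifying that the property $n_v = 2$ is determined by a constant-size neighborhood of $\phi(v)$ inside $\mathcal{B}_r$; once this locality is recorded, the second bound reduces to the standard perimeter-versus-area estimate for lattice points in a bounded convex region.
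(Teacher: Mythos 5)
Your proposal is correct, and it takes a genuinely different route from the paper on both counts. For the bound $|{\cal T}|\leq \frac{2}{3}|{\cal V}|$, the paper does \emph{not} use Lemma~\ref{lem1}: it derives an exact closed-form expression for $|{\cal T}^{(r)}|$ (by associating each triangle with its leading vertex and counting Eisenstein integers line by line, ending with a formula involving three floor functions) and then verifies the inequality via $\lfloor x/3\rfloor \geq (x-2)/3$ against the exact count $|{\cal V}^{(r)}| = 4r^{2}+3r+O(1)$. Your observation that the same bound drops out of Lemma~\ref{lem1} with $x_v=1$ --- since every boundary term $(1-n_u/2)$ is nonnegative --- is cleaner and avoids the enumeration entirely; it is a nice simplification the paper could have used but did not. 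For $|{\cal V}_{\rm ex}|={\cal O}(\sqrt{|{\cal V}|})$, the paper again works algebraically: it applies Lemma~\ref{lem1} with $x_v=1$ to get $|{\cal V}_{\rm ex}|\leq 2|{\cal V}|-3|{\cal T}|$, lower-bounds $|{\cal T}^{(r)}|\geq \frac{8}{3}r^{2}-2r$ from its exact formula, and concludes $|{\cal V}_{\rm ex}^{(r)}|\leq 12r+2\leq 7\sqrt{|{\cal V}^{(r)}|}$. Your boundary-strip argument --- that $n_v=2$ is determined by a radius-$2$ neighborhood of $\phi(v)$ (the three points of each $\Delta(z)$ are mutually at unit distance, and exactly two of the three candidate leading vertices $\phi(v)$, $\phi(v)-\omega$, $\phi(v)-\omega-1$ avoid $\Lambda_0$), so that ${\cal V}_{\rm ex}$ is confined to a constant-width strip along $\partial{\cal B}_r$ of perimeter $\Theta(r)$ --- is more conceptual and generalizes to any bounded convex region, at the cost of explicit constants. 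Both of your arguments are sound; the only point worth recording explicitly is the coset computation showing that an interior vertex has exactly two (not three) potential triangles, which the paper also asserts without detailed proof in the main text.
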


\begin{proof} See Appendix \ref{proof-lemma2}. \end{proof}

Using the result of the above lemma, we can obtain 
\begin{equation}
{\rm opt}({ \rm Q}_{1}({{\cal G_{\pi}}})) \leq 1 + {\cal O}\left( \scriptstyle{1}/{{\sqrt{|{\cal V}|}}}\right),
\end{equation}
and hence conclude that when $M=2$, the average DoFs achievable in the sectored cellular system are bounded by $1 + {\cal O}\left( \scriptstyle{1}/{{\sqrt{|{\cal V}|}}}\right)$ for any decoding order $\pi$.

%





\section{Networks with Intra-Cell Interference}\label{sec:intracell}

In this section we extend our cellular model to 
incorporate both out-of-cell and intra-cell interference. Namely we will assume here that a sector receiver observes interference not only from its out-of-cell neighbors but also from the other transmitters located within the same cell. 
These intra-cell interfering links are shown as black arrows in Fig.~\ref{cell}   
and correspond to the dashed edges in the interference graph shown in Fig.~\ref{fig-igraph}. 

The interference graph, denoted here as $\hat{\cal G}\big(\cal V,\hat{\cal E}\big)$, is  the same as the graph defined in Section \ref{igraph} with the only difference that the set  $\hat{\cal E}$ now includes both out-of-cell and intra-cell interference edges.  Similarly we can define the directed interference graph $\hat{\cal G}_{\pi}\big(\cal V,\hat{\cal E}_{\pi}\big)$ for any network interference cancellation decoding order $\pi$.

We will see next that these additional interfering links in $\hat{\cal E}$ do not affect the achievable degrees of freedom in our  cellular system as long as we allow  the sectors of each cell to jointly process their received signals.\footnote{It is interesting to notice that joint sector processing 
at the same cell base station site is implemented in current technology.} 
Again, we state here our main achievability result and focus on the case where $M = 2$ in Section \ref{sec:intracellA}, while the full proof
is postponed to Appendix \ref{proof:thm3}.
%


\begin{thm} \label{thm:sectors}
For a sectored cellular system $\hat{\cal G}\big(\cal V,\hat{\cal E}\big)$ in which transmitters and receivers are equipped with $M$ antennas each,  
there exists a one-shot linear beamforming scheme that achieves the average (per sector) DoFs
\begin{equation}d_{\hat{\cal G},{\pi^{*}}} = \begin{cases}\frac{M}{2}, \;\;\;\;\;\;\;\;\;\;\;\mbox{$M$ is even}\\ \frac{M}{2} -\frac{1}{6}, \;\;\;\;\;\mbox{$M$ is odd}\end{cases}
\end{equation}
under the network interference cancellation framework with decoding order $\pi^{*}$, 
and with joint processing within the sectors of each cell.
\hfill \QED
\end{thm}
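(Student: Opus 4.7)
The plan is to reduce the problem with intra-cell interference to the intra-cell-free setting of Theorem \ref{thm1} by exploiting joint processing within each cell. The key observation is that when the three sector receivers of a cell pool their $M$-antenna observations into a single $3M$-dimensional vector, the intra-cell ``interference'' ceases to be interference at all: the three intra-cell symbols are precisely the three desired symbols that the joint cell-level receiver must decode. Only out-of-cell interference remains as genuine interference, and its structure at the cell level is governed by exactly the same directed graph ${\cal G}_{\pi^{*}}$ that appears in Theorem \ref{thm1}.

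Concretely, I would first reuse the transmit beamformers $\Vm_{v}$ from Section \ref{bfscheme} (and their $M$-antenna extension described in Appendix \ref{proofthm1}) verbatim. These precoders are designed using only cross-cell channel matrices and, under $\pi^{*}$ network interference cancellation, ensure that at each individual sector receiver the remaining out-of-cell interference is aligned into an $M/2$-dimensional subspace (with the usual minor adjustment for odd $M$). Stacking the three sector observations within a cell then yields a $3M$-dimensional signal in which the three desired streams together occupy $3M/2$ dimensions while the aggregate aligned out-of-cell interference occupies at most $3M/2$ dimensions. Since $3M/2 + 3M/2 = 3M$, the dimensions match exactly, and zero-forcing the aligned interference and then resolving the three desired streams will be feasible provided the effective $3M \times 3M$ channel matrix is full rank. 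The odd $M$ case is handled with the same parity-induced reduction as in Theorem \ref{thm1}, producing the $1/6$ loss per sector.

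The main obstacle is establishing this generic full-rank property. On the one hand, the cross-cell channel matrices are drawn from a continuous distribution independently of the intra-cell channels $\Hm_{ii'}$, and since the beamformers $\Vm_{v}$ depend only on cross-cell channels, the intra-cell image vectors $\Hm_{ii'}\Vm_{i'}$ are generically in general position with respect to the aligned out-of-cell interference subspace spanned by $\Hm_{ij}\Vm_{j}$ for $j$ outside the cell. The delicate point is that the alignment construction in Section \ref{bfscheme} propagates beamforming choices recursively through the network, inducing nontrivial algebraic relations among the $\Vm_{v}$'s, and one must verify that none of these relations collapses the joint-cell matrix's rank. I would handle this by exhibiting a single channel realization for which the $3M \times 3M$ joint-cell matrix is full rank, so that fullness of rank then holds on a Zariski-open, hence full-measure, set of channels. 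Finally, since the per-sector DoFs produced by this scheme match those of Theorem \ref{thm1}, and the converse of Theorem \ref{thm2} continues to apply to $\hat{\cal G}$ (adding intra-cell edges can only tighten the constraints of ${\rm Q}_{1}$), this yields the sharp achievability of $d_{\hat{\cal G},\pi^{*}}$ claimed by Theorem \ref{thm:sectors}.
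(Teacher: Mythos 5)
Your overall architecture is the right one and matches the paper's: keep the transmit beamformers of Theorem \ref{thm1}, stack the observations of the three co-located sectors, treat the intra-cell signals as jointly desired rather than as interference, and reduce everything to a generic full-rank claim for a cell-level matrix. But your per-receiver accounting of the residual interference is not quite correct, and it hides the one delicate structural point. The joint observation for a cell $\{a,b,c\}$ must be formed at the time its \emph{first} sector $a$ is scheduled to decode under $\pi^{*}$; at that time the out-of-cell neighbor $d$ of sectors $a$ and $c$ (scheduled after $a$ but before $c$) has not yet decoded, so its signal at receiver $c$ is \emph{not} part of the $M/2$-dimensional aligned subspace that $\Um_c$ zero-forces --- the alignment conditions only cover transmitters scheduled after $c$. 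Receiver $c$'s block therefore carries up to $M$ dimensions of out-of-cell interference, not $M/2$, and a scheme that zero-forces only the per-receiver aligned subspaces leaves $s_d$'s interference in place. Your total count of $3M/2$ happens to survive only because $s_d$ contributes a single shared block of $M/2$ columns spanning the $a$- and $c$-blocks of the stacked matrix; the paper makes this explicit by writing the system as $\tilde{\yv}=\tilde{\Hm}\tilde{\sv}+\tilde{\zv}$ with $\tilde{\sv}=[\sv_a,\sv_b,\sv_c,\sv_d]^{\trasp}$, i.e., with $s_d$ carried along as an extra jointly-resolved unknown in a matrix of size $(M+d_b+d_c)\times(d_a+d_b+d_c+d_d)$, which is square (or tall) precisely because $d_a+d_d\le M$. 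It then uses a QL decomposition so that $s_a$, $s_b$, $s_c$ come out successively in the order $\pi^{*}$ requires for the message-passing cancellation to stay on schedule, and it separately treats the boundary cells with $|C(z)|\le 3$.

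The second and larger gap is that the full-rank claim --- which you correctly identify as ``the main obstacle'' --- is deferred rather than proved. Exhibiting a single channel realization for which every cell-level matrix is simultaneously full rank is not a routine step here, because the beamformers are built by recursive chains of the form $\Hm_{uv}^{-1}\Hm_{uw}\Vm_{w}$ propagating from the network boundary, so one cannot simply write down a convenient realization without re-verifying the entire construction on it. The paper's route avoids this: it observes that the beamformers $\Vm_v,\Um_u$ entering $\tilde{\Hm}$ are determined by channel matrices \emph{other} than the direct, intra-cell, and $\Hm_{ad},\Hm_{cd}$ matrices that actually appear in $\tilde{\Hm}$; hence, conditioned on the beamformers, the nonzero entries of $\tilde{\Hm}$ are independent continuous random variables, and a Leibniz-formula argument (Lemma \ref{lem:hadamard}, applied after permuting rows so that the zero pattern sits off the diagonal) gives full column rank with probability one. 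That independence observation is the missing idea in your write-up; without it (or a genuinely completed Zariski argument), the proof is not closed. Your final remark about the converse is harmless but unnecessary, since Theorem \ref{thm:sectors} is an achievability statement only.
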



\subsection{Achievability}\label{sec:intracellA}

Consider the beamforming scheme described in Section \ref{bfscheme} for $M=2$ and focus on the cell $\{a,b,c\}$ shown in Fig.~\ref{intracell}. 
Without loss of generality, we will describe here how to jointly process the received observations in 
$\yv_{a}$, $\yv_{b}$ and $\yv_{c}$ such that all intra-cell interference can be eliminated and show that under the network interference cancellation framework and the beamforming choices of Section \ref{bfscheme},  every sector receiver in the network is able to decode its own desired message.

\begin{figure}[ht]
                \centering
                \includegraphics[width=0.6\columnwidth]{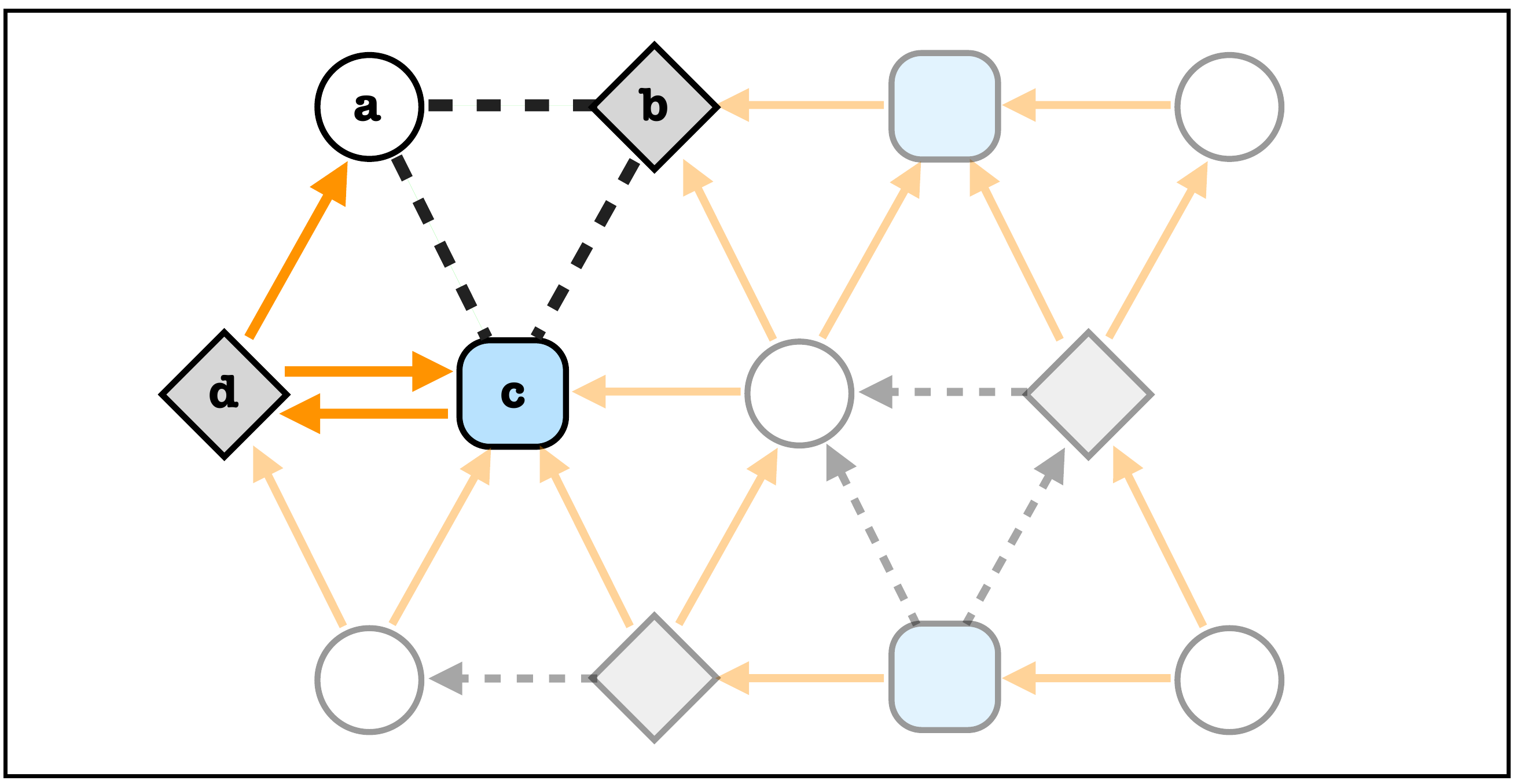}
                \caption{ Intra-Cell Interference Elimination. The sectors of each cell can jointly process their received signals and successively decode their  desired  messages. 
                }
\label{intracell}
\end{figure}

According to  the ``left-to-right, top-down'' decoding order $\pi^{*}$, at the time when sector $a$ attempts to decode,  all the interfering links from transmitters located ``above and to the left'' of $a$ have already been eliminated. As we can also see in Fig.~\ref{intracell}, sector $a$ will observe  intra-cell interference from sectors $b$ and $c$, and out-of-cell interference from sector $d$.  Hence, the received signal available to sector $a$ is  given by \begin{equation}
\yv_{a} = \Hm_{aa}\vv_{a}s_{a} + \sum_{u\in \{b,c,d\}}\Hm_{au}\vv_{u}s_{u} + \zv_{a}.
\label{obsA}
\end{equation}

At the same time, the receivers $b$ and $c$ will be observing interference from all their neighboring sectors that have not decoded their messages yet. Notice however that with the specific beamforming choices described in Section \ref{bfscheme}, all interference that comes from sectors whose messages will be decoded after sectors $b$ and $c$ according to $\pi^{*}$, occupy a single dimension in each receiver and can hence be zero-forced.  It is only the transmitter associated with sector $d$ that is going to cause interference after the projection. 
Therefore,
the corresponding  observations from sectors $b$ and $c$ that are available when receiver $a$ attempts to decode  are given by 
\begin{align}
&\uv_{b}^{\rm H}\yv_{b}= \uv_{b}^{\rm H}\Hm_{bb}\vv_{b}s_{b} + \sum_{u\in \{a,c\}}\uv_{b}^{\rm H}\Hm_{bu}\vv_{u}s_{u} + \uv_{b}^{\rm H}\zv_{b},\label{obsB}\\
&\uv_{c}^{\rm H}\yv_{c} = \uv_{c}^{\rm H}\Hm_{cc}\vv_{c}s_{c} + \sum_{u\in \{a,b,d\}}\uv_{c}^{\rm H}\Hm_{cu}\vv_{u}s_{u} + \uv_{c}^{\rm H}\zv_{c}.\label{obsC}
\end{align}

We will see next that the cell with sectors $\{a,b,c\}$ can jointly process the above observations such that all the corresponding sector receivers will be able to  decode their desired messages in the order $s_{a}, s_{b}, s_{c}$  specified by $\pi^{*}$. 
Indeed, if we let $\sv = [s_{a}, s_{b}, s_{c}, s_{d}]^{\rm T}$, the observations (\ref{obsA}), (\ref{obsB}) and (\ref{obsC}) can be written in vector form as
\begin{equation}
{ \bf \tilde y} = {\bf \tilde H} \sv + { \bf \tilde z}
\label{vecobs}
\end{equation}
where 
\begin{align}
&{ \bf \tilde y} = [\yv_{a}, \uv_{b}^{\rm H}\yv_{b}, \uv_{c}^{\rm H}\yv_{c}]^{\rm T},\nonumber\\
&{ \bf \tilde z} = [\zv_{a}, \uv_{b}^{\rm H}\zv_{b}, \uv_{c}^{\rm H}\zv_{c}]^{\rm T} \nonumber
\end{align}
and
\begin{align}
{\bf \tilde H} = \begin{bmatrix} \Hm_{aa}\vv_{a} & \Hm_{ab}\vv_{b} & \Hm_{ac}\vv_{c} & \Hm_{ad}\vv_{d} \\ 
 \uv_{b}^{\rm H}\Hm_{ba}\vv_{a} & \uv_{b}^{\rm H}\Hm_{bb}\vv_{b} & \uv_{b}^{\rm H}\Hm_{bc}\vv_{c} & 0 \\ 
 \uv_{c}^{\rm H}\Hm_{ca}\vv_{a} & \uv_{c}^{\rm H}\Hm_{cb}\vv_{b} & \uv_{c}^{\rm H}\Hm_{cc}\vv_{c} & \uv_{c}^{\rm H}\Hm_{cd}\vv_{d}\end{bmatrix}.\nonumber
\end{align}

Now, assuming that  the channel matrices in our cellular network are chosen independently at random from a 
non-degenerate continuous complex distribution (e.g., they have independent elements drawn from 
a complex normal distribution), we can show that ${\bf \tilde H}\in \CC^{4\times 4}$ is full rank with  probability one. 
One can check that the beamforming vectors $\uv_{i}$ and  $\vv_{j}$ do not depend on the above channel realizations, and therefore  the elements of  ${\bf \tilde H}$ can be seen as   independent  random variables   for which   $\PP\big[{\rm det}({\bf \tilde H})\neq 0\big]=1$. 
Hence, the given cell can always decode  the corresponding messages  from 
${ \bf \tilde y}$ in the required order, as soon as the observations (\ref{obsA}), (\ref{obsB}) and (\ref{obsC}) become available to sectors $a$, $b$ and $c$.

\clearpage
In order to state the decoding process more explicitly, consider $\Qm= [\qv_{1},\qv_{2},\qv_{3},\qv_{4} ]$ to be the unitary matrix obtained by the $QL$-decomposition of ${\bf \tilde H}$ such that  
\begin{equation*}
\Qm^{\rm H}{\bf \tilde H}=
\begin{bmatrix}
\ell_{11} & 0 &0 &0\\
\ell_{21} & \ell_{22} &0 &0\\
\ell_{31} & \ell_{32} &\ell_{33} &0\\
\ell_{41} & \ell_{42} &\ell_{43} &\ell_{44}\\
\end{bmatrix}.
\end{equation*}
The sector receivers $a$, $b$ and $c$ can first project ${ \bf \tilde y}$  along $\qv_{1}$, $\qv_{2}$ and $\qv_{3}$ in order to obtain their corresponding observations in the form
\begin{align}
&y_{a}'  = \ell_{11}s_{a} +z_{1}'\\
&y_{b}'  = \ell_{21}s_{a} +\ell_{22}s_{b} +z_{2}'\\
&y_{c}'  = \ell_{31}s_{a} +\ell_{32}s_{b} +\ell_{33}s_{c}+z_{3}',
\end{align}
and then successively decode their desired messages $s_{a}$, $s_{b}$ and $s_{c}$ according to the specified order.

In general, the above observations can be generated for every cell in the network just before their first sector receiver  attempts to decode. Therefore, following the ``left-to-right, top-down'' decoding order $\pi^{*}$, all the  sectors in 
$\hat{\cal G}_{\pi^{*}}\big(\cal V,\hat{\cal E}_{\pi^{*}}\big)$ can 
decode their desired messages using the above procedure 
and hence the average (per sector) degrees of freedom $d_{\hat{\cal G},{\pi^{*}}} = 1$ are achievable.

%

\section{ Topological Robustness}  \label{sec:topo}

In this section we introduce the concept of topological robustness for interference networks. Broadly speaking,  an achievable scheme is said to be {\it robust} with respect to a network topology if its performance does not depend on the existence (or strength) of interference. 
This is a very important property to take into account if we want to apply a communication scheme in practice. Cellular systems  are in principle designed such that  
most interfering links are  weak and hence any scheme that solely depends on the existence (or strength) of interference  will fail whenever the corresponding links are missing (or weak). 

Consider for example the achievable scheme described in Section \ref{sec:intracellA} and assume that for a given channel realization all interference observed at receiver~$a$ is zero. In this network instance, depicted in Fig.~\ref{fig:notrobust}, the equivalent 
\begin{figure}[ht]
                \centering
                \includegraphics[width=.6\columnwidth]{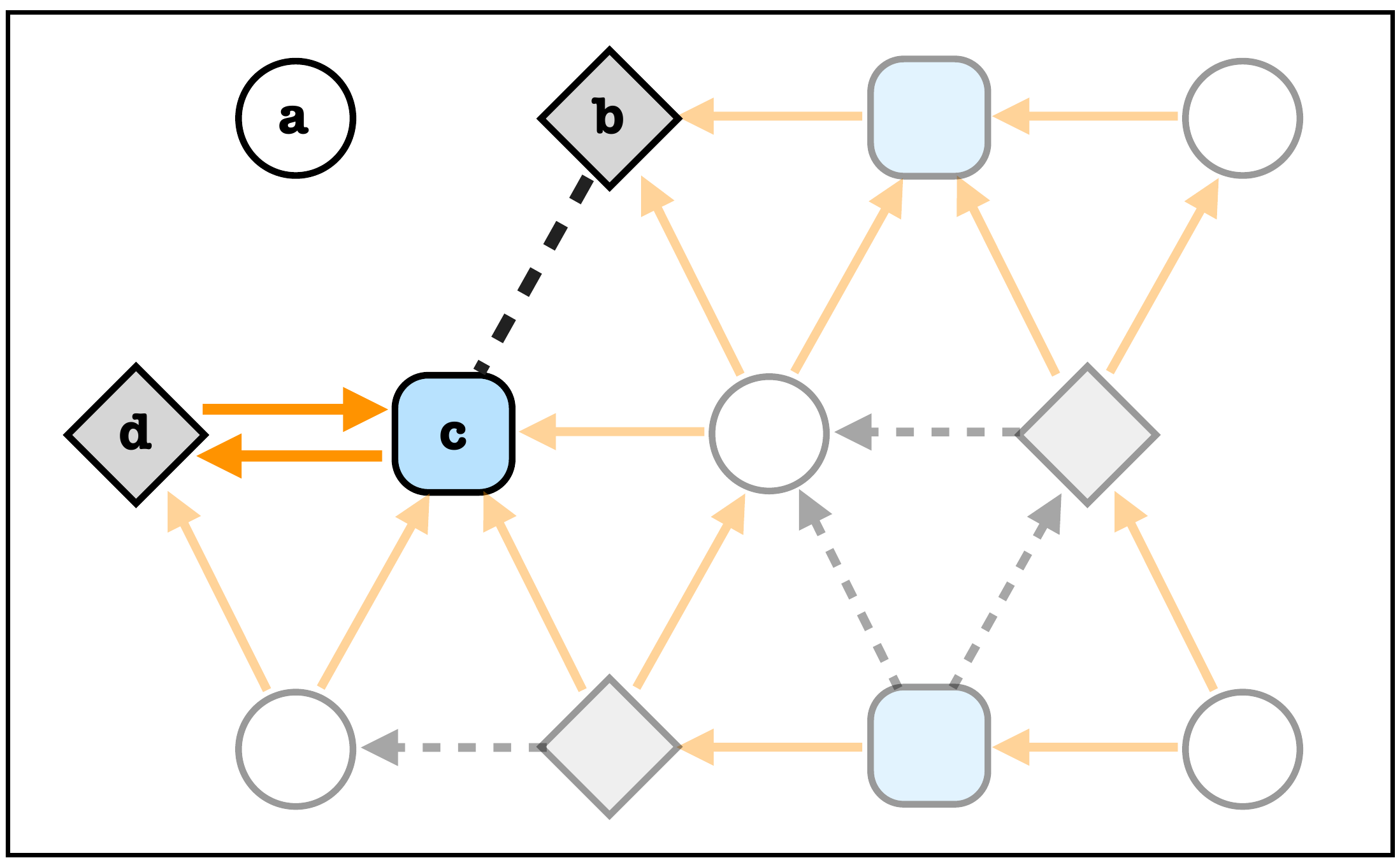}
                \caption{ The achievable scheme of Section \ref{sec:intracellA} is not topologically robust. Sector $b$ will not able to decode its own desired message if Sector $a$ does not observe  interference from its neighbors.
                }
\label{fig:notrobust}
\end{figure}
channel matrix in the joint receiver observation for the cell $\{a,b,c\}$, given by 
\begin{align}
{\bf \tilde H}' = \begin{bmatrix} \Hm_{aa}\vv_{a} & {\bf 0}_{2\times 1} & {\bf 0}_{2\times 1} & {\bf 0}_{2\times 1} \\ 
 0 & \uv_{b}^{\rm H}\Hm_{bb}\vv_{b} & \uv_{b}^{\rm H}\Hm_{bc}\vv_{c} & 0 \\ 
 0 & \uv_{c}^{\rm H}\Hm_{cb}\vv_{b} & \uv_{c}^{\rm H}\Hm_{cc}\vv_{c} & \uv_{c}^{\rm H}\Hm_{cd}\vv_{d}\end{bmatrix},\nonumber
\end{align}
is rank-deficient and hence the desired transmitted messages cannot be resolved from (\ref{vecobs}). Even though sector $a$ can always decode its own message, its observation cannot help sectors $b$ and $c$  eliminate  the remaining interference, and therefore sector $b$ cannot decode its message.

It is not surprising that the above scheme fails in this case; the receiver has been designed to rely on a {\it specific interference topology} (cf. Fig.~\ref{intracell}) in order 
obtain the required linearly-independent observations. Whenever the corresponding links are missing, the decoding process fails and therefore the scheme proposed in Section \ref{sec:intracellA} cannot be considered 
 topologically robust.


In practice, interference will never be exactly zero as in the previous example. However, 
any communication scheme that critically depends on  sufficiently strong interfering links (e.g. , such that 
the corresponding messages can be decoded and interference can be canceled) 
will suffer from significant noise enhancement in the decoding process whenever the corresponding 
channel gains are below a certain threshold. 
In this case the corresponding receiver will not be able to decode within the operating SNR range 
of the network and the weaker interference links will become the bottleneck in its performance. 

\clearpage

Under this framework, one could consider all possible channel realizations and design a family of transmission schemes, each one specifically optimized for the corresponding interference topology. Even though this is a tractable approach for small networks, it becomes more challenging as the size of the network increases. Here, we take a unified approach and propose  a topologically robust transmission scheme for large cellular systems that is able to maintain  the same performance for all network configurations, no matter if the interference links are strong or weak.

\subsection{The Compound Cellular Network}

In order to formally capture the concept of topological robustness in our cellular model, we will consider here a compound scenario in which 
any subset of the interfering links could be potentially missing from the network. 
More precisely, we focus on the sectored cellular system  $\hat{\cal G}({\cal V},\hat{\cal E})$ defined in Section~\ref{sec:intracell} and we assume that
every directed edge $[v,u]\in \hat{\cal E}$ is associated with a binary channel-state parameter  $\alpha_{uv}\in\{0,1\}$ that determines whether the corresponding  link will exist in the  network or not. 

The compound channel matrices are generated in the form of ${\alpha_{uv}\cdot\Hm_{uv}},\;\forall [v,u]\in \hat{\cal E}$,  as a function of the  channel-state configuration 
\begin{equation}
{\cal A} \triangleq \big\{\alpha_{uv}\in \{0,1\}: [v,u]\in \hat{\cal E}\big\},
\end{equation}
and the compound cellular network is defined over all possible choices of ${\cal A}\in \{0,1\}^{2|\hat{\cal E}|}$. 
We assume that the  channel-state configuration ${\cal A}$ is known to all  receivers   but is {\em a priori} unavailable\footnote{It is important to note that in the original set-up, the channel state information is available at the transmitters, and therefore obviously the channel is not compound. However, this rather artificial compound model allows us to design a \emph{unified} achievable scheme, which works, independent of the strength of interference links.} to the transmitters in the above compound network, in the sense that 
the interference alignment 
precoding scheme  (although a function of the channel matrices   $\Hm_{uv}$ and of the interference graph $\hat{\cal G}({\cal V},\hat{\cal E})$)
must be designed irrespectively of ${\cal A}$. 

A topologically robust transmission scheme is required to maintain the same performance for all channel-state parameters ${\cal A}
\in \{0,1\}^{2|\hat{\cal E}|}$. Let $\hat{\cal G}({\cal V},\hat{\cal E}\big|{\cal A})$ be the interference graph  generated in the above  compound  network when the channel-state  is ${\cal A}$ and let $d_{\hat{\cal G}}({\cal A})$ denote the average (per sector) degrees of freedom   achievable in $\hat{\cal G}({\cal V},\hat{\cal E}\big|{\cal A})$.


\clearpage

\begin{defn}
A communication scheme designed for a sectored cellular system $\hat{\cal G}({\cal V},\hat{\cal E})$ is said to be topologically robust with robustness level $d>0$,  if it can achieve    $d_{\hat{\cal G}}({\cal A})\geq d$, for all channel-state configurations ${\cal A}
\in \{0,1\}^{2|\hat{\cal E}|}$.  \hfill $\lozenge$
\end{defn}

As we have seen before, the achievable scheme described in Section \ref{sec:intracellA} is not topologically robust according to the above definition: 
even though it can achieve $d_{\hat{\cal G}}({\cal A}) = 1$, when $\alpha_{uv}=1,\forall (u,v)\in \hat{\cal E}$, there exists a configuration ${\cal A}'$, shown in Fig.~\ref{fig:notrobust}, in which the decoding process fails.

Under this framework, we are interested in the design of communication schemes that  maximize the {\it compound} degrees of freedom,
\begin{equation} 
d_{\rm C} \triangleq \min_{ {\cal A} \in \{0,1\}^{|\hat{\cal E}|}} d_{\hat{\cal G}}({\cal A}).
\end{equation}
Notice that a topologically robust scheme with robustness level $d$ achieves (by definition) the compound DoFs $d_{\rm C} = d$.  
%
The following theorems show the existence of topologically robust schemes that achieve 
the optimum compound DoFs performance, which coincides with the optimum DoFs performance in the non-compound setting 
with intra-cell interference given in Section \ref{sec:intracellA}.

\begin{thm} \label{thm-comp}
For a compound sectored cellular system  $\left\{\hat{\cal G}\big({\cal V},\hat{\cal E}\big|{\cal A}\big):{\cal A} \in \{0,1\}^{2|\hat{\cal E}|}\right\}$, in which transmitters and receivers are equipped with $M$ antennas each,  there exists a one-shot linear beamforming scheme that achieves  the average
(per sector) compound DoFs
\begin{equation}d^*_{\rm C} = \begin{cases}\frac{M}{2}, \;\;\;\;\;\;\;\;\;\;\;\mbox{$M$ is even}\\ \frac{M}{2} -\frac{1}{6}, \;\;\;\;\;\mbox{$M$ is odd}\end{cases}
\end{equation}
under the network interference cancellation framework, assuming local receiver cooperation within each cell.
\hfill \QED
\end{thm}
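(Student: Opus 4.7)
The plan is to modify the scheme of Section \ref{sec:intracellA} so that decoding inside each cell no longer hinges on the presence of any particular out-of-cell interfering link. I would first treat $M=2$ in detail and then invoke the $M=2$-to-arbitrary-$M$ lifting already used for Theorem \ref{thm:sectors} in Appendix \ref{proof:thm3}. A matching converse is immediate, since the fully-connected state $\alpha_{uv}\equiv 1$ is one of the configurations in the minimum defining $d_{\rm C}$, and Theorem \ref{thm2} (applied to the subgraph $\cal G$ obtained from $\hat{\cal G}$ by deleting intra-cell edges, which can only increase DoFs) bounds this by $1+\mathcal{O}(1/\sqrt{|{\cal V}|})$.

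For $M=2$ I would \emph{retain exactly} the transmit beamformers $\{\vv_v\}_{v\in\cal V}$ built in Section \ref{bfscheme}. Their defining recursions (\ref{Iacond1})--(\ref{Iacond3}) use only out-of-cell channel matrices and the graph topology, hence are independent of $\cal A$. For every sector $v$ I would then let $\uv_v$ be any unit vector orthogonal to the one-dimensional subspace into which $v$'s out-of-cell interferers align under the Section \ref{bfscheme} construction. Projecting $\yv_v$ onto $\uv_v$ annihilates \emph{every} out-of-cell contribution to $\yv_v$ at once, regardless of which $\alpha_{uv}$'s happen to equal zero. This is the decisive departure from Section \ref{sec:intracellA}, whose receiver at $a$ kept $\yv_a$ unprojected and relied on the out-of-cell link from $d$ to $a$ to supply a linearly independent row of $\tilde{\Hm}$ --- precisely the dependence that fails in Fig.~\ref{fig:notrobust}.

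After projection, the three sectors $\{a,b,c\}$ of each cell jointly observe a clean $3\times 3$ intra-cell MIMO channel
\[
\begin{bmatrix} \uv_a^{\rm H}\yv_a \\ \uv_b^{\rm H}\yv_b \\ \uv_c^{\rm H}\yv_c \end{bmatrix}
\;=\; \Mm \begin{bmatrix} s_a \\ s_b \\ s_c \end{bmatrix} + \text{noise}, \qquad \Mm_{ij}=\uv_i^{\rm H}\Hm_{ij}\vv_j,
\]
from which a $QL$-decomposition followed by successive decoding in the order dictated by $\pi^{*}$ extracts $s_a,s_b,s_c$. The entire procedure is manifestly oblivious to $\cal A$, so the scheme is topologically robust and delivers $d_{\rm C}=1$ per sector.

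The main technical point is that $\Mm$ is invertible with probability one. The key observation is that the intra-cell matrices $\{\Hm_{ij}:i,j\in\{a,b,c\}\}$ never enter the alignment recursions (\ref{Iacond1})--(\ref{Iacond3}), which only involve channels between out-of-cell neighbors; consequently $\{\uv_i,\vv_j\}$ are deterministic functions of channels statistically independent of the nine intra-cell $\Hm_{ij}$. Mimicking the argument given for $\tilde{\Hm}$ in Section \ref{sec:intracellA}, each entry $\uv_i^{\rm H}\Hm_{ij}\vv_j$ is then an independent, continuously-distributed scalar, so $\det(\Mm)$ is a nontrivial polynomial in independent random variables and therefore nonzero almost surely. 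The general-$M$ extension runs $\lfloor M/2\rfloor$ parallel copies of this two-antenna block (with the half-dimension leftover when $M$ is odd handled exactly as in Appendix \ref{proof:thm3}), yielding the DoF expression of the theorem.
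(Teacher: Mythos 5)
There is a genuine gap, and it sits exactly where the real difficulty of this theorem lies. Your central claim --- that projecting each $\yv_v$ onto a $\uv_v$ orthogonal to the aligned out-of-cell subspace ``annihilates every out-of-cell contribution at once,'' leaving a clean $3\times 3$ intra-cell channel $\Mm$ --- is false for the secondary sectors. The alignment conditions of Section \ref{bfscheme} only cover edges in ${\cal E}_{\pi^{*}}$, i.e.\ interference from sectors that decode \emph{after} the receiver in question. The out-of-cell transmitter $d$ of Fig.~\ref{intracell} decodes after the primary sector $a$ but \emph{before} the secondary sectors $b$ and $c$, so $[d,c]\notin{\cal E}_{\pi^{*}}$ and $\uv_c^{\rm H}\Hm_{cd}\vv_c$-type terms are not zero-forced: generically $\uv_{c}^{\rm H}\Hm_{cd}\vv_{d}\neq 0$ (this is precisely the nonzero $(3,4)$ entry of the paper's $\tilde\Hm$). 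At the moment the cell decodes jointly, $s_d$ is still unknown, so your system is three scalar equations in the four unknowns $s_a,s_b,s_c,s_d$ and is underdetermined; and $\uv_c\in\CC^2$ cannot be made orthogonal to both the aligned subspace and $\Hm_{cd}\vv_d$. Note also that projecting $\yv_a$ makes matters \emph{worse} than the (already non-robust) scheme of Section \ref{sec:intracellA}, which kept $\yv_a$ two-dimensional exactly so as to have four equations for four unknowns. Delaying the joint decoding until after $d$ has decoded does not help either, because $d$ in turn needs $a$'s decoded message to cancel $[d,a]\in{\cal E}_{\pi^{*}}$, creating a circular dependency under $\pi^{*}$.

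The paper's proof (Section \ref{robust:achievability} and Appendix \ref{comp-ach-proof}) resolves this with two ingredients your proposal does not contain. First, it replaces $\pi^{*}$ by the ``curly-S'' order $\pi_{\rm s}$, chosen so that ${\cal E}_{\pi_{\rm s}}={\cal E}_{\pi^{*}}$ (hence the same beamformers remain valid) but $d$ now decodes \emph{before} the secondary sectors; $b$ and $c$ then face only a $2\times 2$ system whose Hadamard structure $\tilde\Hm(\alpha_{bc},\alpha_{cb})$ is full rank for every state. Second, the primary sector $a$ cannot be handled by a single projection: it keeps its unprojected two-dimensional observation, is helped by $b$ and $c$, and the condition $[1,0,0,0]\in\mbox{rowspan}\bigl(\tilde\Hm(\overline\alpha)\bigr)$ is verified through an explicit four-case analysis over $\overline\alpha\in\{0,1\}^{8}$ using the Hadamard-product rank lemma (Lemma~\ref{lem:hadamard}), with different receiver processing in different cases (e.g.\ zero-forcing $d$ when $\alpha_{cd}=0$, row permutations of $\Am$ when $\alpha_{ad}=1$). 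None of this case analysis, nor the need for a modified decoding order, appears in your argument, so the proposal does not establish the theorem.
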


\begin{thm}\label{thm:converse2} 
The compound DoFs $d_{\rm C}$ of a sectored cellular system 
$\left\{\hat{\cal G}\big({\cal V},\hat{\cal E}\big|{\cal A}\big):{\cal A} \in \{0,1\}^{2|\hat{\cal E}|}\right\}$ 
achievable by any one-shot linear beamforming scheme under the network 
interference cancellation framework  are bounded by 
$\textstyle
d^{*}_{\rm C} + {\cal O}\left( \scriptstyle{1}/{{\sqrt{|{\cal V}|}}}\right)
$, where $d^{*}_{\rm C}$ is given by Theorem \ref{thm-comp}.
\hfill \QED
\end{thm}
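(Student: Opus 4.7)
\emph{Approach.} The plan is to upper-bound the compound DoFs $d_{\rm C}$ by exhibiting a specific channel-state configuration ${\cal A}^{*}$ under which the compound network reduces to the no-intra-cell graph ${\cal G}({\cal V},{\cal E})$ of Section~\ref{nointracell}, and then invoking a generalization of the converse of Theorem~\ref{thm2} that accommodates joint cell processing. Since by definition $d_{\rm C} = \min_{{\cal A}} d_{\hat{\cal G}}({\cal A}) \leq d_{\hat{\cal G}}({\cal A}^{*})$ for every choice of ${\cal A}^{*}$, and since $d_{\hat{\cal G}}({\cal A}^{*})$ is itself upper-bounded by the maximum one-shot linear DoFs attainable by any (not necessarily compound) scheme on the network $\hat{\cal G}(\cdot\,|\,{\cal A}^{*})$, the compound converse will follow from a single-configuration bound under ${\cal A}^{*}$.

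\emph{Step 1 (reduction to ${\cal G}({\cal V},{\cal E})$).} I would pick ${\cal A}^{*}$ by setting $\alpha_{uv}=0$ for every intra-cell edge $(u,v)\in\hat{\cal E}\setminus{\cal E}$ and $\alpha_{uv}=1$ for every out-of-cell edge $(u,v)\in{\cal E}$. Under this ${\cal A}^{*}$, the compound network collapses onto ${\cal G}({\cal V},{\cal E})$, so the task reduces to upper-bounding the one-shot linear DoFs achievable on ${\cal G}({\cal V},{\cal E})$ under the network interference cancellation framework when joint cell processing is allowed.

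\emph{Step 2 (triangle DoF bound with joint cell processing).} I would re-derive the pairwise/triangle DoF inequalities of Section~\ref{sec:converse1} in this joint-processing setting. For each out-of-cell edge $[v,u]\in{\cal E}_{\pi}$ with $u$ in cell $U$, the zero-forcing requirement on the $3M$-dimensional receive beamformer $\Um_{u}$ couples $\Vm_{v}$ with the blocks of $\Um_{u}$ corresponding to those sectors of $U$ that actually see $v$ as an interferer. Exploiting the fact that every out-of-cell triangle $[u,v,w]\in{\cal T}$ lies in three \emph{distinct} cells (so that the three joint receivers are decoupled across the triangle), I would set up a generalized optimization problem $\rm Q'_{1}({\cal G}_{\pi})$ whose constraints still enforce the per-triangle DoF bound $d_{u}+d_{v}+d_{w}\leq 3M/2$ (respectively a slightly weaker value for odd $M$). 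Plugging this into the triangle-sum decomposition of Lemma~\ref{lem1}, together with the boundary estimate $|{\cal V}_{\rm ex}|={\cal O}(\sqrt{|{\cal V}|})$ of Lemma~\ref{lem:2}, yields $d_{\hat{\cal G}}({\cal A}^{*})\leq d^{*}_{\rm C}+{\cal O}(1/\sqrt{|{\cal V}|})$, and Step~1's reduction then delivers the claim.

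\emph{Main obstacle.} The hardest part of this plan is establishing the per-triangle bound in the joint-cell-processing setting. The subtlety is that an out-of-cell interferer $v$ may be a common neighbor of two sectors of the same cell $U$, in which case the corresponding zero-forcing constraint couples two blocks of the $3M$-dimensional $\Um_{u}$ and grants strictly more design freedom than the per-sector constraint $d_{u}+d_{v}\leq M$ that drove the proof of Theorem~\ref{thm2}. Circumventing this requires a triangle-level analysis that exploits the specific hexagonal interference geometry (so that interferers shared between two sectors of a cell are handled jointly with the triangles in which they participate, rather than edge-by-edge), carefully exhausts the receive dimensions of each cell's joint observation, and ultimately recovers the same triangle-sum constraint underlying the converse of Section~\ref{sec:converse1}.
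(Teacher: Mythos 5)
There is a genuine gap, and it sits exactly where you flag your ``main obstacle.'' Your overall skeleton (bound $d_{\rm C}\leq d_{\hat{\cal G}}({\cal A}^{*})$ for a hand-picked configuration, then apply a triangle decomposition plus the boundary estimate $|{\cal V}_{\rm ex}|={\cal O}(\sqrt{|{\cal V}|})$) matches the paper, but your choice of ${\cal A}^{*}$ --- intra-cell links off, \emph{all} out-of-cell links on --- leaves you needing a per-triangle bound $d_{u}+d_{v}+d_{w}\leq s^{*}$ in a setting where each receiver disposes of a $3M$-dimensional joint cell observation and out-of-cell interferers can be common neighbors of two sectors of the same cell. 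You correctly observe that the pairwise constraint $d_{u}+d_{v}\leq M$ from the proof of Theorem~\ref{thm2} no longer follows edge-by-edge in this setting, but you do not actually establish the replacement triangle-level constraint; you only assert that a ``careful'' geometric analysis would recover it. That unproven step is the entire technical content of the converse, so the proposal as written does not constitute a proof.

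The paper sidesteps this difficulty by choosing ${\cal A}^{*}$ more aggressively: it switches off not only the intra-cell links but also most out-of-cell links, keeping active only the edges of a family of triangles $\hat{\cal T}$ (indexed by $z\in\Lambda_0-1$) in which every internal node participates in exactly \emph{one} triangle and the three vertices of each triangle lie in three \emph{distinct} cells. Under that configuration each surviving triangle is an isolated fully-connected $3$-user $M\times M$ interference channel whose receivers cannot pool observations, so the bound $d_{u}+d_{v}+d_{w}\leq s^{*}$ of Lemma~\ref{lem:sstar} applies verbatim, and the count $|\hat{\cal T}|\leq\frac{1}{3}|{\cal V}|$ (note: one triangle per node, not two, so no factor $1/2$ in the triangle sum) delivers $d_{\rm C}\leq\frac{1}{3}s^{*}+{\cal O}(1/\sqrt{|{\cal V}|})=d^{*}_{\rm C}+{\cal O}(1/\sqrt{|{\cal V}|})$. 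The lesson is that in a compound converse you are free to adversarially delete links until joint cell processing is provably useless; your configuration retains too many links and thereby inherits an open subproblem rather than a solved one.
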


As before, we discuss in detail the case $M = 2$ and sketch the proof of Theorem \ref{thm-comp} in the 
general case in Appendix \ref{comp-ach-proof}. Theorem \ref{thm:converse2} is proved in Appendix \ref{thm:converse2-proof}.

\subsection{Topologically Robust Achievability}\label{robust:achievability}

In this section, we  focus on the case where $M=2$ and describe a topologically robust transmission scheme for $\hat{\cal G}({\cal V},\hat{\cal E})$ that is able to achieve $d_{\rm C}=1$.
We will consider a scheme very similar to the one described in Section \ref{sec:intracellA}. 
We will use the same beamforming strategy, but consider a new decoding order that is able to guarantee topological robustness.

In the terminology to follow, we distinguish between {\it primary} and {\it secondary} sectors in our network according to their 
relative position within each cell. We say that a sector $v \in {\cal V}$ is primary 
if $v \in {\cal V}_{\rm circle}$ (i.e., it is located in the upper-left corner of a cell) and  {secondary} otherwise. 

\begin{figure}[ht]
                \centering
                \includegraphics[width=.7\columnwidth]{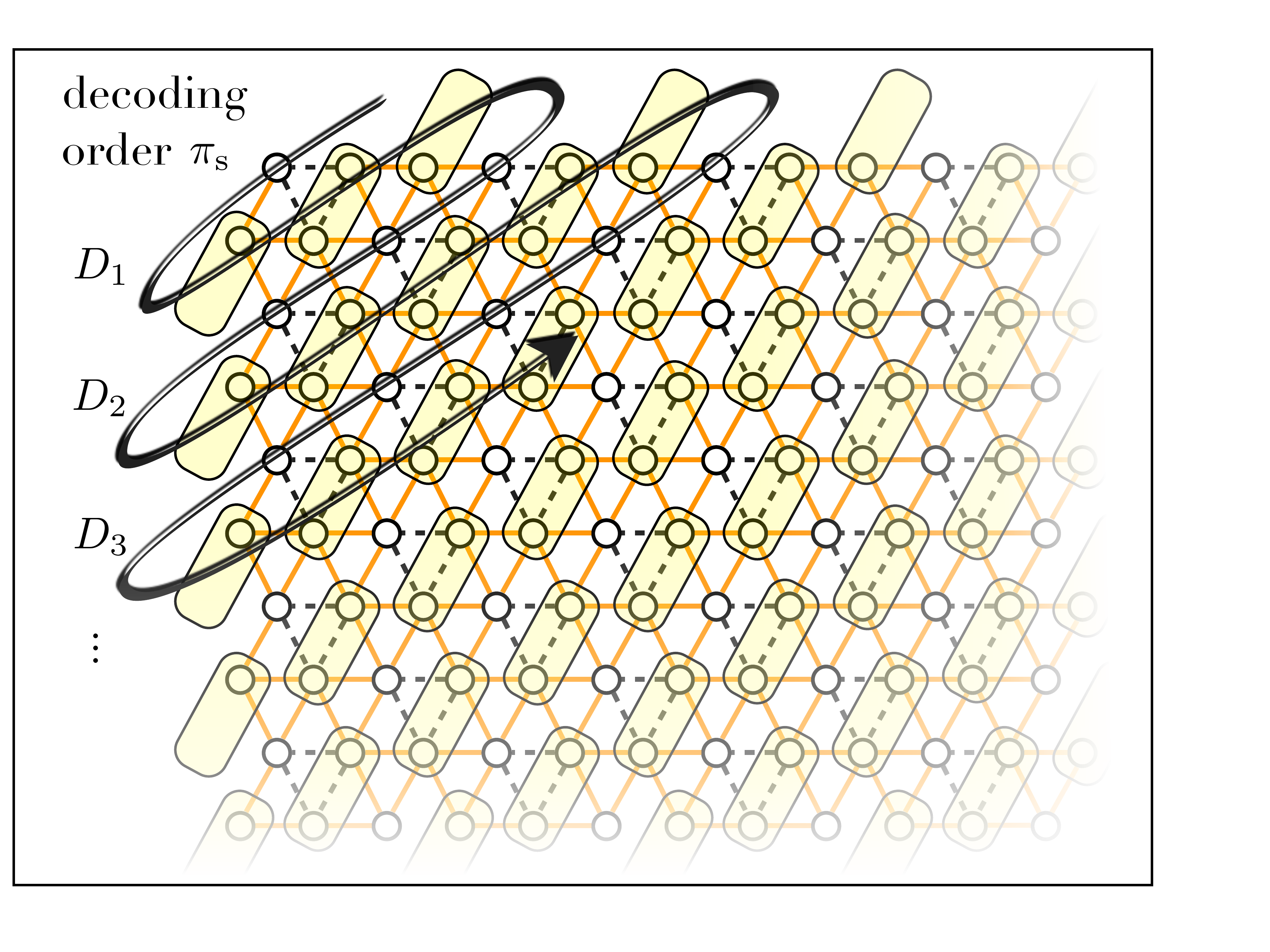}
                \caption{ The  decoding order $\pi_{\rm s}$. The highlighted sector pairs decode their messages simultaneously.
                }
\label{robustorder}

\end{figure} 
We  consider here a new decoding order under the network interference cancellation framework in which  cells  decode their messages in  diagonal groups, starting from the upper-left corner of the network. Within each group,  the cells  first decode their primary  messages (i.e, the ones associated with  primary sectors) following a top-down decoding order and then proceed to  their secondary messages which are decoded in the opposite direction. This  process leads to the ``curly-S'' decoding order shown in Fig.~\ref{robustorder} and will be denoted here as $\pi_{\rm s}$.

An important property of the above decoding order  is that it maintains, under  network interference cancellation, the same out-of-cell interfering link directions as the ``left-to-right, top-down'' decoding order $\pi^{*}$. 
We have that 
$$
{\cal E}_{\pi_{\rm s}} \triangleq \left\{[u,v] : (u,v) \in {\cal E} \mbox{ and }  v\prec_{\pi_{\rm s}}u   \right\} = {\cal E}_{\pi^*}
$$ 
and hence the  beamforming scheme designed for ${\cal E}_{\pi^{*}}$ (Section \ref{bfscheme}) can be directly  applied in this case  and satisfy the  out-of-cell  alignment conditions
\begin{equation}
\uv_{u}^{\rm H}\Hm_{uv}\vv_{v}  = 0,\;\forall [v,u]\in {\cal E}_{\pi_{\rm s}}.
\label{eq:ali}
\end{equation}
Recall the example shown in Fig.~\ref{fig:notrobust} and assume that receiver $a$ has already decoded its own message.  With the previous decoding order, $\pi^{*}$, the receivers $b$ and $c$ were unable to jointly decode their messages due to the existing interference from sector $d$. 
With the new decoding order however, this is no longer  an issue. According to $\pi_{\rm s}$, the receiver in sector $d$  will be decoded before sectors $b$ and $c$, and hence its message will be available to the corresponding receivers for interference cancellation.

\begin{figure}[ht]
                \centering
                \includegraphics[width=.5\columnwidth]{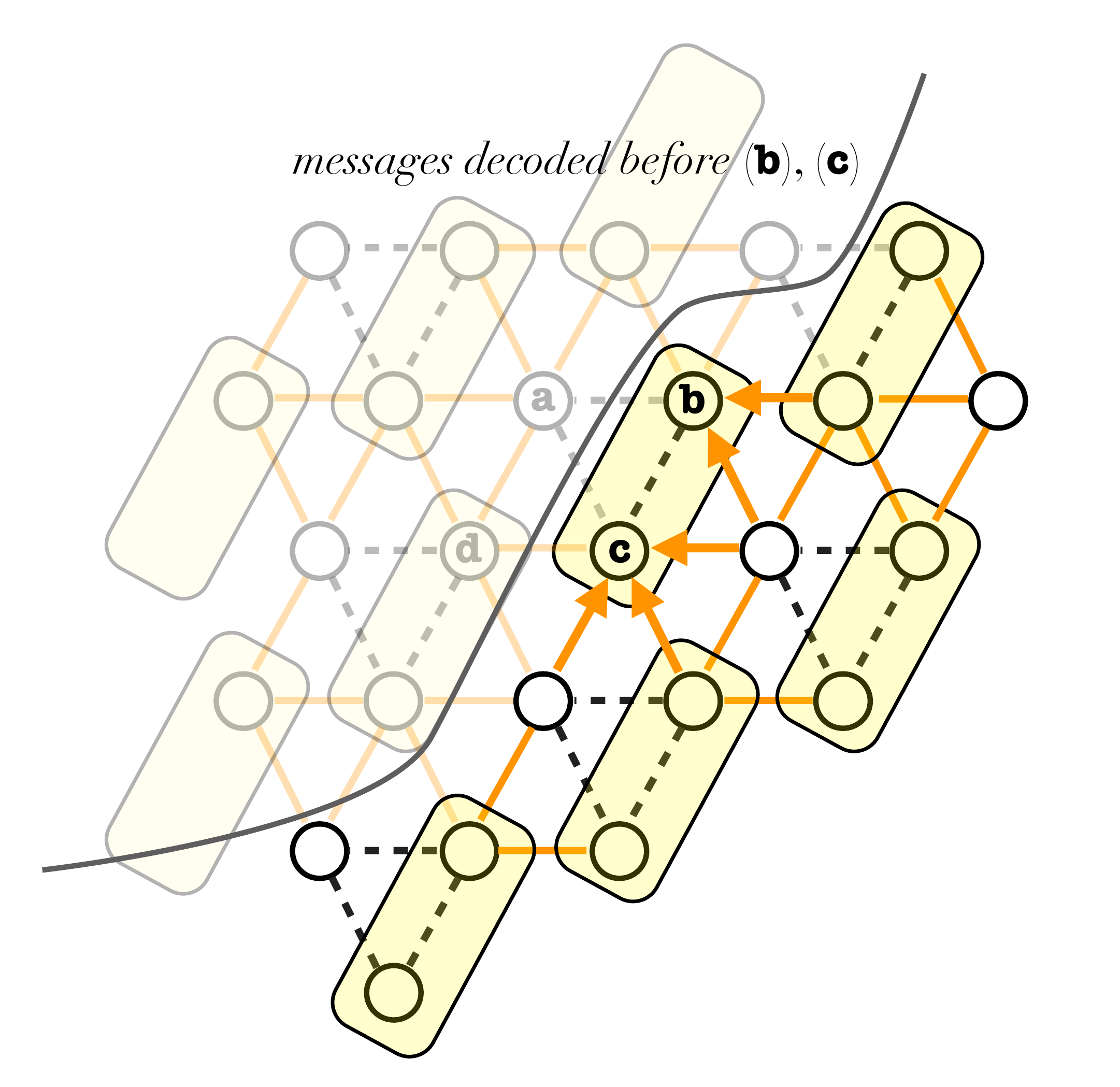}
                \caption{ Robust decoding for sectors $b$ and $c$.
                }
\label{robustdecodbc} 
\end{figure}

The network instance  described above is depicted in Fig.~\ref{robustdecodbc}. Under the network interference cancellation framework with decoding order $\pi_{\rm s}$, the receiver observations at the time when sectors $b$ and  $c$ attempt to decode  are given by
$$\yv_{b} = \Hm_{bb}\vv_{b}s_{b} +\overline\Hm_{bc}\vv_{c}s_{c} + \sum_{v: [v, b]\in {\cal E}_{\pi_{\rm s}} } \overline\Hm_{bv}\vv_{v}s_{v} +\zv_{b},$$
and
$$\yv_{c} = \Hm_{cc}\vv_{c}s_{c} +\overline\Hm_{cb}\vv_{b}s_{b} + \sum_{v: [v, c]\in {\cal E}_{\pi_{\rm s}} } \overline\Hm_{cv}\vv_{v}s_{v} +\zv_{c},$$
where $\overline\Hm_{uv} = \alpha_{uv}\cdot\Hm_{uv}$ are the compound channel matrices  with state parameters $\alpha_{uv}\in\{0,1\}.$
Notice that the secondary sectors $b$ and $c$, no longer need the primary observation from sector $a$ in order to decode their messages. 
From (\ref{eq:ali}) we have that 
$$\uv_{b}^{\rm H}\hspace{-0.1in}\sum_{v: [v, b]\in {\cal E}_{\pi_{\rm s}} } \hspace{-0.1in}\overline\Hm_{bv}\vv_{v}s_{v}=0\;\,\mbox{and}\;\,\uv_{c}^{\rm H}\hspace{-0.1in}\sum_{v: [v, c]\in {\cal E}_{\pi_{\rm s}} }\hspace{-0.1in} \overline\Hm_{cv}\vv_{v}s_{v}=0,$$
for all compound channel states and hence the corresponding observations can be written in vector form as
\begin{equation}
\begin{bmatrix}
\uv_{b}^{\rm H}\yv_{b}\\
\uv_{c}^{\rm H}\yv_{c}
\end{bmatrix}=\underbrace{
\begin{bmatrix}  
\uv_{b}^{\rm H}\Hm_{bb}\vv_{b} & \uv_{b}^{\rm H}\overline\Hm_{bc}\vv_{c}  \\ 
 \uv_{c}^{\rm H}\overline\Hm_{cb}\vv_{b} & \uv_{c}^{\rm H}\Hm_{cc}\vv_{c}\end{bmatrix}}_{\triangleq \tilde\Hm(\alpha_{bc},\alpha_{cb})}
 \begin{bmatrix}
s_{b}\\
s_{c}
\end{bmatrix} 
+\tilde\zv.
\end{equation}

The equivalent channel matrix $\tilde\Hm(\alpha_{bc},\alpha_{cb})\in \CC^{2\times 2}$ given in the above observation depends on the  compound channel-state parameters $\alpha_{bc},\alpha_{cb}\in \{0,1\}$, which determine whether sectors $b$ and $c$ interfere with each other or not. We can see that the resulting channel matrices,
\begin{equation*}
\tilde\Hm(0,0) =  \begin{bmatrix}  
\uv_{b}^{\rm H}\Hm_{bb}\vv_{b} & 0  \\ 
 0& \uv_{c}^{\rm H}\Hm_{cc}\vv_{c}\end{bmatrix}, \;\;
\tilde\Hm(1,0) =  \begin{bmatrix}  
\uv_{b}^{\rm H}\Hm_{bb}\vv_{b} & \uv_{b}^{\rm H}\Hm_{bc}\vv_{c}  \\ 
 0 & \uv_{c}^{\rm H}\Hm_{cc}\vv_{c}\end{bmatrix},
\end{equation*}
\begin{equation*}
\tilde\Hm(0,1) =  \begin{bmatrix}  
\uv_{b}^{\rm H}\Hm_{bb}\vv_{b} &  0   \\ 
 \uv_{c}^{\rm H}\Hm_{cb}\vv_{b}& \uv_{c}^{\rm H}\Hm_{cc}\vv_{c}\end{bmatrix}, \;\;
\tilde\Hm(1,1) =  \begin{bmatrix}  
\uv_{b}^{\rm H}\Hm_{bb}\vv_{b} & \uv_{b}^{\rm H}\Hm_{bc}\vv_{c}  \\ 
 \uv_{c}^{\rm H}\Hm_{cb}\vv_{b} & \uv_{c}^{\rm H}\Hm_{cc}\vv_{c}\end{bmatrix},
\end{equation*}
are all full-rank, 
and hence the receivers in sectors $b$ and $c$ are always able to decode their messages, irrespective of the compound channel-state parameters.

Similarly, we can show that all transmitted messages in $\hat{\cal G}\big({\cal V},\hat{\cal E}\big|{\cal A}\big)$ associated with  secondary sectors, can be successfully decoded according to $\pi_{\rm s}$, for all compound channel-state configurations ${\cal A} \in \{0,1\}^{2|\hat{\cal E}|}$.
It remains to argue that primary sectors are also able to decode their messages in the above compound network and hence show that the average (per sector) degrees of freedom $d^*_{\rm C} =1$ are achievable.

Consider the cell $\{a,b,c\}$ shown in Fig.~\ref{robustdecoda} just before its primary sector receiver $a$ attempts to decode. According to $\pi_{\rm s}$, the available receiver observations in this cell are given by 
\begin{figure}[ht]
                \centering
                \includegraphics[width=.5\columnwidth]{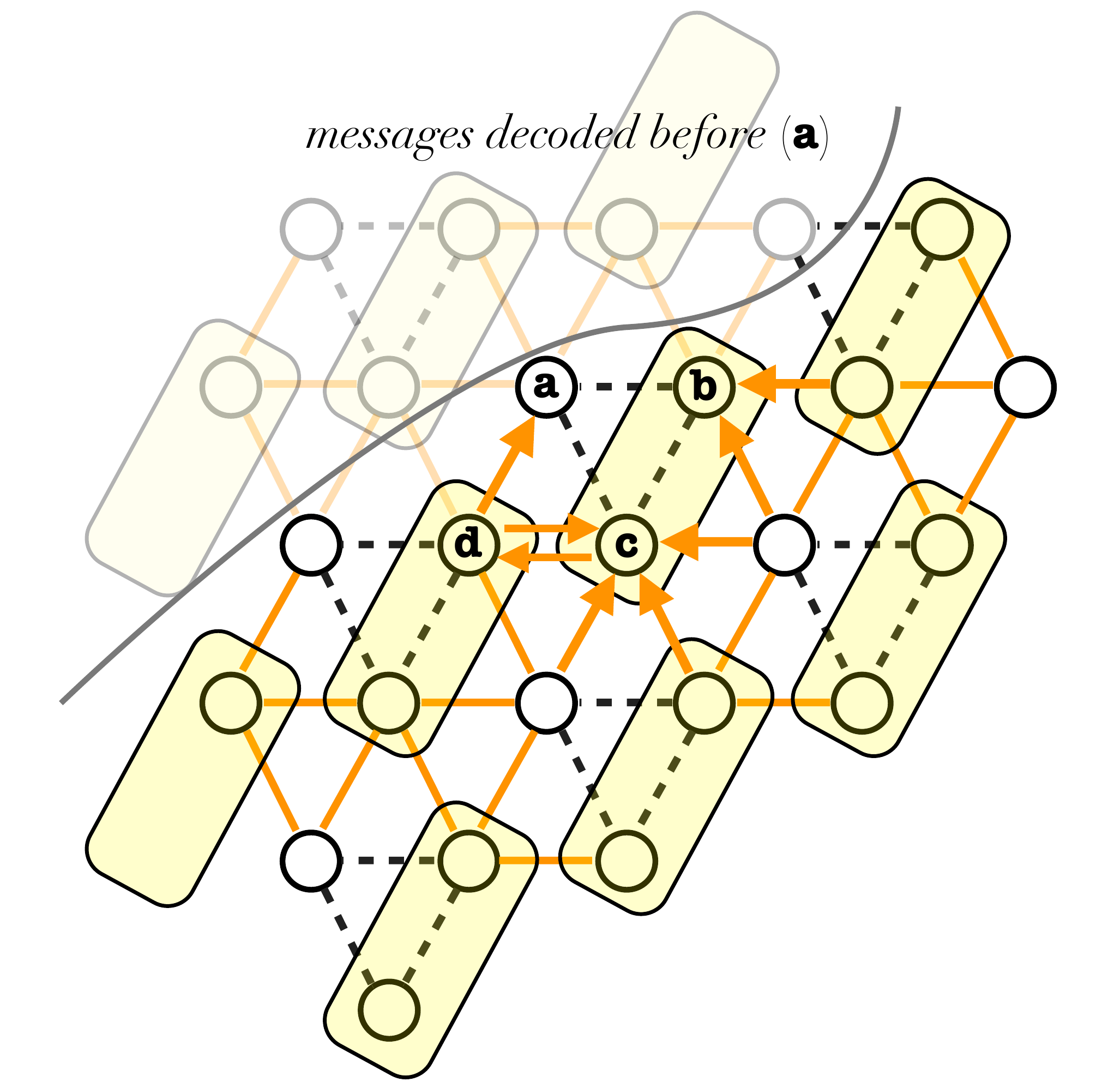}
                \caption{ Robust decoding for sector $a$.
                }
\label{robustdecoda}
\end{figure}
\begin{align*}
&\yv_{a} = \Hm_{aa}\vv_{a}s_{a} + \sum_{v\in \{b,c,d\}}\overline\Hm_{av}\vv_{v}s_{v} + \zv_{a},\\
&\uv_{b}^{\rm H}\yv_{b}= \uv_{b}^{\rm H}\Hm_{bb}\vv_{b}s_{b} + \sum_{v\in \{a,c\}}\uv_{b}^{\rm H}\overline\Hm_{bv}\vv_{v}s_{v} + \uv_{b}^{\rm H}\zv_{b},\\
&\uv_{c}^{\rm H}\yv_{c} = \uv_{c}^{\rm H}\Hm_{cc}\vv_{c}s_{c} + \sum_{v\in \{a,b,d\}}\uv_{c}^{\rm H}\overline\Hm_{cv}\vv_{v}s_{v} + \uv_{c}^{\rm H}\zv_{c},
\end{align*}
where all interference coming from sectors $v\prec_{\pi_{\rm s}}a$ has already been eliminated.
The above observations can be written in vector form (cf. Eq.~\ref{vecobs}) as, 
\begin{equation}
\tilde\yv = {\bf \tilde H}(\overline\alpha)\,\tilde\sv + \tilde\zv,
\label{vecobsa}
\end{equation}
where ${\bf \tilde H}(\overline\alpha)$ depends  on the channel-state parameters $\overline\alpha\triangleq [\alpha_{ab},\alpha_{ac},\alpha_{ad},\alpha_{ba},\alpha_{bc},\alpha_{ca},\alpha_{cb},\alpha_{cd}]\in \{0,1\}^{8}$ and is given by  
\begin{align}
{\bf \tilde H}(\overline\alpha) =  \begin{bmatrix} \Hm_{aa}\vv_{a} & \overline\Hm_{ab}\vv_{b} & \overline\Hm_{ac}\vv_{c} & \overline\Hm_{ad}\vv_{d} \\ 
 \uv_{b}^{\rm H}\overline\Hm_{ba}\vv_{a} & \uv_{b}^{\rm H}\Hm_{bb}\vv_{b} & \uv_{b}^{\rm H}\overline\Hm_{bc}\vv_{c} & 0 \\ 
 \uv_{c}^{\rm H}\overline\Hm_{ca}\vv_{a} & \uv_{c}^{\rm H}\overline\Hm_{cb}\vv_{b} & \uv_{c}^{\rm H}\Hm_{cc}\vv_{c} & \uv_{c}^{\rm H}\overline\Hm_{cd}\vv_{d}\end{bmatrix}.
 \nonumber
\end{align}

Notice that  ${\bf \tilde H}(\overline\alpha)$ has the same structure as the matrix $\tilde\Hm$ we considered in Section~\ref{bfscheme}, and as we have already seen in the example of  Fig.~\ref{fig:notrobust}, there exist channel-state configurations (e.g, $\overline\alpha=[0,0,0,0,1,0,1,1])$,  for which ${\bf \tilde H}(\overline\alpha)$ becomes rank-deficient.
However, this is not necessarily a problem here, since we are only interested in decoding the primary sector's message $s_{a}$. In this case, we just have to guarantee that the following condition,
\begin{equation}
[1,0,0,0] \in \mbox{rowspan}\big({\bf \tilde H}(\overline\alpha)\big),
\label{span}
\end{equation}
holds for every channel-state configuration $\overline\alpha$. Of~course, when 
$\overline\alpha$ is the all-ones vector,
the matrix ${\bf \tilde H}(\overline\alpha)\in \CC^{4\times 4}$ is full-rank and the above condition is automatically satisfied.

In order to show that the primary sector's message can be always be decoded and that (\ref{span}) holds  for all  $\overline\alpha\in \{0,1\}^{8}$, we  will consider here the following  cases:

\vspace{0.1in}
\begin{enumerate}

\item $\alpha_{cd}=0$, for all $[\alpha_{ab},\alpha_{ac},\alpha_{ad},\alpha_{ba},\alpha_{bc},\alpha_{ca},\alpha_{cb}]\in \{0,1\}^{7}$.
\vspace{0.1in}
\item $\alpha_{ad}=1$, for all $[\alpha_{ab},\alpha_{ac},\alpha_{ba},\alpha_{bc},\alpha_{ca},\alpha_{cb},\alpha_{cd}]\in \{0,1\}^{7}$.
\vspace{0.1in}
\item $[\alpha_{cd},\alpha_{ad}]=[1,0]$, for all $[\alpha_{ab},\alpha_{ac},\alpha_{ad},\alpha_{ba},\alpha_{bc},\alpha_{ca},\alpha_{cb},\alpha_{cd}]$ with $\alpha_{ab}\cdot\alpha_{ac}=0$.
\vspace{0.1in}
\item $[\alpha_{cd},\alpha_{ad}]=[1,0]$, for all $[\alpha_{ab},\alpha_{ac},\alpha_{ad},\alpha_{ba},\alpha_{bc},\alpha_{ca},\alpha_{cb},\alpha_{cd}]$ with $\alpha_{ab}\cdot\alpha_{ac}=1$.
\end{enumerate}
\vspace{0.1in}

Notice that these four cases (illustrated in Fig.~\ref{cases})  cover all possible compound channel-state configurations for the interfering links between the sectors $a$, $b$, $c$, and $d$. Before proceeding to examine these cases separately,  we give a lemma that will be repeatedly used. 

\begin{figure}[ht]
                \centering
                \includegraphics[width=.8\columnwidth]{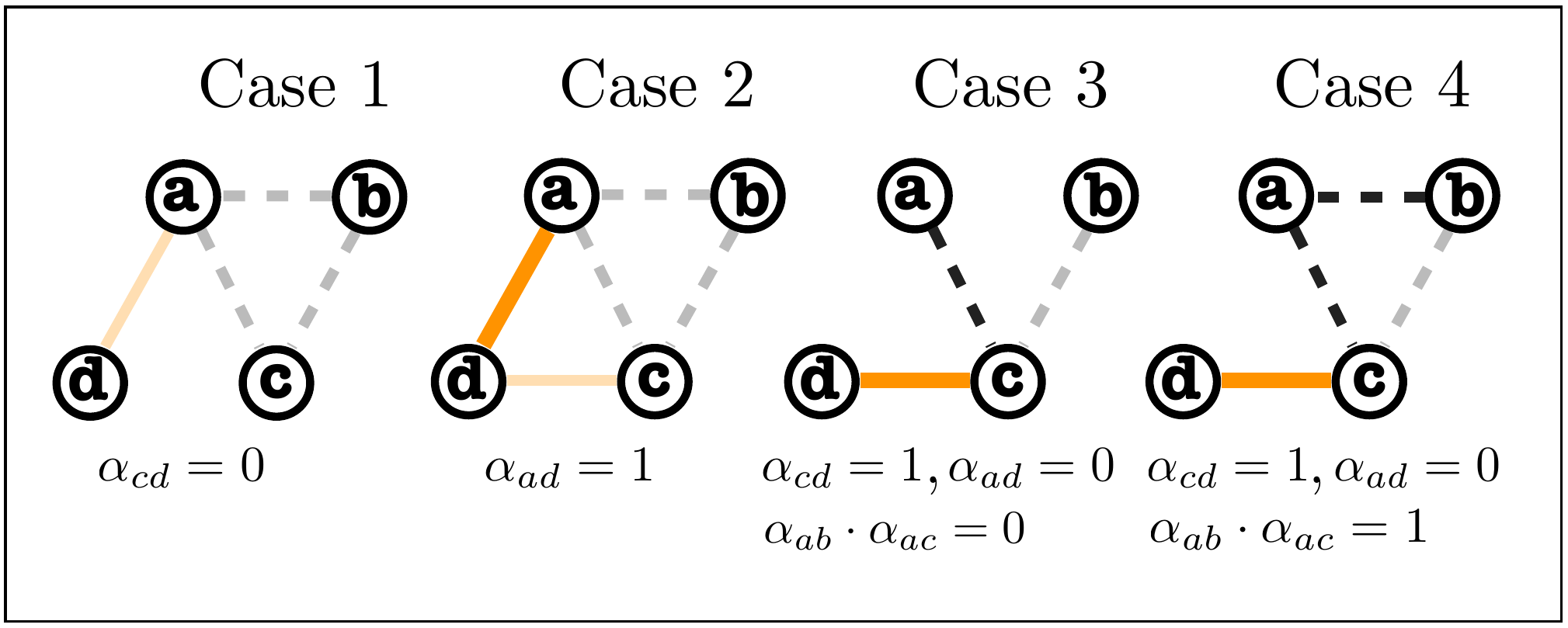}
                \caption{Compound channel-state configurations for sectors $a$, $b$, $c$ and $d$. Case 1 captures all parameter configurations $\overline\alpha$, in which there is no interference between sectors $c$ and $d$, and Case 2 corresponds to configurations 
                with $\alpha_{ad}=1$.
                In Cases 3 and 4, we assume that $[\alpha_{cd},\alpha_{ad}]=[1,0]$ 
and distinguish between configurations in which sector $a$ observes at most one,  or two interfering signals.                }
\label{cases}
\end{figure}

\clearpage

\begin{lemma}
Let $\Hm$ be an $n\times n$  matrix whose elements are chosen independently at random from a continuous probability distribution. For any binary matrix $\Am\in\{0,1\}^{n\times n}$ with  diagonal elements $a_{ii}=1$, $i=1,...,n$, the rank of the Hadamard (pointwise) product $(\Am\circ\Hm)$ is equal to $n$ with probability one.
\label{lem:hadamard}
\end{lemma}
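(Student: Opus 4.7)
The plan is to view $\det(\Am\circ\Hm)$ as a polynomial in the entries of $\Hm$, exhibit a monomial that cannot be cancelled, and then invoke the standard fact that the zero set of a non-trivial polynomial has Lebesgue measure zero.

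First I would expand using the Leibniz formula:
\begin{equation*}
\det(\Am\circ\Hm) \;=\; \sum_{\sigma \in S_n}\text{sgn}(\sigma)\prod_{i=1}^{n} a_{i,\sigma(i)}\, h_{i,\sigma(i)}.
\end{equation*}
Treating this as a polynomial $p$ in the $n^2$ indeterminates $\{h_{ij}\}$, each permutation $\sigma$ contributes either the zero term (whenever some $a_{i,\sigma(i)} = 0$) or the monomial $\text{sgn}(\sigma)\prod_i h_{i,\sigma(i)}$. The key observation is that distinct permutations produce distinct monomials, since the multiset of indices $\{(i,\sigma(i))\}$ uniquely encodes the graph of $\sigma$; hence no cancellation between different permutations is possible.

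Next I would isolate the identity permutation: because $a_{ii}=1$ for every $i$ by hypothesis, the monomial $h_{11}h_{22}\cdots h_{nn}$ appears in $p$ with coefficient exactly $\text{sgn}(\text{id}) = +1$. Consequently $p$ is not the zero polynomial in $\{h_{ij}\}$, regardless of which off-diagonal entries of $\Am$ are zero.

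Finally, since $p$ is a non-trivial polynomial in $n^{2}$ variables, its zero set $\{\Hm : p(\Hm) = 0\}$ is a proper algebraic subvariety of $\mathbb{C}^{n^{2}}$ (or $\mathbb{R}^{n^{2}}$) and therefore has Lebesgue measure zero. As the entries of $\Hm$ are drawn independently from a continuous distribution, the joint law of $\Hm$ is absolutely continuous with respect to Lebesgue measure and assigns probability zero to any Lebesgue-null set. Hence $\det(\Am\circ\Hm)\neq 0$ almost surely, and $(\Am\circ\Hm)$ has rank $n$ with probability one.

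There is no real obstacle here; the only subtle point worth underlining is the non-cancellation of the identity monomial, which follows immediately from the fact that each permutation contributes a square-free monomial that is a bijective record of $\sigma$ itself. Everything else is the standard ``generic non-vanishing'' argument.
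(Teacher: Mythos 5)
Your proof is correct and follows essentially the same route as the paper's: a Leibniz expansion of $\det(\Am\circ\Hm)$, isolating the identity permutation's monomial $h_{11}h_{22}\cdots h_{nn}$ (which survives because $a_{ii}=1$ and cannot be cancelled by any other permutation), and then the standard measure-zero argument for the zero set of a non-trivial polynomial. Your explicit remark that distinct permutations yield distinct monomials is a slightly more careful justification of the non-cancellation step than the paper gives, but the argument is the same.
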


\begin{proof} 
Let $\Gm = (\Am\circ\Hm)$ and define the multivariate polynomial 
$Q(h_{1,1},h_{1,2},..., h_{n,n})$ as being equal to $\mbox{det}(\Gm)$.
Using the Leibnitz formula for the determinant we have that 
\begin{align}
Q(h_{1,1},h_{1,2},..., h_{n,n}) & = \sum_{\sigma\in S_{n}}\mbox{sgn}(\sigma)\prod_{i=1}^{n}G_{i,\sigma(i)}\\
&=\sum_{\sigma\in S_{n}}\mbox{sgn}(\sigma)\prod_{i=1}^{n}a_{i,\sigma(i)}\prod_{i=1}^{n}h_{i,\sigma(i)}\\
&=\prod_{i=1}^{n}h_{i,i} +\sum_{\sigma\in S_{n}\backslash\{\sigma^{*}\}}\mbox{sgn}(\sigma)\prod_{i=1}^{n}a_{i,\sigma(i)}\prod_{i=1}^{n}h_{i,\sigma(i)}
\end{align}
and hence $Q(h_{1,1},h_{1,2},..., h_{n,n})\not\equiv 0$, for all $\Am$ with $a_{i,i}=1$.
Further, assuming that $h_{i,j}$ are chosen independently at random from a continuous distribution we have that $$\PP[Q(h_{1,1},h_{1,2},..., h_{n,n})\neq0] = 1,$$ and therefore the matrix $\Gm=(\Am\circ\Hm)$ is  full-rank with probability one.
\end{proof}

{\it Case~1:} When $\alpha_{cd}=0$, the receiver $a$ can first zero-force the interference from sector $d$ and obtain 
$$\uv_{a}^{\rm H}\yv_{a}= \uv_{a}^{\rm H}\Hm_{aa}\vv_{a}s_{a} + \sum_{v\in \{b,c\}}\uv_{a}^{\rm H}\overline\Hm_{av}\vv_{v}s_{v} + \uv_{a}^{\rm H}\zv_{a}.$$
Then it can use the projected observations from sectors $b$ and $c$, which are given in this case by
\begin{align*}
&\uv_{b}^{\rm H}\yv_{b}= \uv_{b}^{\rm H}\Hm_{bb}\vv_{b}s_{b} + \sum_{v\in \{a,c\}}\uv_{b}^{\rm H}\overline\Hm_{bv}\vv_{v}s_{v} + \uv_{b}^{\rm H}\zv_{b},\\
&\uv_{c}^{\rm H}\yv_{c} = \uv_{c}^{\rm H}\Hm_{cc}\vv_{c}s_{c} + \sum_{v\in \{a,b\}}\uv_{c}^{\rm H}\overline\Hm_{cv}\vv_{v}s_{v} + \uv_{c}^{\rm H}\zv_{c},
\end{align*}
in order to create a three-dimensional vector observation 
of the form 

\begin{equation*}
\begin{bmatrix}
\uv_{b}^{\rm H}\yv_{b}\\
\uv_{b}^{\rm H}\yv_{b}\\
\uv_{c}^{\rm H}\yv_{c}
\end{bmatrix}=\underbrace{
\begin{bmatrix}  
\uv_{a}^{\rm H}\Hm_{aa}\vv_{a} &\uv_{a}^{\rm H}\overline\Hm_{ab}\vv_{b} & \uv_{a}^{\rm H}\overline\Hm_{ac}\vv_{c}  \\ 
\uv_{b}^{\rm H}\overline\Hm_{ba}\vv_{a} &\uv_{b}^{\rm H}\Hm_{bb}\vv_{b} & \uv_{b}^{\rm H}\overline\Hm_{bc}\vv_{c}  \\ 
  \uv_{c}^{\rm H}\overline\Hm_{ca}\vv_{a}  &\uv_{c}^{\rm H}\overline\Hm_{cb}\vv_{b} & \uv_{c}^{\rm H}\Hm_{cc}\vv_{c}\end{bmatrix}}_{\triangleq \tilde\Hm(\alpha_{ab},\alpha_{ac},\alpha_{ba},\alpha_{bc},\alpha_{ca},\alpha_{cb})}
 \begin{bmatrix}
 s_{a}\\
s_{b}\\
s_{c}
\end{bmatrix} 
+\tilde\zv.
\end{equation*}
Notice that $\tilde\Hm(\alpha_{ab},\alpha_{ac},\alpha_{ba},\alpha_{bc},\alpha_{ca},\alpha_{cb})$ 
can be written as the pointwise product $(\Am\circ\Hm)=$
\begin{equation*}
\begin{bmatrix}  
1 &\alpha_{ab} & \alpha_{ac}  \\ 
\alpha_{ba} &1 & \alpha_{bc}  \\ 
  \alpha_{ca}   &\alpha_{cb} & 1\end{bmatrix}\circ
\begin{bmatrix}  
\uv_{a}^{\rm H}\Hm_{aa}\vv_{a} &\uv_{a}^{\rm H}\Hm_{ab}\vv_{b} & \uv_{a}^{\rm H}\Hm_{ac}\vv_{c}  \\ 
\uv_{b}^{\rm H}\Hm_{ba}\vv_{a} &\uv_{b}^{\rm H}\Hm_{bb}\vv_{b} & \uv_{b}^{\rm H}\Hm_{bc}\vv_{c}  \\ 
  \uv_{c}^{\rm H}\Hm_{ca}\vv_{a}  &\uv_{c}^{\rm H}\Hm_{cb}\vv_{b} & \uv_{c}^{\rm H}\Hm_{cc}\vv_{c}\end{bmatrix},
 \end{equation*}
 where $\Hm$ and $\Am$ satisfy the conditions of Lemma~\ref{lem:hadamard}, and hence  
 it is full-rank for all channel-state parameters $[\alpha_{ab},\alpha_{ac},\alpha_{ba},\alpha_{bc},\alpha_{ca},\alpha_{cb}]$. We can therefore argue that  receiver $a$ is  always able in this case to decode its desired message from the above joint observation.

{\it Case 2:} When $\alpha_{ad}=1$, the equivalent channel matrix $\tilde\Hm(\overline \alpha)$ is 
going to be full-rank for every choice of $[\alpha_{ab},\alpha_{ac},\alpha_{ba},\alpha_{bc},\alpha_{ca},\alpha_{cb},\alpha_{cd}]\in \{0,1\}^{7}$ and hence $s_{a}$ can be decoded directly from (\ref{vecobsa}).  In order to show this we will first write the matrix 
$\tilde\Hm(\overline \alpha)$
in its product form $(\Am\circ\tilde \Hm)$, where
\begin{equation*}
\Am=\begin{bmatrix}  
1 &\alpha_{ab} & \alpha_{ac} & 1  \\ 
1 &\alpha_{ab} & \alpha_{ac} & 1 \\ 
\alpha_{ba} &1 & \alpha_{bc} & 0\\ 
  \alpha_{ca}   &\alpha_{cb} & 1 &\alpha_{cd} \end{bmatrix},
 \end{equation*}
 and consider a  permutation matrix $\Pm_{\sigma}$ that reorders the rows of $\Am$ according to  
 $\sigma(1)=4$,  $\sigma(2)= 1$, $\sigma(3)= 2$, $\sigma(4)= 3$.
 We have that \begin{align*} 
 \mbox{rank}(\Am\circ\tilde \Hm) &= \mbox{rank}\big(\Pm_{\sigma}(\Am\circ\tilde \Hm)\big)\\
 &=\mbox{rank}(\Pm_{\sigma}\Am\circ\Pm_{\sigma}\tilde \Hm)
 \end{align*}
 and since $[\Pm_{\sigma}\Am]_{ii} =1$,  $\forall i$,  we can use Lemma~\ref{lem:hadamard} to show that  
 the above matrix  is indeed full-rank for any choice of channel-state parameters $[\alpha_{ab},\alpha_{ac},\alpha_{ba},\alpha_{bc},\alpha_{ca},\alpha_{cb},\alpha_{cd}]\in \{0,1\}^{7}$.

{\it Case 3:} When $\alpha_{ad}=0$ and $[\alpha_{ab},\alpha_{ac}]=[0,0]$,  receiver $a$ observes no interference and can  directly decode its own message. Now, when $[\alpha_{ab},\alpha_{ac}]\in \{[0,1],$ $[1,0]\}$, the receiver $a$ has only one interfering link which can always be zero-forced from its two-dimensional
observation $\yv_{a}$.  Without loss of generality, assume that $\alpha_{ac}=1$ and choose $\uv_{a}\in \CC^{2}$ such that $\uv_{a}^{\rm H}\Hm_{ac}\vv_{c}=0$. Then, 
\vspace{-0.1in}
$$\uv_{a}^{\rm H}\yv_{a} = \uv_{a}^{\rm H}\Hm_{aa}\vv_{a}s_{a}  + \uv_{a}^{\rm H}\zv_{a},$$ and since $\uv_{a}^{\rm H}\Hm_{aa}\vv_{a}\neq 0$ with probability one,  the message $s_{a}$ can be decoded in this case as well.

{\it Case 4:} In this case, the equivalent channel matrix $\tilde\Hm(\overline \alpha)$ can be written as $(\Am\circ\tilde \Hm)$, with
\begin{equation*}
\Am=\begin{bmatrix}  
1 &1 & 1 & 0  \\ 
1 &1 & 1 & 0 \\ 
\alpha_{ba} &1 & \alpha_{bc} & 0\\ 
\alpha_{ca}   &\alpha_{cb} & 1 &1 \end{bmatrix},
 \end{equation*}
 and, as in Case~2, we can use Lemma~3 to show that it is full-rank, by swapping the second and third rows of $\Am$. Hence, $s_{a}$ can be decoded from (\ref{vecobsa}) for all $[\alpha_{ba},\alpha_{bc},\alpha_{ca},\alpha_{cb}]\in\{0,1\}^{4}$.

\section{Conclusions} \label{conclusions}

In this work we have shown that the promised DoFs gain of interference alignment can be achieved in cellular networks with straightforward one-shot alignment precoding, without requiring 
symbol extensions over very large number of time-frequency dimensions, or infinite resolution of ``rationally independent'' signal levels. 
In particular, we have shown schemes that achieve $1/2$ DoFs per antenna in the uplink of a cellular system
with three sectors per cell and one active user per sector, where both the user transmitter and the sector receiver have $M$ 
antennas. Our result applies immediately to the case of $M$ even, while it requires extension over two time/frequency varying slots
for $M$ odd. 
Furthermore, for the case where there is (possibly) interference between
sectors of the same cell, we have considered a scheme that exploits joint processing (in fact, successive decoding is sufficient)
of the three sectors in the same cell and achieves the same optimal DoFs. Finally, for this scenario we have defined the notion of
``topological robustness'' of a scheme, as the ability to achieve fixed average DoFs irrespectively of the presence/absence 
of the interfering links. We have shown that topologically robust one-shot linear schemes exist, which achieve 
the same optimal DoFs of the original network where all links are present.

The key technology enabler to achieve these results is to allow  base stations  to share their own locally decoded messages with their neighboring 
base station receivers. This framework
is very different from joint processing of all the cell sites 
as advocated in the so-called ``Wyner model'', which requires  all received signals to be jointly processed at a single central processor. 
As a matter of fact, both joint processing of same-cell sectors and message passing of already (individually) decoded messages to neighboring cells
can be implemented in current cellular technology. 
Therefore, we believe that the results of this paper are not only a step forward in the understanding of the true potential of interference alignment 
in wireless networks, but also provide practical and valuable system design guidelines towards a much more efficient 
interference management in large wireless networks.


\appendices
 
\section{Proof of Theorem 1}\label{proofthm1}

Consider the directed interference graph ${\cal G}_{\pi^{*}}\big(\cal V,{\cal E}_{\pi^{*}}\big)$
and let $\Vm_{v},\Um_{v}\in \CC^{M\times d_{v}}$ denote the transmit and receive beamforming matrices associated with each node $v\in \cal V$.

We will show here that it is possible to choose $d_{v}$, $\Vm_{v}$ and $\Um_{v}$ for every  $v\in \cal V$ such that the following conditions are satisfied.
\begin{align}
&\Um_{u}^{\rm H}\Hm_{uv}\Vm_{v} = 0,\; \forall [v,u] \in \cal E_{\pi^{*}}\label{eq:condAAA}\\
&\mbox{rank}\left(\Um_{v}^{\rm H}\Hm_{vv}\Vm_{v}\right) = d_{v},\; \forall v\in \cal V, \;\mbox{and}
\label{eq:condBBB}\end{align}
\begin{equation}\frac{1}{|{\cal V}|} \sum_{v\in \cal V}d_{v} \geq \begin{cases}\frac{M}{2}, \;\;\;\;\;\;\;\;\;\;\;\mbox{$M$ is even}\\ \frac{M}{2} -\frac{1}{6}, \;\;\;\;\;\mbox{$M$ is odd.}\end{cases}
\label{eq:condCCC}
\end{equation}

Recall the definitions of  $f(\cdot)$ and $\phi(\cdot)$ that are given in Section~\ref{igraph} and consider the sets
\begin{equation}{\cal V}_{k} = \{v\in {\cal V} : f(\phi(v))=k\},\; k=0,1,2.\label{Vk}\end{equation}
Notice that ${\cal V}_{k}$ satisfy
$${\cal V} = \bigcup_{k=0}^{2}{\cal V}_{k} \;\;\mbox{and}\;\; {\cal V}_{i}\cap{\cal V}_{j} = \emptyset,\; i\neq j$$
and hence form a partition of $\cal V$.

An important observation is that, according to ${\cal E}_{\pi^{*}}$, every receiver associated with a node $u\in{\cal V}_{1}$ has at most one interfering transmitter. 
More specifically, for every $u\in{\cal V}_{1}$ there exist  $[v,u]\in {\cal E}_{\pi^{*}}$ if and only if  there exist $v\in {\cal V}_{2}$ with $\phi(v) = \phi(u)-1-\omega$. 

Similarly, the  receivers associated with the nodes $u\in{\cal V}_{2}$ have at most two interfering transmitters and hence we can argue that for every $u\in{\cal V}_{2}$ there exist $[v,u]\in {\cal E}_{\pi^{*}}$ if and only if  there exist $v_{0}\in {\cal V}_{0}$ with $\phi(v_{0}) = \phi(u)+1$ or  $v_{1}\in {\cal V}_{1}$ with $\phi(v_{1}) = \phi(u)-\omega$. 

Finally, every receiver associated with a node $u\in{\cal V}_{0}$
observes at most three interferers and we have that for every  $u\in{\cal V}_{0}$ there exist $[v,u]\in {\cal E}_{\pi^{*}}$ if and only if there exist $v_{1}\in {\cal V}_{1}$ with $\phi(v_{1}) = \phi(u)+1$ or  $v_{2}\in {\cal V}_{2}$ with $\phi(v_{2}) = \phi(u)-\omega$ or $v_{1}'\in {\cal V}_{1}$ with $\phi(v_{1}') = \phi(u)-1-\omega$.

For any full rank matrix $\Am\in \CC^{m\times n}$ with $m>n$, we let  $\Pm^{\perp}_{\Am}\in \CC^{m\times (m-n)}$ be  a basis for the nullspace of  $\Am^{\rm H}$, such that $(\Pm^{\perp}_{\Am})^{\rm H}\Am = 0$.


\subsection{$M$ is even}
Let $d_{v} = \frac{M}{2}$, for all $v \in \cal V$ and 
 consider the following beamforming choices.

\begin{enumerate}

\item[(a)] For all $v_{0} \in {\cal V}_{0}$ such that there exist $v_{1}\in {\cal V}_{1}$ and $u\in {\cal V}_{2}$ with $\phi(v_{1})= \phi(v_{0})-1-\omega$ and $\phi(u)= \phi(v_{0})-1$, set 
$$\Vm_{v_{0}} = \Hm_{uv_{0}}^{-1}\Hm_{uv_{1}}\Vm_{v_{1}}/||\Hm_{uv_{0}}^{-1}\Hm_{uv_{1}}\Vm_{v_{1}}||.$$ Otherwise choose $\Vm_{v_{0}}\in \CC^{M\times d_{v}}$ at random.
\item[(b)] For all $v_{1} \in {\cal V}_{1}$ such that there exists $v_{2}\in {\cal V}_{2}$ and $u\in {\cal V}_{0}$ with $\phi(v_{2})= \phi(v_{1})+1$ and $\phi(u)= \phi(v_{1})-1+\omega$, set 
$$\Vm_{v_{1}} = \Hm_{uv_{1}}^{-1}\Hm_{uv_{2}}\Vm_{v_{2}}/||\Hm_{uv_{1}}^{-1}\Hm_{uv_{2}}\Vm_{v_{2}}||.$$ Otherwise choose $\Vm_{v_{1}} \in \CC^{M\times d_{v}}$ at random.

\item[(c)] For all $v_{2} \in {\cal V}_{2}$ such that there exists $v_{1}\in {\cal V}_{1}$ and $u\in {\cal V}_{0}$ with $\phi(v_{1})= \phi(v_{2})+1+\omega$ and $\phi(u)= \phi(v_{2})+\omega$, set 
$$\Vm_{v_{2}} = \Hm_{uv_{2}}^{-1}\Hm_{uv_{1}}\Vm_{v_{1}}/||\Hm_{uv_{2}}^{-1}\Hm_{uv_{1}}\Vm_{v_{1}}||.$$ Otherwise choose $\Vm_{v_{2}} \in \CC^{M\times d_{v}}$ at random.

\item[(d)] For all $u \in \cal V$ such that there exists an edge $[v,u]\in {\cal E}_{\pi^{*}}$ for some $v \in \cal V$, set  $$\Um_{u} =\Pm^{\perp}_{\Hm_{uv}\Vm_{v}}.$$ Otherwise choose $\Um_{u}\in \CC^{M\times d_{v}}$ at random.

\end{enumerate}

\begin{figure}[ht]
                \centering
                \includegraphics[width=.8\columnwidth]{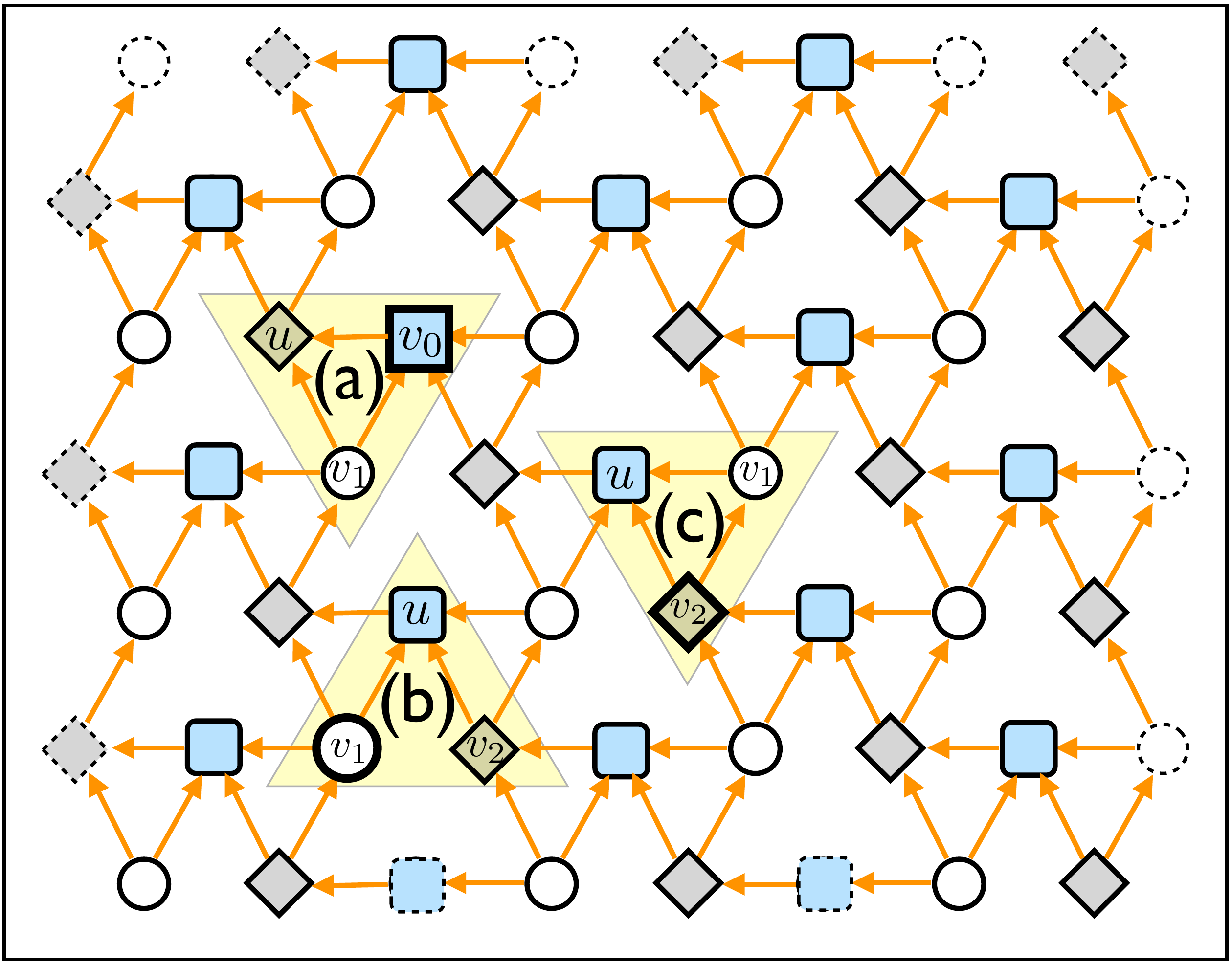}
                \caption{The beamforming choices (a), (b) and (c). In this example, all the nodes with dashed outline have chosen their beamforming vectors at random.  }
\label{cases}
\end{figure}

Notice that  the conditions (\ref{eq:condBBB}) and (\ref{eq:condCCC}) are automatically  satisfied (with probability one) since  $d_{v}=\frac{M}{2},\, \forall v\in \cal V$ and  $\Hm_{uv}$ are chosen at random from a continuous distribution. 
We are going to show next that the conditions (\ref{eq:condAAA}) are also satisfied for all $[v,u] \in \cal E_{\pi^{*}}$. Consider the sets: $${\cal E}^{({k})}_{\pi^{*}} = \{[v,u] \in {\cal E}_{\pi^{*}} : u\in {\cal V}_{k}\}.$$
As we have seen, every receiver associated with $u\in {\cal V}_{1}$ observers at most one interfering transmitter and hence $\Um_{u}^{\rm H}\Hm_{uv}\Vm_{v} = (\Pm^{\perp}_{\Hm_{uv}\Vm_{v}})^{\rm H}\Hm_{uv}\Vm_{v}= 0$ for all $[v,u]\in {\cal E}^{({1})}_{\pi^{*}}$. For every receiver $u\in {\cal V}_{2}$ there exist at most two  interfering transmitters  given by $v_{0}\in {\cal V}_{0}$, $v_{1}\in {\cal V}_{1}$ with $\phi(v_{0}) = \phi(u)+1$   and $\phi(v_{1}) = \phi(u)-\omega$. Notice that in this case $\phi(v_{1})= \phi(v_{0})-1-\omega$ and according to (a), $\Hm_{uv_{0}}\Vm_{v_{0}} = \Hm_{uv_{1}}\Vm_{v_{1}}/||\Hm_{uv_{0}}^{-1}\Hm_{uv_{1}}\Vm_{v_{1}}||\in \mbox{span}(\Hm_{uv_{1}}\Vm_{v_{1}})$. Hence we can also argue that $\Um_{u}^{\rm H}\Hm_{uv}\Vm_{v} = 0$ for all $[v,u]\in {\cal E}^{({2})}_{\pi^{*}}$.
Now consider the set ${\cal E}^{({2})}_{\pi^{*}}$. In a similar fashion, we can see that according to  the beamforming choices (b) and (c),  all interference observed by receivers $u\in {\cal V}_{0}$ aligns in $M/2$ dimensions. That is for every $u\in {\cal V}_{0}$ that observes interference from the transmitters 
$v_{1}\in {\cal V}_{1}$, $v_{2}\in {\cal V}_{2}$ and $v_{1}'\in {\cal V}_{1}$ with $\phi(v_{1}) = \phi(u)+1$ , $\phi(v_{2}) = \phi(u)-\omega$ and $\phi(v_{1}') = \phi(u)-1-\omega$ we have that $\mbox{span}(\Hm_{uv_{1}}\Vm_{v_{1}})=\mbox{span}(\Hm_{uv_{1}'}\Vm_{v_{1}'})= \mbox{span}(\Hm_{uv_{2}}\Vm_{v_{2}})$ and hence $\Um_{u}^{\rm H}\Hm_{uv}\Vm_{v} = 0$ for all $[v,u]\in {\cal E}^{({3})}_{\pi^{*}}$ as well.
Since by definition ${\cal E}^{({1})}_{\pi^{*}}\cup{\cal E}^{({2})}_{\pi^{*}}\cup{\cal E}^{({3})}_{\pi^{*}}= {\cal E}_{\pi^{*}}$ we conclude that the conditions (\ref{eq:condAAA}) are satisfied for all $[v,u] \in \cal E_{\pi^{*}}$.

\subsection{$M$ is odd}

Let $\tilde d_{v} = \frac{M-1}{2},\forall v\in \cal V$ and consider the beamforming matrices $\tilde \Um_{v}\in \CC^{M\times \frac{M+1}{2}}$ and  $\tilde \Vm_{v}\in \CC^{M\times \frac{M-1}{2}}$ given by (a), (b), (c) and (d). Following the same arguments as before we can see that if we use the above beamforming subspaces for transmission, every receiver $u\in \cal V$ will observe interference aligned in $\frac{M-1}{2}$ dimensions and hence we could directly achieve $\frac{1}{|{\cal V}|} \sum_{v\in \cal V}\tilde d_{v} =\frac{M-1}{2}$. 

Notice however that in this case, any receiver 
that zero-forces $\frac{M-1}{2}$ out of $M$ dimensions  can in principle support one extra dimension for transmission since $M-\frac{M-1}{2}=\tilde d_{v} +1$. Furthermore, any receiver that uses  only $\tilde d_{v}=\frac{M-1}{2}$ dimensions for desired symbols can zero-force the remaining $\frac{M+1}{2}$ dimensions and can hence tolerate one additional interfering stream from its neighbors.

Let ${\cal V}_{*}\subseteq\cal V$ be a set of nodes such that  the following two conditions are satisfied:
\begin{align}
&(u,v)\notin{\cal E}, \forall u,v \in \cal  {\cal V}_{*} \label{a1}\\
&\big|\{v \in {\cal V}_{*}: [v,u]\in  {\cal E}_{\pi^{*}} \}\big|\leq 1, \forall u\notin {\cal V}_{*}.
\label{a2}\end{align} 

The first condition requires that $\cal V_{*}$ is an independent set in $\cal G(V,E)$ and the second one  states that for every $u\notin {\cal V}_{*}$ there is at most one $v \in {\cal V}_{*}$ such that 
$[v,u]\in  {\cal E}_{\pi^{*}}$.
Consider the following beamforming choices given in terms of $\tilde\Vm_{v}$ and $\tilde\Um_{v}$:
\begin{itemize}
\item For all $v\in {\cal V}_{*}$ set $\Vm_{v} = [\tilde \Vm_{v}, \vv_{v}]/||[\tilde \Vm_{v}, \vv_{v}]||$, for some $\vv_{v}\notin \mbox{span}(\tilde \Vm_{v})$ and let  $\Um_{v}^{\rm H} = \tilde \Um_{v}^{\rm H}$.

\item For all $u\notin {\cal V}_{*}$ set $\Vm_{u} = \tilde \Vm_{u}$. If there exists $v\in \cal V_{*}$ such that $[v,u]\in {\cal E}_{\pi^{*}}$ set  and $\Um_{u}^{\rm H} = \Pm^{\perp}_{\tilde\Um_{u}^{\rm H}\Hm_{uv}\vv_{v}}\tilde\Um_{u}^{\rm H}$. Otherwise set  $\Um_{u}^{\rm H}= \tilde\Um_{u}^{\rm H}$.

\end{itemize}

We have  that $d_{v}=\tilde d_{v}+1$ for all $v\in \cal V_{*}$ and $d_{v}=\tilde d_{v}$ for all $v\notin \cal V_{*}$. We are going to show next that with the above beamforming choices 
the interference alignment conditions (\ref{eq:condAAA}) and (\ref{eq:condBBB})
are satisfied and hence the  average (per sector) degrees of freedom
\begin{equation}\frac{1}{|{\cal V}|} \sum_{v\in \cal V} d_{v} =\frac{M-1}{2} + \frac{|{\cal V_{*}}|}{|{\cal V}|}\end{equation}
are achievable. Then we are going to show that it is always possible to find a set $\cal V_{*}\subseteq \cal V$ that satisfies the properties (\ref{a1}) and (\ref{a2}) with $|{\cal V_{*}}| \geq \frac{|{\cal V}|}{3}$ and hence show that $\frac{1}{|{\cal V}|} \sum_{v\in \cal V} d_{v} \geq \frac{M}{2}-\frac{1}{6}$ as required by (\ref{eq:condCCC}).

First notice that the conditions (\ref{eq:condBBB}) are automatically satisfied (with probability one) since all the channel matrices $\Hm_{uv}$  have been chosen at random from a continuous distribution. 
In order to show that the zero-forcing conditions (\ref{eq:condAAA}) are also satisfied, consider the sets 
$$\overline{{\cal V}}_{*}^{(0)} \triangleq \{u\in {\cal V}\backslash {\cal V}_{*}:\big|\{v \in {\cal V}_{*}: [v,u]\in  {\cal E}_{\pi^{*}} \}\big|= 0 \},$$ 
$$\overline{{\cal V}}_{*}^{(1)} \triangleq \{u\in {\cal V}\backslash {\cal V}_{*}:\big|\{v \in {\cal V}_{*}: [v,u]\in  {\cal E}_{\pi^{*}} \}\big|= 1 \}.$$
Notice that according to (\ref{a2}), the sets ${\cal V}_{*}$, $\overline{{\cal V}}_{*}^{(0)}$ and $\overline{{\cal V}}_{*}^{(1)}$ form a partition of $\cal V$.
According to (\ref{a1}), every receiver associated with $u\in \cal V_{*}$ will only observe interference from transmitters $v\notin \cal V_{*}$ and since $\Um_{u} = \tilde \Um_{u}$ for all $u\in \cal V_{*}$ and  $\Vm_{v} = \tilde \Vm_{v}$ for all $v\notin \cal V_{*}$, we have that 
$\Um_{u}^{\rm H}\Hm_{uv}\Vm_{v}=\tilde\Um_{u}^{\rm H}\Hm_{uv}\tilde\Vm_{v} = 0$, $\forall \{[u,v]\in {\cal E}_{\pi^{*}}:u\in {\cal V}_{*}\}$. Similarly, $\Um_{u}^{\rm H}\Hm_{uv}\Vm_{v}=\tilde\Um_{u}^{\rm H}\Hm_{uv}\tilde\Vm_{v} = 0$, $\forall \{[u,v]\in {\cal E}_{\pi^{*}}:u\in \overline{{\cal V}}_{*}^{(0)}\}$. Now, consider the receivers associated with the nodes $u\in \overline{{\cal V}}_{*}^{(1)}$ and let  $v_{0}\in \{v \in {\cal V}_{*}: [v,u]\in  {\cal E}_{\pi^{*}}\}$. By construction, we have that $\mbox{span}(\Hm_{uv}\Vm_{v})\subseteq \mbox{span}(\Hm_{uv_{0}}\Vm_{v_{0}})=\mbox{span}(\Hm_{uv_{0}}\tilde\Vm_{v_{0}})\cup\mbox{span}(\Hm_{uv_{0}}\vv_{v_{0}})$ for all $v \in {\cal V}$ such that $[v,u]\in  {\cal E}_{\pi^{*}}$ and  
since $\Um_{u}^{\rm H}\Hm_{uv_{0}}\Vm_{v_{0}} = [\tilde\Um_{u}^{\rm H}\Hm_{uv_{0}}\tilde\Vm_{v_{0}}, \Um_{u}^{\rm H}\Hm_{uv_{0}}\vv_{v_{0}}]= [\Pm^{\perp}_{\tilde\Um_{u}^{\rm H}\Hm_{uv_{0}}\vv_{v_{0}}}\tilde\Um_{u}^{\rm H}\Hm_{uv_{0}}\tilde\Vm_{v_{0}}, \Pm^{\perp}_{\tilde\Um_{u}^{\rm H}\Hm_{uv_{0}}\vv_{v_{0}}}\tilde\Um_{u}^{\rm H}\Hm_{uv_{0}}\vv_{v_{0}}] = 0$, we get $\Um_{u}^{\rm H}\Hm_{uv}\Vm_{v}= 0$, $\forall \{[u,v]\in {\cal E}_{\pi^{*}}:u\in \overline{{\cal V}}_{*}^{(1)}\}$. 
Putting everything together, since the sets ${\cal V}_{*}$, $\overline{{\cal V}}_{*}^{(0)}$ and $\overline{{\cal V}}_{*}^{(1)}$ form a partition of $\cal V$, we can argue that $\Um_{u}^{\rm H}\Hm_{uv}\Vm_{v}= 0$  for all  $[u,v]\in {\cal E}_{\pi^{*}}$ and hence show that the conditions (\ref{eq:condAAA}) are satisfied. 

For the last part of the proof consider the sets ${\cal V}_{k}$ given in (\ref{Vk}) and recall that they form a partition of $\cal V$. First notice that since $|{\cal V}| = |{\cal V}_{0}|+|{\cal V}_{1}|+|{\cal V}_{2}|$, there must exist some $k^{*}\in \{0,1,2\}$ such that  
$|{\cal V}_{k^{*}}|\geq \frac{|{\cal V}|}{3}$. {By symmetry, we have that $|{\cal V}_{1}|=|{\cal V}_{2}|$ since  $f(z)=1 \Leftrightarrow f(-z)=2,\forall z\in \ZZ(\omega)$, and hence we can assume without loss of generality that $k^{*}$ is either $0$ or $2$.} 

Furthermore the set ${\cal V}_{k^{*}}$ will satisfy (\ref{a1}) since  for every $(u,v)\in \cal E$ we can write $\phi(u)=\phi(v) + \delta$, for some $\delta\in\{\pm 1,\pm\omega,\pm(\omega+1)\}$ and hence $f(\phi(u)) \neq f(\phi(v)),\forall(u,v)\in \cal E$. 

Finally, recall that 1) for every $u\in{\cal V}_{1}$ there exist  $[v,u]\in {\cal E}_{\pi^{*}}$ if and only if  there exist $v\in {\cal V}_{2}$ with $\phi(v) = \phi(u)-1-\omega$,  2) for every $u\in{\cal V}_{2}$ there exist $[v,u]\in {\cal E}_{\pi^{*}}$ if and only if  there exist $v_{0}\in {\cal V}_{0}$ with $\phi(v_{0}) = \phi(u)+1$ or  $v_{1}\in {\cal V}_{1}$ with $\phi(v_{1}) = \phi(u)-\omega$ and 3) for every $u\in{\cal V}_{0}$ there exist $[v,u]\in {\cal E}_{\pi^{*}}$ if and only if there exist $v_{1}\in {\cal V}_{1}$ with $\phi(v_{1}) = \phi(u)+1$ or  $v_{2}\in {\cal V}_{2}$ with $\phi(v_{2}) = \phi(u)-\omega$ or $v_{1}'\in {\cal V}_{1}$ with $\phi(v_{1}') = \phi(u)-1-\omega$. Therefore, 
$\big|\{v \in {\cal V}_{k^{*}}: [v,u]\in  {\cal E}_{\pi^{*}} \}\big|\leq 1, \forall u\in {\cal V}$ and hence the set ${\cal V}_{k^{*}}$, $k^{*}\in\{0,2\}$ will also satisfy (\ref{a2}).

In order to complete the proof we  set ${\cal V}_{*} = {\cal V}_{k^{*}}$ and obtain  $\frac{1}{|{\cal V}|} \sum_{v\in \cal V} d_{v} = \frac{M-1}{2} + \frac{|{{\cal V}_{k^{*}}}|}{|{\cal V}|} \geq \frac{M}{2}-\frac{1}{6}$ as required.

%


\section{Proof of Lemma \ref{lem:2}}  \label{proof-lemma2}

Recall that the set of vertices $\cal V$ of the graph $\cal G(V,E)$ is defined in terms of a parameter $r\geq 1$ as 
\begin{equation*}
{\cal V} = \left\{ \phi^{-1}(z) : z\in \mathbb{Z}(\omega)\cap{\cal B}_{r}\right\},
\end{equation*}
where
$${\cal B}_{r} \triangleq \left\{z\in \mathbb{C}:|{\rm Re}(z)|\leq r , |{\rm Im}(z)|\leq \frac{\sqrt{3}r}{2}\right\}.$$
Since the size of the graph depends on the choice of  $r$, we will consider here the sequence of graphs ${\cal G}^{(r)}({\cal V}^{(r)},{\cal E}^{(r)})$,  indexed by $r \in \ZZ^{+}$ and provide the corresponding results in terms of the above parameter.

\subsection{The cardinality of ${\cal V}^{(r)}$}

By definition $|{\cal V}^{(r)}| =  |\mathbb{Z}(\omega)\cap{\cal B}_{r}|$. Hence, our goal is to count the number of Eisenstein integers that belong to the set $ \mathbb{Z}(\omega)\cap{\cal B}_{r}$. 
We define the sets
\begin{equation}
L(k) = \left\{z\in\mathbb{Z}(\omega)\cap{\cal B}_{r} : |{\rm Im}(z)| = \frac{\sqrt{3}k}{2}\right\}\end{equation}
for all $k\in\{-r,...,0,...,r\}$. Notice that the sets $L(k)$ contain all the Eisenstein integers that  lie on the same horizontal line on the complex plane and hence $\bigcup_{k}L(k)$  forms a partition of the set $\mathbb{Z}(\omega)\cap{\cal B}_{r}$. Therefore,
$$|\mathbb{Z}(\omega)\cap{\cal B}_{r}| = \sum_{k=-r}^{r}|L(k)|.$$
A key observation coming  from the triangular structure of $\ZZ(\omega)$ is that 
\begin{equation*}
|L(k)| = \begin{cases}|L(0)|, \;\;k \mbox{ is even}\\|L(1)|, \;\;k \mbox{ is odd.} \end{cases}
\end{equation*}
Hence, we can write
$$|\mathbb{Z}(\omega)\cap{\cal B}_{r}| = K_{\rm even}^{[r]}|L(0)| + K_{\rm odd}^{[r]}|L(1)|.$$
where $K_{\rm even}^{[r]},K_{\rm odd}^{[r]}$ denote the cardinalities of even and odd integers in $\{-r,...,0,...,r\}$. 

If $r$ is even then $K_{\rm even}^{[r]} = r+1$ and $K_{\rm even}^{[r]} = r$, whereas if $r$ is odd then $K_{\rm even}^{[r]} = r$ and $K_{\rm even}^{[r]} = r+1$. Since $|L(0)|=2r+1$ and $|L(1)|=2r$ for all $r\geq 1$ we have that
\begin{equation}
|{\cal V}^{(r)}| = |\mathbb{Z}(\omega)\cap{\cal B}_{r}| = \begin{cases}4r^{2}+3r+1, \;\;r \mbox{ is even}\\4r^{2}+3r, \;\;\;\;\;\;\;\;\,r \mbox{ is odd.} \end{cases}
\label{cardV}
\end{equation}

\subsection{The cardinality of ${\cal T}^{(r)}$}

We will associate here each ordered vertex triplet $[u,v,w]\in {\cal T}^{(r)}$ with its leading vertex $u\in {\cal V}^{(r)}$ in an one-to-one fashion and define the set
$${\cal A}^{(r)} =\{\phi^{-1}(u)\in \mathbb{Z}(\omega)\cap{\cal B}_{r} : [u,v,w]\in {\cal T}^{(r)}\}.$$
In order to determine the cardinality of ${\cal T}^{(r)}$,  it suffices to count the number of Eisenstein integers that belong to the set ${\cal A}^{(r)}$, since $|{\cal T}^{(r)}| =|{\cal A}^{(r)}|$ by definition.
Consider the sets

$$S(k)= {\cal A}^{(r)}\cap L(k)$$
for all $k\in\{-r,...,0,...,r-1\}$. The set $S(k)$ contains all the Eisenstein integers that are associated with a leading vertex of a triangle and lie on the same horizontal line $L(k)$.  
As before, $\bigcup_{k}S(k)$
forms a partition of ${\cal A}^{(r)}$ and hence
$$|{\cal A}^{(r)}| = \sum_{k=-r}^{r-1} |S(k)|.$$
Intuitively $|S(k)|$ counts the number of triangles that are formed between the lines $L(k)$ and $L(k+1)$ and hence the total number of triangles can be obtained by adding all $|S(k)|$ up to $k=r-1$.

It is not hard to verify that 
\begin{equation*}
|S(k)| = \begin{cases}|S(0)|, \;\;k \mbox{ is even}\\|S(1)|, \;\;k \mbox{ is odd.} \end{cases}
\end{equation*}
for all $r\geq 2$ and hence 
$$|{\cal A}^{(r)}|  = K_{\rm even}^{[r]}|S(0)| + K_{\rm odd}^{[r]}|S(1)|$$
where $K_{\rm even}^{[r]},K_{\rm odd}^{[r]}$ denote the cardinalities of even and odd integers in $\{-r,...,0,...,r-1\}$. We have that $K_{\rm even}^{[r]}=K_{\rm odd}^{[r]}=r$ and hence
$$|{\cal A}^{(r)}|  = r\left(|S(0)| + |S(1)|\right).$$
It follows from the definitions of ${\cal T}^{(r)}$, ${\cal A}^{(r)}$ and $S(0)$ that  
\begin{equation*}
z\in S(0) \Leftrightarrow \begin{cases} z\in L(0), f(z)\neq 0\; \mbox{and}\\ z+\omega,z+\omega+1 \in L(1).\end{cases}
\end{equation*}
We can argue hence that the set $S(0)$ hence contains the integers $a \in \{-r+1,...r-1\}$ for which $f(a)=[a]\mbox{mod}\,3 \neq 0$.

Similarly,
\begin{equation*}
z\in S(1) \Leftrightarrow \begin{cases} z\in L(1), f(z)\neq 0\; \mbox{and}\\ z+\omega,z+\omega+1 \in L(2).\end{cases}
\end{equation*}
And hence the set $S(1)$ contains the Eisenstein integers $z=a + \omega$, for all $a\in \{-r+1,r\}$ that satisfy $f(z)=[a+1]\mbox{mod}\,3 \neq 0$.

It follows that 
\begin{align} 
&|S(0)| = 2\left(r-1-\left \lfloor\frac{r-1}{3}\right \rfloor\right), \;\mbox{and}\nonumber\\
&|S(1)| = 2r-1-\left \lfloor\frac{r+1}{3}\right \rfloor-\left \lfloor\frac{r-2}{3}\right \rfloor.\nonumber
\end{align}
We can hence conclude that 
\begin{align}
|{\cal T}^{(r)}| =r  \Bigg(4r - 
\left \lfloor\frac{r-2}{3}\right\rfloor -
2\left \lfloor\frac{r-1}{3}\right\rfloor -
\left \lfloor\frac{r+1}{3}\right\rfloor -3\Bigg). 
\label{cardT}
\end{align}

\subsection{The cardinality of ${\cal V}_{\rm ex}^{(r)}$}

We will upper bound $|{\cal V}_{\rm ex}^{(r)}|$ as follows.
From Lemma~\ref{lem1} we have that
\begin{equation*}
\sum_{u\in {\cal V}_{\rm ex}^{(r)}}\hspace{-0.1in}\left(1-\frac{n_{u}}{2}\right)x_{u}\hspace{-0.05in} = \hspace{-0.1in}\sum_{v\in{\cal V}^{(r)}}x_{v} - \hspace{-0.2in} \sum_{[i,j,k]\in {\cal T}^{(r)}}\hspace{-0.1in}\left(\frac{x_{i}+x_{j}+x_{k}}{2}\right),
\end{equation*}
for any $\{x_{v}: v\in {\cal V}^{(r)}\}$. Setting $x_{v}=1,\,\forall v\in {\cal V}^{(r)}$, we obtain

\begin{equation*}
|{\cal V}_{\rm ex}^{(r)}|-\sum_{u\in {\cal V}_{\rm ex}}\frac{n_{u}}{2} = |{\cal V}^{(r)}| - \frac{3}{2}|{\cal T}^{(r)}|.
\end{equation*}
Since $n_{v}\leq 1$ for all $v\in {\cal V}_{\rm ex}^{(r)}$ we have that 
$$\sum_{u\in {\cal V}_{\rm ex}}\frac{n_{u}}{2} \leq \frac{1}{2}|{\cal V}_{\rm ex}^{(r)}|, $$
and hence 
\begin{equation}
|{\cal V}_{\rm ex}^{(r)}| \leq 2|{\cal V}^{(r)}| - 3|{\cal T}^{(r)}|.
\label{cardVexTMP}
\end{equation}

We can lower bound $|{\cal T}^{(r)}|$ from (\ref{cardT}) as 
\begin{eqnarray}
|{\cal T}^{(r)}| &\geq& r  \left(4r - 
\frac{r-2}{3} -
 \frac{2r-2}{3}-
\frac{r+1}{3} -3\right) \nonumber \\
&=& r  \left(4r - 
\frac{4r-3}{3} -3\right) \nonumber \\
&=& \frac{8}{3}r^{2} - 2r.
\label{cardTbound}
\end{eqnarray}

From (\ref{cardV}), (\ref{cardVexTMP}) and (\ref{cardTbound}), we have that 
\begin{eqnarray}
|{\cal V}_{\rm ex}^{(r)}| &\leq&  2\left(4r^{2} + 3r + 1\right) - 3\left(\frac{8r^{2}}{3} - 2r\right)\nonumber\\&=& 8r^{2} + 6r + 2 - 8r^{2} + 6r \nonumber\\
&=&12r + 2.
\label{cardVex}
\end{eqnarray}

\subsection{Proof of $|{\cal T}^{(r)}| = \frac{2}{3}|{\cal V}^{(r)}|$}

First we will upper bound $|{\cal T}^{(r)}|$ using the  inequality

$$\left \lfloor\frac{x}{3}\right\rfloor \geq \frac{x-2}{3}, \; \forall x\in \RR.$$
Applying the above inequality in (\ref{cardT}) we obtain 
\begin{eqnarray}
|{\cal T}^{(r)}| &\leq& r  \left(4r - \frac{4r-11}{3} -3 \right)=\frac{8r^{2}+2r}{3}.\nonumber
\end{eqnarray}
From (\ref{cardV}) we can see that 
$$2|{\cal V}^{(r)}|\geq 8r^{2}+2r,$$
and hence it follows that
$|{\cal T}^{(r)}| \leq  \frac{2|{\cal V}^{(r)}|}{3}$, which completes the proof.

\subsection{Proof of $|{\cal V}_{\rm ex}^{(r)}| = {\cal O}\left(\sqrt{|{\cal V}^{(r)}|}\right)$}

From (\ref{cardV}) it follows that $\sqrt{|{\cal V}^{(r)}|} \geq 2r$ for all $r\geq 1$.
From (\ref{cardVex}) we have that 
$$|{\cal V}_{\rm ex}^{(r)}| \leq 12r +2 \leq 7\sqrt{|{\cal V}^{(r)}|},\; \forall r\geq 1,$$
and hence $|{\cal V}_{\rm ex}^{(r)}| = {\cal O}\left(\sqrt{|{\cal V}^{(r)}|}\right)$.

\section{Proof of Theorem 2}\label{proof:thm2}

Consider the directed interference graph $\cal G_{\pi}(\cal V,\cal E_{\pi})$ and assume that there exist full rank matrices $\Um_{v},\Vm_{v}\in \CC^{M\times d_{v}}$, $v\in \cal V$ such that  
\begin{align}
&\Um_{v}^{\rm H}\Hm_{vu}\Vm_{u} = 0,\; \forall [u,v] \in \cal E_{\pi}\label{eq:condAA}\\
&\mbox{rank}\left(\Um_{v}^{\rm H}\Hm_{vv}\Vm_{v}\right) = d_{v},\; \forall v\in \cal V,\label{eq:condBB}
\end{align}
where $\Hm_{uv}\in \CC^{M\times M}$ have been chosen at random from a continuous distribution. 
Then, $\{d_{v}: v\in \cal V\}$ must satisfy 
\begin{align}
&d_{v}\in \{0,1,...,M\}\,,\;\forall v\in\cal V\label{eq:cond11}\\
&d_{v} + d_{u} \leq M \,,\;\forall (u,v)\in\cal E.\label{eq:cond22}
\end{align}

The first condition follows trivially from the fact that $\mbox{rank}\left(\Um_{v}^{\rm H}\Hm_{vv}\Vm_{v}\right)\leq M$.  The second condition follows from (\ref{eq:condAA}): The columns of the matrices 
$\Hm_{vu}\Vm_{u}$ and $\Um_{v}$ span two orthogonal subspaces of $\CC^{M}$. Since $\mbox{rank}\left(\Hm_{vu}\Vm_{u}\right) = d_{u}$ and $\mbox{rank}\left(\Um_{v}\right) = d_{v}$, the columns of the composite matrix $ [\Hm_{vu}\Vm_{u},\Um_{v}]$ span a $(d_{u}+d_{v})$-dimensional subspace of $\CC^{M}$ and hence $d_{u}+d_{v}\leq M$ , for all $[u,v]\in \cal E_{\pi}$. Now, for any $(u,v)\in \cal E$ and any $\pi$, either $[u,v]$ or $[v,u]$ must be in  $\cal E_{\pi}$. Since, $d_{u}+d_{v}\leq M$ is symmetric in $(d_{u},d_{v})$, we can write the above inequalities for all $(u,v)\in \cal E$.

The above conditions are {\it necessary} for all $\{d_{v}: v\in \cal V\}$ that can be  achieved in $\cal G_{\pi}(\cal V,\cal E_{\pi})$ for any decoding order $\pi$ and  any linear beamforming scheme that does not use symbol extensions. We will use these conditions here to upper bound the average (per sector) achievable degrees of freedom in our framework. 

From Lemma~\ref{lem1} we can write 
\begin{equation}
\frac{1}{|{\cal V}|} \sum_{v\in \cal V}d_{v} = \frac{1}{2|{\cal V}|}\sum_{[i,j,k]\in \cal T}s_{(d_{i},d_{j},d_{k})} + \frac{D_{\rm ex}}{|\cal V|}\, ,
\end{equation}  
where 
\begin{equation}
s_{(d_{i},d_{j},d_{k})}\triangleq d_{i}+d_{j}+d_{k}
\end{equation} and $$D_{\rm ex} = \sum_{v\in {\cal V}_{\rm ex}}\left(1-\frac{n_{v}}{2}\right)d_{v}.$$

Since $d_{v}\leq M$ and $n_{v}\geq 0$ for all $v\in \cal V$, we have that 
$$D_{\rm ex} \leq M|\cal V_{\rm ex}|,$$
and hence 
\begin{equation}
\frac{1}{|{\cal V}|} \sum_{v\in \cal V}d_{v} \leq \frac{1}{2|{\cal V}|}\sum_{[i,j,k]\in \cal T}s_{(d_{i},d_{j},d_{k})}  + \frac{2|\cal V_{\rm ex}|}{|\cal V|},
\end{equation} 
for all $\{d_{v}: v\in \cal V\}$ that satisfy (\ref{eq:cond11}) and (\ref{eq:cond22}).
Letting

\begin{equation}
s^{*} \triangleq \max_{(d_{i},d_{j},d_{k})\in {\cal T_{\rm D}}} s_{(d_{i},d_{j},d_{k})}
\end{equation}
where

\begin{equation*}
{\cal T_{\rm D}} =\left\{(d_{i},d_{j},d_{k}): 
\begin{aligned} &d_{i},d_{j},d_{k} \in \{0,1,...,M\}\\ &d_{i} + d_{j}\leq M \\
&d_{j} + d_{k}\leq M \\
&d_{k} + d_{i}\leq M  \end{aligned}\right\},
\end{equation*}
we can conclude that any degrees of freedom 
$\{d_{v}: v\in \cal V\}$ that are achievable in $\cal G_{\pi}(V,E_{\pi})$ for any $\pi$ must satisfy

\begin{equation}
\frac{1}{|{\cal V}|} \sum_{v\in \cal V}d_{v} \leq \frac{|\cal T|}{2|{\cal V}|}s^{*} + \frac{M|\cal V_{\rm ex}|}{|\cal V|}.
\label{sbound}
\end{equation} 

\begin{lemma} We have that
\begin{equation}s^{*} = \begin{cases}\frac{3M}{2}, \;\;\;\;\;\;\;\;\;\;\;\mbox{$M$ is even}\\ \frac{3M-1}{2} , \;\;\;\;\;\;\;\;\mbox{$M$ is odd}\end{cases}
\end{equation}
\label{lem:sstar}
\end{lemma}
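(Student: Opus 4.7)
The plan is to establish both the upper bound and a matching achievability for each parity of $M$ by a short linear-programming argument on three variables.

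For the upper bound, I would add the three pairwise constraints $d_i+d_j\le M$, $d_j+d_k\le M$, and $d_k+d_i\le M$ to obtain $2(d_i+d_j+d_k)\le 3M$, hence $s_{(d_i,d_j,d_k)}\le 3M/2$. When $M$ is even this is already the bound $s^*\le 3M/2$. When $M$ is odd, since $d_i,d_j,d_k$ are nonnegative integers, $s_{(d_i,d_j,d_k)}$ is an integer and the bound sharpens to $s^*\le \lfloor 3M/2\rfloor=(3M-1)/2$.

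For achievability I would exhibit feasible triples attaining these upper bounds. If $M$ is even, the symmetric choice $(d_i,d_j,d_k)=(M/2,M/2,M/2)$ satisfies $d_v\in\{0,\dots,M\}$ and each pairwise sum equals $M$, so it lies in $\mathcal{T}_D$ and yields sum $3M/2$. If $M$ is odd, I would take the asymmetric choice $(d_i,d_j,d_k)=\bigl(\tfrac{M+1}{2},\tfrac{M-1}{2},\tfrac{M-1}{2}\bigr)$. Then $d_i+d_j=d_k+d_i=M$ and $d_j+d_k=M-1\le M$, so the triple lies in $\mathcal{T}_D$, and its components sum to $(3M-1)/2$. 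Combining with the upper bound gives the claimed value of $s^*$ in each case.

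The only mild subtlety is the odd case: the LP relaxation allows the non-integer point $(M/2,M/2,M/2)$ with objective $3M/2$, so one must invoke the integrality constraint coming from $d_v\in\{0,1,\dots,M\}$ to tighten the bound and then verify that a nearby integer triple remains feasible. This is straightforward and I do not anticipate any real obstacle.
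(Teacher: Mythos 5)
Your proof is correct and follows essentially the same route as the paper's: summing the three pairwise constraints to get $s_{(d_i,d_j,d_k)}\le 3M/2$, invoking integrality to sharpen this to $(3M-1)/2$ for odd $M$, and exhibiting the feasible triples $(M/2,M/2,M/2)$ and $\bigl(\tfrac{M+1}{2},\tfrac{M-1}{2},\tfrac{M-1}{2}\bigr)$ to match the bounds. The only cosmetic difference is that the paper phrases the odd-case tightening as a contradiction rather than a floor, which is the same argument.
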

\begin{proof}
By definition, every $(d_{i},d_{j},d_{k})\in {\cal T_{\rm D}}$ must satisfy $d_{i} + d_{j}\leq M$,
$d_{j} + d_{k}\leq M$, $d_{k} + d_{i}\leq M$. Adding these inequalities together we obtain
\begin{equation}
s_{(d_{i},d_{j},d_{k})} \leq \frac{3M}{2}, \;\forall(d_{i},d_{j},d_{k})\in {\cal T_{\rm D}}. 
\label{3m2}
\end{equation}

When $M$ is even, the tuple $\left(\frac{M}{2},\frac{M}{2},\frac{M}{2}\right) \in {\cal T_{\rm D}}$ achieves $s_{\left(\frac{M}{2},\frac{M}{2},\frac{M}{2}\right)} =\mathsmaller{{3M}/{2}}$ and from (\ref{3m2}) we can conclude  that $s^{*} = s_{\left(\frac{M}{2},\frac{M}{2},\frac{M}{2}\right)} = \mathsmaller{{3M}/{2}}$. 

When $M$ is odd,  consider  $\left(\frac{M-1}{2},\frac{M-1}{2},\frac{M+1}{2}\right) \in {\cal T_{\rm D}}$  with $s_{\left(\frac{M-1}{2},\frac{M-1}{2},\frac{M+1}{2}\right)} =\mathsmaller{({3M-1})/{2}}$. Assume that $s^{*}>\mathsmaller{({3M-1})/{2}}$. Since $s_{(d_{i},d_{j},d_{k})}$ is integer, there must exist a tuple  ($d_{i},d_{j},d_{k})\in {\cal T_{\rm D}}$ with  $s_{(d_{i},d_{j},d_{k})}\geq \frac{3M-1}{2}+1 = \frac{3M+1}{2}$. This is a contradiction due to (\ref{3m2}) and hence $s^{*} = s_{\left(\frac{M-1}{2},\frac{M-1}{2},\frac{M+1}{2}\right)} = \mathsmaller{({3M-1})/{2}}$. \end{proof}
\vspace{0.1in}

Combining the results of Lemma 2 and 3 with the bound in (\ref{sbound}) we arrive at 
\begin{equation}
\frac{1}{|{\cal V}|} \sum_{v\in \cal V}d_{v} \leq  \begin{cases} \frac{M}{2} + {\cal O}\left( \scriptstyle{1}/{{\sqrt{|{\cal V}|}}}\right), \;\;\;\;\;\;\;M\mbox{ is even}  \\
\frac{M}{2}-\frac{1}{6} + {\cal O}\left( \scriptstyle{1}/{{\sqrt{|{\cal V}|}}}\right),\;M\mbox{ is odd,}\end{cases}
\end{equation}
which completes the proof of Theorem 2.

\section{Proof of Theorem \ref{thm:sectors}}\label{proof:thm3}

%
%

Consider the set $\cal V$ as defined as in Section \ref{igraph} and let 
\begin{equation}
{\cal D_{\rm in}}\triangleq \hspace{-0.2in}\underset{\substack{ a,b\in\mathbb{Z}: \\ \left[a+b\right]{\rm mod}\,3 = 0}}{\bigcup}\hspace{-0.2in}{\Delta}(a+b\omega),
\end{equation}
and
\begin{equation}
{\cal D_{\rm out}}\triangleq \hspace{-0.2in}\underset{\substack{ a,b\in\mathbb{Z}: \\ \left[a+b\right]{\rm mod}\,3 \neq 0}}{\bigcup}\hspace{-0.2in}{\Delta}(a+b\omega)
\end{equation}
where
$${\Delta}(z) = \{(z,z+\omega),\,(z,z+\omega+1),\,(z+\omega,z+\omega+1) \}.$$

The set of out-of-cell edges can be defined as 
\begin{equation}
{\cal E^{\rm out}} = \left\{(u,v) : u,v\in{\cal V} \mbox{ and} \left(\phi(u),\phi(v)\right)\in {\cal D_{\rm out}}\right\},
\end{equation} 
and the set of intra-cell edges as
\begin{equation}
{\cal E}^{\rm in} = \left\{(u,v) : u,v\in{\cal V} \mbox{ and} \left(\phi(u),\phi(v)\right)\in {\cal D_{\rm in}}\right\}.
\end{equation}

The interference graph in this case  is given  by $\hat{\cal G}\big(\cal V,\hat{\cal E}\big)$, where $\hat{\cal E} = {\cal E^{\rm out}} \cup {\cal E^{\rm in}}$. 
We further define the sets $$C(z)=\{z,z+1,z-\omega,z-\omega-1\}\cap {\cal B}_{r}$$  for all $z\in \ZZ(\omega)$ such that  $f(z)=1$. Notice that if $|C(z)|=4$,  the set $C(z)$ corresponds to the labels of four vertices $a,b,c,d \in \cal V$ with 
\begin{align*}
&\phi(a)= z,\\
&\phi(b)= z+1,\\
&\phi(c)= z-\omega,\\
&\phi(d)= z-\omega-1.
\end{align*}
Moreover, the vertices $\{a,b,c\}$ are connected in $\hat{\cal G}\big(\cal V,\hat{\cal E}\big)$ only with edges in ${\cal E}^{\rm in}$ and hence correspond to  sectors of the same cell (cf. Fig.~\ref{intracell}). 

First we are going to show that with the beamforming choices given in Appendix \ref{proofthm1}, the above cell $\{a,b,c\}$ can jointly decode its corresponding messages according to the decoding order $\pi^{*}$. Notice that at the time when receiver $a$ wants to decode, all the sectors that correspond to vertices $v\in {\cal V} :v\prec_{\pi^{*}}a$ have already decoded their messages and no longer cause interference to their neighbors. Hence, the received signal for a sector associated with $u\in \cal V$
can be written as
\begin{align}\yv_{u} = \Hm_{uu}\Vm_{u}\sv_{u}+ \underset{\substack{ {(u, v)\in \hat{\cal E}:} \\ a\prec_{\pi^{*}}v}}\sum\Hm_{uv}\Vm_{v}\sv_{v} + \zv_{u}.\end{align}

The interfering transmitters for receiver $a$ are given by $\{  v : (a, v)\in \hat{\cal E}, a\prec_{\pi^{*}}v\}=\{b,c,d\}$. In order to identify the interfering transmitters for receivers $b$ and $c$ notice that for any $u\in\{b,c\}$ the set $\{  v : (u, v)\in \hat{\cal E}, a\prec_{\pi^{*}}v\}$ can be written as
\begin{align}
\{  v : (u, v)\in {\cal E}^{\rm in}\}\bigcup\{  v : (u, v)\in \hat{\cal E}^{\rm out}, a\prec_{\pi^{*}}v\}.\nonumber
\end{align}

For receiver $b$ the set $\{  v : (b, v)\in \hat{\cal E}^{\rm out}, a\prec_{\pi^{*}}v\} = \{  v : [v, b]\in \hat{\cal E}^{\rm out}_{\pi^{*}}\}$ and for receiver $c$ we have that $\{  v : (c, v)\in \hat{\cal E}^{\rm out}, a\prec_{\pi^{*}}v\} = \{d\}\cup\{  v : [v, c]\in \hat{\cal E}^{\rm out}_{\pi^{*}}\}$. Putting everything together, the interfering transmitters for receivers $b$ and $c$
are given by
\begin{align}
\{  v : (b, v)\in \hat{\cal E}, &a\prec_{\pi^{*}}v\}=\{a,c\}\cup\{  v : [v, b]\in \hat{\cal E}^{\rm out}_{\pi^{*}}\},\nonumber
\end{align}
and
\begin{align}
\{  v : (c, v)\in \hat{\cal E}, &a\prec_{\pi^{*}}v\}=\{a,c,d\}\cup\{  v : [v, c]\in \hat{\cal E}^{\rm out}_{\pi^{*}}\}.\nonumber
\end{align}

A key observation is that according to (\ref{eq:condAAA}) and the achievability scheme of Appendix \ref{proofthm1} all the interference from the transmitters in $\{  v : [v, b]\in \hat{\cal E}^{\rm out}_{\pi^{*}}\}$ and $\{  v : [v, c]\in \hat{\cal E}^{\rm out}_{\pi^{*}}\}$ can be zero-forced at receivers $b$ and $c$ by projecting along $\Um^{\rm H}_{b}$ and $\Um^{\rm H}_{c}$ respectively. The corresponding observations are given by
\begin{align}
\Um^{\rm H}_{b}\yv_{b} = \Um^{\rm H}_{b}\Hm_{bb}\Vm_{b}\sv_{b} + \sum_{v\in\{a,c\}}\Um^{\rm H}_{b}\Hm_{bv}\Vm_{v}\sv_{v} + \zv_{b},
\end{align} 
\begin{align}
\Um^{\rm H}_{c}\yv_{c} = \Um^{\rm H}_{c}\Hm_{cc}\Vm_{c}\sv_{c} + \sum_{v\in\{a,c,d\}}\Um^{\rm H}_{c}\Hm_{\cv}\Vm_{v}\sv_{v} + \zv_{c}.
\end{align} 

We are going to show next that it is possible for the cell $\{a,b,c\}$ to jointly decode the desired messages $\sv_{a}$, $\sv_{b}$ and $\sv_{c}$ from 
the received signals $\yv_{a}$, $\Um^{\rm H}_{b}\yv_{b}$ and $\Um^{\rm H}_{c}\yv_{c}$.
Let 
\begin{align}
\tilde\sv = \begin{bmatrix} \sv_{a}\\ \sv_{b}\\ \sv_{c} \\ \sv_{d}\end{bmatrix},\; 
\tilde\yv = \begin{bmatrix} \yv_{a}\\ \Um^{\rm H}_{b}\yv_{b}\\ \Um^{\rm H}_{c}\yv_{c} \end{bmatrix},
\; \tilde\zv = \begin{bmatrix} \zv_{a}\\ \Um^{\rm H}_{b}\zv_{b}\\ \Um^{\rm H}_{c}\zv_{c} \end{bmatrix},\nonumber
\end{align}
and 
\begin{align}
 {\bf \tilde H} =\begin{bmatrix} \Hm_{aa}\Vm_{a} & \Hm_{ab}\Vm_{b} & \Hm_{ac}\Vm_{c} & \Hm_{ad}\Vm_{d} \\ \Um_{b}^{\rm H}\Hm_{ba}\Vm_{a} & \Um_{b}^{\rm H}\Hm_{bb}\Vm_{b} & \Um_{b}^{\rm H}\Hm_{bc}\Vm_{c} & {\bf 0}_{d_{b}\times d_{d}} 
\\ 
 \Um_{c}^{\rm H}\Hm_{ca}\Vm_{a} & \uv_{c}^{\rm H}\Hm_{cb}\Vm_{b} & \Um_{c}^{\rm H}\Hm_{cc}\Vm_{c} & \Um_{c}^{\rm H}\Hm_{cd}\Vm_{d}\end{bmatrix},\nonumber
\end{align}
such that the available observations in the cell $\{a,b,c\}$ can be written in vector form as

\begin{equation}
{ \bf \tilde y} = {\bf \tilde H} \tilde\sv + { \bf \tilde z}.
\label{vecobs2}
\end{equation} 

\begin{lemma}
If the channel gains $\Hm_{uv}\in \CC^{M\times M}$, $(u,v)\in \hat{\cal E}$  are chosen independently at random from a Gaussian distribution, the matrix ${\bf \tilde H}$ has full column rank with probability one.
\end{lemma}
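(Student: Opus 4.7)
The plan is to treat $\tilde{\Hm}$ as a matrix whose entries are polynomials in the channel gains $\{\Hm_{ij}\}$ and to reduce the claim to exhibiting a single channel realization for which $\tilde{\Hm}$ has full column rank. If $k$ denotes the number of columns of $\tilde{\Hm}$, then every $k\times k$ minor of $\tilde{\Hm}$ is a polynomial in the entries of the $\Hm_{ij}$; producing one realization under which at least one such minor is nonzero shows that this polynomial is not identically zero, so that its zero set has Lebesgue measure zero. Since the $\Hm_{ij}$ are drawn from a continuous Gaussian distribution, this will yield that $\tilde{\Hm}$ has full column rank with probability one.

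The first step is to separate the channels into two groups. Direct inspection of the recursive beamformer construction in Appendix \ref{proofthm1} (items (a)--(d), for both the even and odd $M$ cases) shows that every $\Vm_v$ and every $\Um_u$ is built exclusively from the \emph{out-of-cell} channels $\Hm_{ij}$ with $(i,j)\in{\cal E}^{\rm out}$, possibly augmented by independent random directions in the odd-$M$ case. Consequently, the direct channels $\Hm_{aa},\Hm_{bb},\Hm_{cc}$ and the intra-cell cross channels $\{\Hm_{uv}:u,v\in\{a,b,c\},\,u\neq v\}$ can be perturbed without altering any beamformer, and conditional on all out-of-cell channels (equivalently, on all beamformers) these nine matrices remain jointly Gaussian and mutually independent.

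Next, I would specialize to the realization in which every intra-cell cross channel is set to zero, which collapses $\tilde{\Hm}$ to
\[
\tilde{\Hm}_0 = \begin{bmatrix}
\Hm_{aa}\Vm_a & {\bf 0} & {\bf 0} & \Hm_{ad}\Vm_d\\
{\bf 0} & \Um_b^{\herm}\Hm_{bb}\Vm_b & {\bf 0} & {\bf 0}\\
{\bf 0} & {\bf 0} & \Um_c^{\herm}\Hm_{cc}\Vm_c & \Um_c^{\herm}\Hm_{cd}\Vm_d
\end{bmatrix},
\]
and verify full column rank by back-substitution. Partitioning any null-space vector into sub-blocks $\xv_a,\xv_b,\xv_c,\xv_d$, the middle row block gives $\Um_b^{\herm}\Hm_{bb}\Vm_b\xv_b={\bf 0}$ and forces $\xv_b={\bf 0}$ via the direct-link rank condition (\ref{eq:condBBB}); the top row block becomes $\Hm_{aa}\Vm_a\xv_a+\Hm_{ad}\Vm_d\xv_d={\bf 0}$, which viewed as an $M\times(d_a+d_d)$ system with $d_a+d_d\leq M$ (the feasibility bound (\ref{eq:cond22}) for adjacent sectors) is full column rank for generic $\Hm_{aa}$ and therefore yields $\xv_a=\xv_d={\bf 0}$; the bottom block then collapses to $\Um_c^{\herm}\Hm_{cc}\Vm_c\xv_c={\bf 0}$, and (\ref{eq:condBBB}) once more forces $\xv_c={\bf 0}$.

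The delicate step is the generic full column rank of the composite block $[\Hm_{aa}\Vm_a,\Hm_{ad}\Vm_d]$: since $\Hm_{aa}$ is independent of the beamformers and of the fixed right-hand block $\Hm_{ad}\Vm_d$, the columns of $\Hm_{aa}\Vm_a$ span a uniformly random $d_a$-dimensional subspace of $\CC^M$, which is generically complementary to the fixed $d_d$-dimensional subspace $\Hm_{ad}\Vm_d$ whenever $d_a+d_d\leq M$. This exhibits one realization of the free channels under which some top-dimensional minor of $\tilde{\Hm}$ is nonzero, so the genericity argument above completes the proof. The odd-$M$ case requires only cosmetic changes: the augmented beamformers of Appendix~\ref{proofthm1} still depend only on out-of-cell channels and on independent random directions, and the feasibility bound $d_u+d_v\leq M$ persists, so the same back-substitution applies unchanged.
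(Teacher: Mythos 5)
Your proof is correct, but it reaches the genericity conclusion by a genuinely different route than the paper. The paper's own argument permutes the rows of ${\bf \tilde H}$ so that the zero block ${\bf 0}_{d_b\times d_d}$ sits in the upper-right corner, extracts a square $(d_a+d_b+d_c+d_d)$-dimensional submatrix with nonzero diagonal, and invokes Lemma~\ref{lem:hadamard}: the diagonal monomial survives in the Leibniz expansion, so the determinant is a nonvanishing polynomial and hence nonzero almost surely. You instead condition on the out-of-cell channels (fixing all beamformers), note that the direct and intra-cell cross channels remain free and independent, and exhibit a witness point --- intra-cell cross channels set to zero --- at which the collapsed matrix is shown to have full column rank by block back-substitution, using the direct-link rank condition (\ref{eq:condBBB}), the generic complementarity of $\mbox{span}(\Hm_{aa}\Vm_a)$ and $\mbox{span}(\Hm_{ad}\Vm_d)$, and the adjacency bound $d_a+d_d\le M$. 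Both arguments are sound; yours has the advantage of sidestepping a mild informality in the paper, namely that the scalar entries of a projected block $\Um_u^{\rm H}\Hm_{ij}\Vm_v$ are jointly Gaussian but not mutually independent as Lemma~\ref{lem:hadamard} nominally requires, whereas you only need independence of the whole free channel matrices from the beamformers. What the paper's route buys in exchange is reusability: the same Hadamard-product lemma is applied verbatim to the compound matrices of Section~\ref{robust:achievability} and Appendix~\ref{comp-ach-proof}, where arbitrary subsets of links are absent --- precisely the kind of degenerate topology you construct by hand as a witness. One small imprecision to fix in your write-up: the entries of ${\bf \tilde H}$ are not polynomials in \emph{all} channel gains, since the beamformers of Appendix~\ref{proofthm1} involve inverses, normalizations and nullspace computations of the out-of-cell channels; your conditioning step repairs this, but the opening sentence should be phrased conditionally from the start.
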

\begin{proof} The matrix $\tilde \Hm$ has $M+d_{b}+d_{c}$ rows and $d_{a}+d_{b}+d_{c}+d_{d}$ columns. Since $d_{a}+d_{d}\leq M$ for all $M$ (odd or even), we have  to show that $Pr\big[\mbox{rank}(\tilde \Hm) = d_{a}+d_{b}+d_{c}+d_{d}\big] =1$. Recall that the beamforming matrices $\Vm_{v}$, $\Um_{u}$ do not depend on the channel realizations $\Hm_{ij}$ used in the definition of the above matrix. Hence, assuming that  all channel gains are chosen independently at random  from a Gaussian distribution, we can argue that the non-zero entries in $\tilde \Hm$ (given by the projections $\Um^{\rm H}_{u}\Hm_{ij}\Vm_{v}$) are also independent. Let $\Fm=\Pm\tilde\Hm$ be the matrix obtained by rearranging the rows of $\tilde \Hm$ such that all zero entries are in the upper-right corner and consider $\tilde \Fm$ to be the square sub-matrix defined by the $d_{a}+d_{b}+d_{c}+d_{d}$ rows of $\Fm$. We can see that the diagonal elements of $\tilde \Fm$ are going to be non-zero, and hence, we can show by Lemma~\ref{lem:hadamard} that the rank of $\tilde \Fm$ is full with probability one. Therefore, $\tilde\Hm$ will always have $d_{a}+d_{b}+d_{c}+d_{d}$ linearly independent rows and  $Pr\big[\mbox{rank}(\tilde \Hm) = d_{a}+d_{b}+d_{c}+d_{d}\big] =1$.

\end{proof}

In view of the above lemma, the vector observation in (\ref{vecobs2}) can be used to decode the symbols in $\tilde \sv$ and hence the cell $\{a,b,c\}$ is able to recover the desired messages $\sv_{a}$, $\sv_{b}$ and $\sv_{c}$. 

Applying the above procedure successively, according to the decoding order $\pi^{*}$, we can argue that that all the cells $\{a,b,c\}\subseteq \cal V$  whose labels  correspond to a set $C(z)$ with $|C(z)|=4$,   can decode their desired messages using the beamforming choices of Appendix \ref{proofthm1}. 

In order to conclude the proof it remains to consider all the degenerate cases for cells that lie on the boundary of $\hat{\cal G}\big(\cal V,\hat{\cal E}\big)$ and correspond to $C(z)$ with $|C(z)|\leq 3$.  When $C(z)=1$, there is only out-of-cell interference and hence the scheme works as described in Appendix \ref{proofthm1}. This is also the case when $|C(z)| = 2$ and $\phi(d)=z-1-\omega\in C(z)$. If $|C(z)| = 2$ and $\phi(d)=z-1-\omega \notin C(z)$ the two sectors $u,v$ of the given cell can zero-force all out-of-cell interference and use their vector observation 
\begin{equation*}
\begin{bmatrix}
\Um^{\rm H}_{u}\yv_{u}\\ \Um^{\rm H}_{v}\yv_{v}\end{bmatrix}
=
\begin{bmatrix} 
\Um^{\rm H}_{v}\Hm_{vv}\Vm_{v} & \Um_{v}^{\rm H}\Hm_{vu}\Vm_{u}\\ 
\Um_{u}^{\rm H}\Hm_{uv}\Vm_{v} & \Um_{u}^{\rm H}\Hm_{uu}\Vm_{u} \end{bmatrix}
\begin{bmatrix}
\sv_{u}\\ \sv_{v}\end{bmatrix} + \tilde \zv
\end{equation*}
to jointly decode the desired messages $\sv_{u}$ and $\sv_{v}$. 
In a similar fashion, all the cells $\{a,b,c\}$ that correspond to a set $C(z)$ with $|C(z)| = 3$ and $\phi(d)=z-1-\omega \notin C(z)$ can decode their messages using the projected observations $\Um^{\rm H}_{a}\yv_{a}$, $\Um^{\rm H}_{b}\yv_{b}$ and  $\Um^{\rm H}_{c}\yv_{c}$. Finally, the only possible cell configuration with $|C(z)| = 3$ and $\phi(d)=z-1-\omega \in C(z)$, is $\{a,c\}$ with $\phi(a)=z$ and $\phi(c)=z-\omega$. The corresponding vector observation is given by
%
\begin{equation*}
\begin{bmatrix}
\yv_{a}\\ \Um^{\rm H}_{c}\yv_{c}\end{bmatrix}
=
{\begin{bmatrix} 
\Hm_{aa}\Vm_{a} & \Hm_{ac}\Vm_{c}& \Hm_{ad}\Vm_{d} \\ 
\Um_{c}^{\rm H}\Hm_{ca}\Vm_{a} & \Um_{c}^{\rm H}\Hm_{cc}\Vm_{c}&  \Um_{c}^{\rm H}\Hm_{cd}\Vm_{d} \end{bmatrix}}
\begin{bmatrix}
\sv_{a}\\ \sv_{c}\\ \sv_{d}\end{bmatrix} + \tilde \zv.
\end{equation*}

Notice that the above channel matrix has $M+d_{c}$ rows and $d_{a}+d_{c}+d_{d}$ columns. According to the beamforming choices of Appendix \ref{proofthm1},  we have that $d_{a}+ d_{d} \leq M$  for all $M$ and hence we can argue as before that the above matrix has full column rank with probability one. Therefore,
the receivers $a$ and $c$ can jointly decode their desired messages in this case as well.

\section{Proof of Theorem~\ref{thm-comp}}  \label{comp-ach-proof}

The proof can be obtained as a straightforward generalization of the proof described in Section \ref{robust:achievability} using the beamforming design of Theorem \ref{thm:sectors} given in Appendix \ref{proof:thm3}. Applying Lemma~\ref{lem:hadamard},  we can show that primary  and secondary sectors are always able to decode their messages from the available observations
\begin{equation}
\begin{bmatrix}
\yv_{a}\\
\Um_{b}^{\rm H}\yv_{b}\\
\Um_{c}^{\rm H}\yv_{c}
\end{bmatrix}=
\begin{bmatrix} \Hm_{aa}\Vm_{a} & \overline\Hm_{ab}\Vm_{b} & \overline\Hm_{ac}\Vm_{c} & \overline\Hm_{ad}\Vm_{d} \\ 
 \Um_{b}^{\rm H}\overline\Hm_{ba}\Vm_{a} & \Um_{b}^{\rm H}\Hm_{bb}\Vm_{b} & \Um_{b}^{\rm H}\overline\Hm_{bc}\Vm_{c} &  {\bf 0}_{d_{b}\times d_{d}}  \\ 
 \Um_{c}^{\rm H}\overline\Hm_{ca}\Vm_{a} & \Um_{c}^{\rm H}\overline\Hm_{cb}\Vm_{b} & \Um_{c}^{\rm H}\Hm_{cc}\Vm_{c} & \Um_{c}^{\rm H}\overline\Hm_{cd}\Vm_{d}\end{bmatrix}
 \begin{bmatrix}
s_{a}\\
s_{b}\\
s_{c}\\
s_{d}
\end{bmatrix} 
+\tilde\zv,
\end{equation}
and
\begin{equation}
\begin{bmatrix}
\Um_{b}^{\rm H}\yv_{b}\\
\Um_{c}^{\rm H}\yv_{c}
\end{bmatrix}={
\begin{bmatrix}  
\Um_{b}^{\rm H}\Hm_{bb}\Vm_{b} & \Um_{b}^{\rm H}\overline\Hm_{bc}\Vm_{c}  \\ 
 \Um_{c}^{\rm H}\overline\Hm_{cb}\Vm_{b} & \Um_{c}^{\rm H}\Hm_{cc}\Vm_{c}\end{bmatrix}}
 \begin{bmatrix}
s_{b}\\
s_{c}
\end{bmatrix} 
+\tilde\zv,
\end{equation}
for all channel-state  configurations given in Section \ref{robust:achievability}. We omit the details here for brevity.

%

\section{Proof of Theorem~\ref{thm:converse2}}  \label{thm:converse2-proof}

Here, we are going to follow an approach similar to the one in Section \ref{sec:converse1} and show that for any decoding 
order $\pi$, any linear scheme for the system  $\left\{\hat{\cal G}\big({\cal V},\hat{\cal E}\big|{\cal A}\big):{\cal A} \in \{0,1\}^{2|\hat{\cal E}|}\right\}$ 
achieves compound DoFs $d_{\rm C}$ upper bounded by $d_{\rm C}^{*} + {\cal O}\left( \scriptstyle{1}/{{\sqrt{|{\cal V}|}}}\right)$, 
as stated in Theorem~\ref{thm:converse2}.

First we will upper bound $d_{\rm C}$ by conditioning on a specific channel-state configuration 
${\cal A}^{*}$ shown in Fig.~\ref{fig:triangles2}. We have that 
\begin{equation}
d_{\rm C} = \min_{ {\cal A} \in \{0,1\}^{2|\hat{\cal E}|}} d_{\hat{\cal G}}({\cal A})\leq d_{\hat{\cal G}}({\cal A}^{*}), 
\end{equation}
where ${\cal A}^{*}$ is given by  setting $\alpha_{ij} = 1$ for all  edges $[i,j]\in \hat{\cal E}$ that belong  
to the triangles, 
{\begin{equation*}
\hat{\cal T} \hspace{-0.05in}=\hspace{-0.05in} \{[u,v,w]: [\phi(u),\phi(v),\phi(w)]\in \hat{\cal P}, u,v,w\in {\cal V}\},
\end{equation*}
with $\hat{ \cal P} = \{[z,z+\omega,z+\omega+1]: z\in\Lambda_0 - 1 \},$}
and $\alpha_{ij}=0$ otherwise. 


\begin{figure}[ht]
                \centering
                \includegraphics[width=.5\columnwidth]{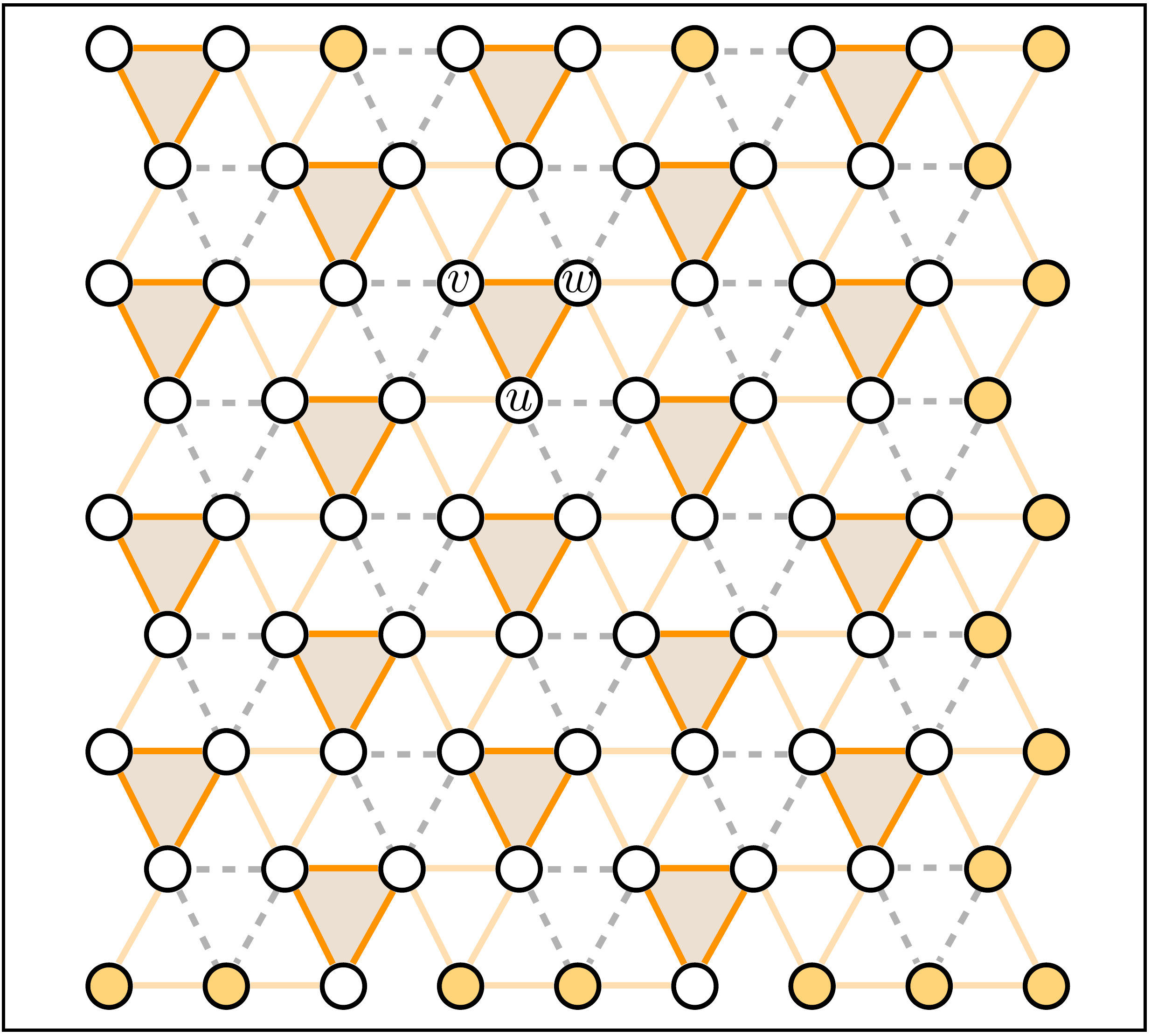}
                \caption{The set of triangles $[u,v,w]\in \hat{\cal T}$ for  $\hat{\cal G}\big({\cal V},\hat{\cal E}\big|{\cal A}^{*}\big)$. All the circle 
                nodes belong to $\hat{\cal V}_{\rm in}$ and participate in one triangle ($n_{v}=1$). The set $\hat{\cal V}_{\rm ex}$ contains the  
                colored nodes on the  boundary for which $n_{v}=0$.}
\label{fig:triangles2}
\end{figure}

Notice, that the set $\hat{\cal T}$ has been chosen such that for all $[u,v,w] \in \hat{\cal T}$, the sectors associated with the nodes $u$, $v$, $w$ belong to different cells and hence cannot be jointly decoded. 
Arguing as in Lemma~\ref{lem:sstar} in Appendix~\ref{proof:thm2}, we can show that for any decoding order, the total degrees of freedom achievable in each triangle cannot be more than  
\begin{equation} s^{*}= \begin{cases}\frac{3M}{2}, \;\;\;\;\;\;\;\;\;\;\;\mbox{$M$ is even}\\ \frac{3M-1}{2} , \;\;\;\;\;\;\;\;\mbox{$M$ is odd.}\end{cases}
\end{equation}
Further, observe that apart from some vertices on the external boundary of the graph ($v\in \hat{\cal V}_{\rm ex}$), all other nodes ($v\in\hat{\cal V}_{\rm in}$) participate in exactly one triangle in $\hat{\cal T}$, and therefore we have that $\sum_{v\in \cal V}d_{v} = \sum_{[i,j,k]\in \hat{\cal T}} (d_{u}+d_{v}+d_{w}) + \sum_{v\in \hat{\cal V}_{\rm ex}}d_{v}$.
The average (per sector) degrees of freedom achievable in $\hat{\cal G}\big({\cal V},\hat{\cal E}\big|{\cal A}^{*}\big)$ can be bounded as 
\begin{align}
d_{\hat{\cal G}}({\cal A}^{*}) &= \frac{1}{|\cal V|}\sum_{v\in \cal V}d_{v}\\
& = \frac{1}{|\cal V|}\sum_{[i,j,k]\in \hat{\cal T}} (d_{u}+d_{v}+d_{w})  + \frac{1}{|\cal V|}\sum_{v\in \hat{\cal V}_{\rm ex}}d_{v}\\
&\leq \frac{|\hat{\cal T}|}{|{\cal V}|}s^{*} + \frac{|\hat{\cal V}_{\rm ex}|}{|{\cal V}|}.
\end{align}
To conclude the proof we can argue as, in Lemma~\ref{lem:2}, that ${|\hat{\cal T}|}\leq \frac{1}{3}{|{\cal V}|}$ and $|{\cal V_{\rm ex}}| = {\cal O}\left(\sqrt{|\cal V|}\right)$, and show that $d_{\rm C} = \frac{1}{3}s^{*}+{\cal O}\left( \scriptstyle{1}/{{\sqrt{|{\cal V}|}}}\right)$.

\bibliographystyle{ieeetr}
\bibliography{referencesIT}

\end{document}